\definecolor{kugray5}{RGB}{224,224,224}
\xpatchcmd{\algorithmic}{\itemsep\z@}{\itemsep=.5ex plus2pt}{}{}
\titleformat*{\subsection}{\large\bfseries}
\titleformat*{\subsubsection}{\normalsize\bfseries}
\titleformat*{\paragraph}{\normalsize\bfseries}
\titleformat*{\subparagraph}{\normalsize\bfseries}
\newcommand{\E}{\mathbb{E}}
\newcommand{\indep}{\stackrel{\text{ind}}{\sim}}
\DeclareMathOperator{\tr}{tr}
\newcommand{\mut}{\widetilde{\mu}}
\newcommand{\muc}{\bar{\mu}}
\newcommand{\Sigt}{\widetilde{\Sigma}}
\newcommand{\Psib}{{S}}
\newcommand{\Gamt}{\widetilde{\Gamma}}
\newcommand{\Pc}{\mathcal{P}}
\newcommand{\onev}{\bm{1}}
\newcommand{\bern}{\text{Ber}}
\newcommand{\poi}{\text{Poi}}
\newcommand{\At}{\widetilde{A}}
\newcommand{\Zt}{\widetilde{Z}}
\newcommand{\Psit}{\widetilde{\Psi}}
\DeclareMathOperator{\softmax}{softmax}
\newcommand{\tauc}{\bar{\tau}}
\newcommand{\xc}{\bar{x}}
\newcommand\mbisbm{\texttt{mbiSBM}\xspace}
\newcommand\bisc{\texttt{biSC}\xspace}
\newcommand\scp{\texttt{SCP}\xspace}
\newcommand\simrnd{\texttt{$\sim$rnd}\xspace}
\newcommand\rnd{\texttt{rnd}\xspace}
\newcommand{\mnorm}[1]{|\!|\!| #1 |\!|\!|}
\newcommand{\cities}{{\scshape  Cities}\xspace}
\newcommand{\topart}{{\scshape  TopArticles}\xspace}
\newcommand{\ER}{Erd\"{o}s--R\'{e}nyi\xspace}
\newcommand{\gbin}{g_{\text{ber}}}
\newcommand{\gpoi}{g_{\text{poi}}}
\title{Matched bipartite block model with covariates}
\author{Zahra S. Razaee, Arash A. Amini, Jingyi Jessica Li}
\date{}
\begin{document}
\maketitle
	\begin{abstract}
%		Community detection in bipartite networks is challenging. Motivated by an application from biology, we consider a model which allows for matched communities in the bipartite setting, in addition to node covariates with information about the matching. We derive a simple fast algorithm for fitting the model based on variational inference ideas and show its effectiveness in simulations.
		Community detection or clustering is a fundamental task in the analysis of network data. Many real networks have a bipartite structure which makes community detection challenging. In this paper, we consider a model which allows for matched communities in the bipartite setting, in addition to node covariates with information about the matching. We derive a simple fast algorithm for fitting the model based on variational inference ideas and show its effectiveness on both simulated and real data. A variation of the model to allow for degree-correction is also considered, in addition to a novel approach to fitting such degree-corrected models.
	\end{abstract}

%!TEX root = sbm_bip_arxiv.tex
\section{Introduction}\label{sec:intro}
Network analysis has been a very active area of research with applications to social sciences, biology and marketing, to name a few. 
%are many and varied, including social networks, biological networks, fMRI networks, etc. 
A fundamental problem in network data analysis is community detection, or clustering: Given a collection of nodes and a similarity matrix among them, interpreted as the adjacency matrix of a (weighted) network, one wants to partition the
nodes  into clusters, or communities, of high similarity. For undirected networks, a popular model for community-structured networks is the stochastic block model (SBM) ~\cite{Holland1983} and its variants~\cite{Karrer2011,Gopalan2013}, which have been extensively investigated in recent years both in terms of theoretical properties and efficient fitting algorithms.  See, for instance~\cite{Bickel2009, Decelle2011, Rohe2011,Mossel2012,Zhao2012,Amini2013,Qin2013,Mossel2013,Massoulie2014,Amini2014,Gao2015,Jing2015,Hajek2016, Abbe2016,Gao2016} for a sample of the work. On the other hand, a natural structure is often present in many  real networks, that of being bipartite, where nodes are divided into two sets, or \emph{sides}, and only connections between nodes of different sides are allowed. Examples include networks of actors and movies, scientific papers and their authors, shoppers and products, and transcription factors and their binding sites.  Block-modeling with the explicit aim of taking into account the bipartite nature of a network has received comparatively less attention. Interesting new modeling possibilities emerge in the bipartite case, chief among them being the issue of matching between the communities of the two sides. 

% where nodes are divided into two sets, and only connections between nodes of different sets are allowed. 
%
%Network analysis has become a very active area of research for analyzing complex data. The applications are many and varied, including social networks, biological networks, fMRI networks, etc. A natural structure present in many instances is that of a bipartite network,  where nodes are divided into two sets, and only connections between nodes of different sets are allowed. Examples of bipartite networks include networks of actors and movies, scientific papers and their authors, shoppers and products and transcription factors and their binding sites.  

The problem of community detection in bipartite networks is closely related to that of co-clustering, also known as bi-clustering, which goes back at least to~\cite{Hartingan1972}. Co-clustering refers to simultaneous clustering of the rows and columns of a matrix, the \emph{ bi-adjacency matrix} of a bipartite graph. It has been extensively used in biological applications~\cite{Cheng2000, Madeira2010} and text mining~\cite{Dhillon2001, Dhillon2003,Bisson2008}. Recently, \cite{Choi2014,Flynn2012} studied likelihood-based co-clustering. In \cite{Rohe2016},  a spectral co-clustering algorithm has been proposed for directed networks and discussed how it can be applied to bipartite setting. Often, the co-clustering formulation ignores the issue of matching of the clusters, in the sense that in general any row cluster can be in relation to any column cluster.

Another common approach is to reduce community detection in the bipartite setting to two separate instances of usual clustering of (undirected) unipartite networks, by forming \emph{one-mode projections} of the network onto the two sides~\cite{Zhou2007}.
%
%One fundamental problem in network data analysis is community detection, or clustering. Given a collection of nodes and a similarity matrix among them, interpreted as the adjacency matrix of a (weighted) graph, one wants to partition the
%nodes  into clusters (or communities) of high similarity. However, community detection in bipartite networks is tricky since the nodes in the same set (or side) are not connected. One common approach is to first form one-mode projections of the network into the two sets and then apply standard community detection algorithms to each mode (or side) individually.
%%
  Despite a moderate reduction in the dimension (having to deal with two smaller networks), the projection approach suffers from information loss and identifiability issues~\cite{Zhou2007}. Projection can also turn a structured bipartite network into unstructured unipartite ones or vice versa~\cite{Larremore2014}. Another major difficulty is establishing a link between the communities on the two sides. One can come up with ad-hoc association measures between communities of the two sides, e.g., by counting links between each pair. This, however, leads to another bipartite graph on the communities, leading to the difficulty of interpretation. In effect, the problem transfers from community detection on the individual nodes, to that on the newly-discovered communities, or supernodes. 
  
%  One can try to match communities of both sides by running a matching algorithm on these supernodes, however, the overall approach is somewhat ad-hoc and dependent on arbitrary choices (i.e., how to form the community-level bipartite graph). It is worth noting that there is a deeper more unsettling problem with this approach. Recalling the reduction from the bipartite network on original nodes to a bipartite network on discovered communities, achieved by  projection followed by community detection, one is tempted to repeat the process, i.e., apply projection to this new bipartite network and do community detection again. This process can be continued and there is no clear stopping point.
Block-modeling in the bipartite setting has recently gained more attention. In \cite{Wyse2014}, a method was proposed to infer both community memberships as well as the number of communities in a bipartite network using a blockmodel and an algorithm similar to the  iterated conditional modes~\cite{Besag1986}. A bipartite stochastic block model (BiSBM)\cite{Larremore2014}, built on the work of~\cite{Karrer2011}, has been proposed to infer bipartite community structure in both degree-corrected and uncorrected regimes by maximizing a profile likelihood over all partitions. In both cases, the issue of matching of the communities on the two sides is not the main concern.

Motivated by the matching problem, in this paper, we consider the problem of \emph{matched community detection} in a bipartite network. In many practical examples, one either expects a one-to-one correspondence between the communities of the two sides, or it is reasonable to postulate such structure, due to ease of interpretation (Section~\ref{sec:mbiSBM}). The problem of ``finding communities in a bipartite network that are in one-to-one correspondence between the two sides'' is what we refer to as matched community detection. We will propose a generative model for such networks where there is a hidden matched community structure. This avoids the need for post-hoc matching of the communities: the matching is built into the model and inferred simultaneously with the communities in the process of fitting the model. Our model is a natural extension of the well-known stochastic block model (SBM) and is discussed in details in Section~\ref{sec:mbiSBM}. We also discuss an extension to allow for degree-correction (Section~\ref{sec:dc:extension}), providing a matched version of degree-corrected block model (DC-SBM).

In another direction, many networks come with metadata, often in the form of node features, or covariates.
%
% or sometimes edge. 
%However, most studies ignore the covariate information and only focus on the network. 
%
The potential for improving quality of the clusters by incorporating node covariates has been explored in recent work, in the context of unipartite networks~\cite{Binkiewicz,Zhang2015,Yan2016, Newman2016}. Bipartite setting adds another challenge to modeling node covariates, in particular, how to jointly model the covariates from the two sides, considering that one often has covariates of different dimensions on each side. (The extreme case is when only once side has node covariates.)
%
% The challenge in jointly analyzing the network and covariate information in bipartite setting is that node covariates for the two modes are (usually) of different dimensions.
%%and there might be a statistical linkage between and within the covariates of the two modes. 
%
%
We extend our proposed model to allow for the presence of node covariates that are aware of the matching between communities. In other words, covariates corresponding to nodes in matched communities are statistically linked. The linkage can be tunned using a general cross-covariance matrix, allowing for varying degrees of covariate influence on the community detection problem (Section~\ref{sec:mbiSBM}). 

To fit the model, we derive an algorithm based on variational inference, also known as variational Bayes~\cite{Jordan1999,Blei2016} ideas (Section~\ref{sec:model:fitting}). We derive both the degree-corrected and uncorrected versions of algorithm within the same unified framework, namely, sequential block-coordinate ascent on the variational likelihood. This in particular leads to a novel approach to fitting degree-corrected likelihoods using methods of continuous optimization (as opposed to profiling out the degree-correction parameters and optimizing over the space of discrete labels.). As part of the initialization of the algorithm, we revisit a bipartite spectral clustering algorithm, first proposed in~\cite{Dhillon2001}, and identify it as an effective algorithm for matched bipartite clustering.  We show the effectiveness of our approach on simulated (Section~\ref{sec:simulation}) and real data (Section~\ref{sec:realdata}), namely, page-user networks collected from Wikipedia (Section~\ref{sec:realdata}).

 To summarize, our contributions in this paper are the following:
 \begin{itemize}
 	\item[(i)] Identify the matching problem in bipartite community detection more clearly and give it the prominent role, by showing that it is possible to consider matched communities from the start in the modeling process. Bringing attention to matched bipartite clustering (or community detection) as a well-defined problem also allows us to identify an earlier spectral algorithm, namely that of~\cite{Dhillon2001}, originally proposed in the context of topic modeling, as effectively solving the matched version of bipartite clustering. At present, we are unaware of any other algorithm that attempts to solve this problem directly.
 	
 	\item[(ii)] Propose a natural bipartite extension of the SBM and DC-SBM, which we refer to as {\it{matched bipartite stochastic block model}} (\mbisbm), has a latent structure of matched communities and allows for node covariates that are potentially informative  about the matching (see Section~\ref{sec:mbiSBM}). Some of the challenges involved in joint modeling of the node covariates of the two sides are resolved by appealing to hierarchical Bayesian modeling ideas~\cite{Gelman2004}. 
 	
 	\item[(iii)] Show the effectiveness of the variational Bayes approach in fitting the overall \mbisbm model, when combined with good initialization, especially a variant of \bisc algorithm of~\cite{Dhillon2001}. The algorithm is a block-coordinate ascent with a closed-form, fairly cheap iterations, and can be scaled to large networks. 
 \end{itemize}

%\aaa{This could go here or elsewhere.}

% \subsection{Related work}
% The problem of community detection in bipartite network is closely related to co-clustering(a.k.a bi-clustering) which was proposed by \cite{Hartingan1972}. Co-clustering refers to simultaneous clustering of the rows and columns of a matrix (incidence matrix in bipartite setting).  {}

% gained much attention in the social networks and machine
% learning literature, see for example~\cite{Dhillon2001} Dhillon(2001), Doreian et al. (2004), Brusco et al. (2013) Doreian et al. (2013). \cite{Rohe2015} Rohe et al. (2015) discuss how their stochastic co-blockmodel may be extended to a bipartite setting. Wyse and Friel (2012) have proposed a method to infer structure in bipartite networks using the latent blockmodel and exact ICL. Larremore et. al have proposed an stochastic block model (BiSBM) that is built upon the work of Karrer and Newman (2011) to infer bipartite community structure in both degree-corrected and uncorrected regimes.  However, to the best of our knowledge, no one has considered the matched community detection in a bipartite network. Our method solves
% this problem with and without the presence of node covariates. @Amini2013 

\bigskip
\noindent \textbf{Notation.} We write $[K] := \{1,\dots,K\}$ and $\Pc_K := \{ p \in \reals^K_+ :\; \onev^T p = 1 \}$, for the set of probability vectors on $[K]$. Here, $\onev$ is the all-ones vector of dimension $K$. We identify $[K]$ with $\{0,1\}^K \cap \Pc_K$, the set of binary vectors of length $K$ having exactly a single entry equal to one. The identification is via the so-called \emph{one-hot} encoding: $z = k$ as an element of $[K]$ iff $z_{k} = 1$, treating $z$ as element of $\{0,1\}^K \cap \Pc_K$. 
 $x \mapsto N(x;\mu,\Sigma)$ denotes the PDF of a multivariate Gaussian distribution with mean $\mu$ and covariance $\Sigma$. The constant terms in an expression are denoted as ``const.''. We write $\doteq$ for equality up to additive constants. We use $[Z_1;Z_2]$ to denote the vertical concatenation of two matrices $Z_1$ and $Z_2$, having the same number of columns.

% Our proposed model has two assumptions: 1) The number of communities (clusters) in both groups should be equal. 2) There is an implicit matching between clusters of both sides.

%!TEX root = sbm_bip_arxiv.tex
\section{Matched bipartite SBM}\label{sec:mbiSBM}
% \aaa{At some point you should talk about SBM, but not clear where.}
%The stochastic block model (SBM) is a generative model for networks with communities (or blocks). In the most basic SBM, each node is assigned to one of the $K$ communities and the edges are placed independently between node $i$ and node $j$ with probability $p_{ij}$. If $p_{ij} = p$ for all $i$ and $j$, then the result is the famous Erd{\H o}s–R{\'e}nyi model. This case is degenerative and the partition into communities becomes irrelevant. Another special case happens when all the nodes within the same community are connected with the constant probability $p$ while any two nodes in different communities are connected with constant probability $q$. This special case is called the {\it{planted partition model}}. If $p > q$, the model is called {\it{assortative}}, while the case $p < q$ is called {\it{dissortative}}. 

The stochastic block model (SBM) is a generative model for networks with communities (or blocks). In the most basic SBM, sometimes called the {\it{planted partition model}}, each node is assigned to one of the $K$ communities and the edges are placed independently between two nodes $i$ and $j$, with probability $p$ if $i$ and $j$ belong to the same community, and with probability $q$ otherwise. When $p = q$, one recovers the famous Erd{\H o}s–R{\'e}nyi model, where there is no genuine community structure. The interesting cases are the \emph{assortative} model where $p > q$ and the \emph{dissortative} model where $p < q$. Our focus in this paper is mainly on the assortative case, though the results can be easily adapted to the other case.

% \paragraph{The model.} Motivated by the aforementioned biological example, we propose a new generative model based on the stochastic block model which is suited to the bipartite nature of the network and also accommodates node covariates. 
% The bipartite network in our biological example is the orthologus relation between the two species and the node covariates are gene expression over the developmental stages of the species.
%
We start with the ingredients needed to define the matched bipartite SBM (\mbisbm). Assume that we have two groups of nodes $[N_1] = \{1,\dots,N_1\}$ and $[N_2]= \{1,\dots,N_2\}$, representing nodes on the two sides of a bipartite network.  %Compactly, we have the groups $[N_r], r=1,2$. 
We assume that there is a partition $\{C_{rk}\}_{k=1}^K$ of $[N_r]$, for each $r=1,2$. This is our latent  community structure.  In referring to  $C_{rk}$, we will use the terms {\it{community}} and {\it{cluster}} interchangeably. We assume the following implicit (true) \emph{one-to-one matching} between these communities:
\begin{align}\label{eq:C:match}
  C_{1k} \leftrightarrow C_{2k}, \quad k=1,\dots,K.
\end{align}
%That is, $\{C_{rk}: r=1,2\}$ to be matched for all $k=1,\dots,K$.
To each node $i$ in group $r$, we assign a community membership variable $z_{ri}$ showing which community it belongs to:
\begin{align*}
   z_{ri} = k\; &\iff i \in C_{rk}, \quad \forall i \in [N_r], \; r = 1,2.
\end{align*}
Recalling the identification $[K] \cong \{0,1\}^K \cap \Pc^K$, we treat $z_{ri}$ as both an element of $[K]$ and a binary vector of length $K$, hence, with some abuse of notation, $z_{ri} = k$ and $z_{rik} = 1$ are equivalent.
We collect these labels in \emph{membership matrices}  $ Z_r := (z_{ri}: i \in[N_r]) \in \{0,1\}^{N_r \times K}$,  $r=1,2$, where each $z_{ri}$, treated as a binary vector, appears as a row in $Z_r$. %, where each $z_{ri}$ is identified with a binary vector in $\{0,1\}^K$ (via $z_{rik} = 1 \iff z_{ri} = k$) and appears as a row in $Z_r$.
 We also let $Z = [Z_1;Z_2] \in \{0,1\}^{(N_1+ N_2) \times K}$ be the \emph{ matched membership matrix} obtained by vertical concatenation of $Z_1$ and $Z_2$. 

% after padding with zeros on left or right: $z_{1i}$ form rows $(z_{1i},0_{K})$ for $i \in [N_1]$ and $x_{2j}$ form rows $(0_{K},z_{2j})$ for $j \in [N_2]$.

For each node $i$ in group $r$, we observe a covariate vector $x_{ri} \in \reals^{d_r}$. If we want to specify the components of this vector we write $x_{rij}, j=1,\dots,d_r$. 
Let $X := (x_{ri},\, i\in [N_r],\, r = 1,2)$. We often think of $X$ as a matrix in $\reals^{(N_1+ N_2) \times (d_1+d_2)}$, by padding covariate vectors with zeros on the left or right: $x_{1i}$ form rows $(x_{1i},0_{d_2})$ for $i \in [N_1]$ and $x_{2j}$ form rows $(0_{d_1},x_{2j})$ for $j \in [N_2]$.

In addition to the covariate matrix $X$, we also observe a bipartite network on $[N_1] \times [N_2]$ represented as a \emph{bi-adjacency matrix} $A \in \{0,1\}^{N_1\times N_2}$. Thus, the observed data is $(X,A)$. We  assume that given the latent community labels $Z$, $X$ is independent of $A$. Below, we outline how each of these components are generated given $Z$.

%\medskip
%\simplesubsec{}
\begin{figure}
	\centering

		\includegraphics[scale=.9]{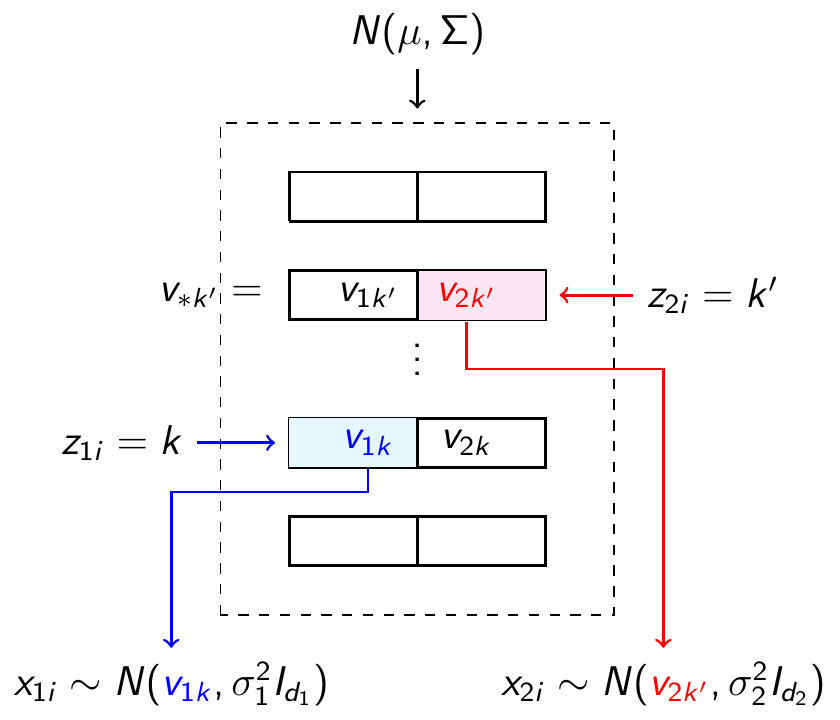}\hskip2em
		 \raisebox{.75\height}{\includegraphics[scale=1]{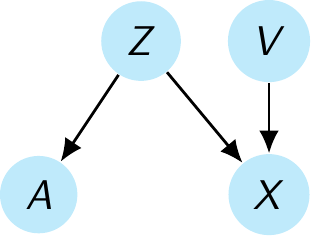}}

	\caption{(Left) Schematic diagram for the hierarchical generation of node covariates. (Right) Overall graphical representation of the model.}
	\label{fig:covar:gen}
\end{figure}
\subsection{Generating covariates}
To generate $x_{ri}$, we use a hierarchical mixture model: First we generate the mean vector $v_{rk}\in \mathbb{R}^{d_r}$ associated with each cluster $C_{rk}$, and then we draw $x_{ri}$ from a normal distribution with mean $v_{rk}$ when $z_{ri} = k$; see Figure~\ref{fig:covar:gen}. In order to model the correlation (i.e., a statistical link)  between covariates of matched clusters, we draw the entire vector $v_{*k} := (v_{rk},r=1,2) = (v_{1k},v_{2k})$ from a multivariate normal distribution with possibly nonzero covariance matrix between the two components $v_{1k},v_{2k}$. We have the following model: 
\begin{align}\label{eq:model:1}
\begin{split}
  z_{ri} &\sim \text{Mult}(1,\pi_r ), \\
  (v_{rk},r=1,2) &\iid N(\mu,\Sigma), \;\; k=1,\dots,K.\\
  x_{ri} | z_{ri} = k , v_{rk} &\sim N(v_{rk}, \sigma_r^2 I_{d_r}), \;\; i \in [N_r], \; r=1,2
\end{split}
\end{align}
where the draws are independent over $r$ and $i \in [N_r]$, on each line.
Here, $\pi_r = (\pi_{r1},\dots,\pi_{rK})$ is the prior on cluster proportions for group $r$ ($\pi_r \in [0,1]^K$ with $\sum_{k=1}^K \pi_{rk} = 1$).  $(\sigma^2_r, r=1,2)$ models the variance of measurement noise in the two groups.
To make the correlation structure in $(v_{rk},r=1,2)$ more explicit, we can partition $\mu = (\mu_{r},r=1,2)$ and $\Sigma$, so that
\begin{align*}
  \begin{pmatrix}
     v_{1k} \\ v_{2k}
  \end{pmatrix}
  &\indep
   N \bigg[ 
  \begin{pmatrix}
    \mu_{1} \\ \mu_{2}
  \end{pmatrix},
  \begin{pmatrix}
    \Sigma_{11} & \Sigma_{12} \\
    \Sigma_{12}^T & \Sigma_{22}
  \end{pmatrix}
  \bigg], \;\; k=1,\dots,K.
\end{align*}
Note that $\mu_{r} \in \reals^{d_r}$. 
%Figure~\ref{fig:covar:gen} shows a schematic diagram of hierarchical generation of covariates according to~\eqref{eq:model:1}. 
%
In subsection~\ref{sec:covariate:corr}, we discuss how this model provides a statistical link between covariates of the two groups. For future reference, $v_{*k} := (v_{rk},r=1,2)$ collects the matched hidden covariate means of the clusters $C_{1k}$ and $C_{2k}$. On the other hand, we write $v_{r*} = (v_{rk}, k \in [K])$ which collects all the hidden covariate means for side $r=1,2$ of the network.

%Let $E_{d_1,d_2} = 1_{d_1} 1_{d_2}^T$ be the $d_1 \times d_2$ matrix of all ones. We can assume for simplicity that $\Sigma_{12} = \rho_{12} E_{d_1,d_2}$ and $\Sigma_{rr} = \alpha_r I_{d_r} + (\rho_{r}-\alpha_r) E_{d_r,d_r}$ for $r=1,2$.

%\medskip
%\simplesubsec{Generating the graph ($A$)}
\subsection{Generating the network}
 Given $Z$, the bipartite graph is generated as follows: For each node $i$ in $[N_1]$ and each node $j$ in $[N_2]$, we put an edge between them with probability $p$ if they belong to matched clusters, and with probability $q \neq p$ otherwise. With $A = (A_{ij}) \in \{0,1\}^{N_1 \times N_2}$ denoting the resulting bi-adjacency matrix, we have
\begin{align}\label{eq:model:2}
  A_{ij} \mid Z \; \indep\;
  \begin{cases}
    \text{Ber($p$)} & z_{1i} = z_{2j} \\
    \text{Ber($q$)} & z_{1i} \neq z_{2j}
  \end{cases},
  \;\;\; i \in [N_1],\; j \in [N_2].
\end{align}
Combined,~\eqref{eq:model:1} and~\eqref{eq:model:2} describe our full matched bipartite SBM model.
The objective is to find the posterior probability of $Z$ given $A$ and $X = \{x_{ri}: i \in [N_r], r=1,2\}$.

Although we will focus on the simple model~\eqref{eq:model:2} in deriving the algorithms, it is possible to allow for a more general edge probability structure as in the usual SBM, by assuming 
\begin{align}\label{eq:model:3}
	A_{ij} \mid Z \; \indep\; \bern(\Psi_{z_{1i},z_{2j}})
	\;\;\; i \in [N_1],\; j \in [N_2].
\end{align}
where $\Psi \in [0,1]^{K \times K}$ is a
%\emph{symmetric}
 connectivity (or edge probability) matrix. Model~\eqref{eq:model:2} corresponds to the case where $\Psi_{kk} = p$ and $\Psi_{k\ell} = q$ for $k\neq \ell$. We will refer to this model as \texttt{mbiSBM} for matched bipartite SBM.

\begin{rem}\label{rem:tuning:cov:info}
	Parameter $\Sigma$ in~\eqref{eq:model:1} is key in tuning the effect of the node covariates on community detection. Assume for simplicity that $\sigma_r^2 = 0, \, r=1,2$. Then, when $\Sigma=0$, $v_{*k} = \mu$ a.s. for all $k$, hence $x_{*i} = \mu$ for all $i$, and the  covariates carry no information about communities. When, $\Sigma \neq 0$ there is variability in $v_{*k}$ across $k$, hence community detection benefits from the covariate information. On the other hand, it is well-known that the information in the adjacency matrix $A$ about community structure is roughly controlled by the expected degree of the network, i.e., the scaling of $Q = (p,q)$, in addition to the separation of $p$ and $q$. Thus, by rescaling $\Sigma$ and $Q = (p,q)$, we can control the balance of these two sources of information. This is explored in Section~\ref{sec:simulation} through simulation studies.
\end{rem}

\subsection{Connection with the usual SBM}
Ignoring the covariate part of the model, one might wonder whether \texttt{mbiSBM}, introduced in~\eqref{eq:model:3}, can be thought of as a sub-model of a usual SBM with perhaps increased number of communities. First, it should be clear that the model is not a usual SBM with $K$ communities. However, it can be thought of as a SBM with $2K$ communities with \emph{restrictions} imposed on both its membership and connectivity matrix. To see this, let us recall the matrix representation of the usual SBM with $K$ blocks, where one has the connectivity matrix $\Psi \in [0,1]^{K \times K}$ and binary membership matrix $Z \in \{0,1\}^{N\times K}$. Such model can be compactly represented as $\ex[A |Z] = Z \Psi Z^T$.

Now, consider model~\eqref{eq:model:3}, and let $Z_r = (z_{ri}) \in \{0,1\}^{N_r \times K}$ for $r=1,2$. We express the model compactly as
\begin{align}\label{eq:At:Zt:Psit}
\ex \underbrace{\begin{pmatrix}
	0 & A \\
	A^T & 0
	\end{pmatrix}}_{\At} = 
\underbrace{\begin{pmatrix}
	Z_1 & 0 \\
	0 & Z_2
	\end{pmatrix}}_{\Zt}
\underbrace{\begin{pmatrix}
	0 & \Psi \\
	\Psi^T & 0
	\end{pmatrix}}_{\Psit}
\begin{pmatrix}
Z_1^T & 0 \\
0 & Z_2^T
\end{pmatrix}.
\end{align}
given $Z_1,Z_2$. Letting $N := N_1  + N_2$ and 
defining the matrices $\At \in \{0,1\}^{N \times N}$, $\Zt \in \{0,1\}^{N \times 2K}$ and $\Psit \in [0,1]^{2K \times 2K}$ as in~\eqref{eq:At:Zt:Psit}, it is clear that model~\eqref{eq:model:3} is equivalent to $\ex[\At|\Zt] = \Zt \Psit \Zt^T$. This is a SBM with restrictions on both $\Zt$ and $\Psit$:  Nodes $1,\dots,N_1$ can only belong to communities $1,\dots,K$ and nodes $N_1+1,\dots,N_1+N_2$ can only belong to communities $K+1,\dots,2K$. As for $\Psit$, the restriction imposes zero connectivity among communities $1,\dots,K$ and among those of $K+1,\dots,2K$. With these restrictions in place, we have a natural bipartite matching between communities: $\ell \leftrightarrow K+\ell$ for $\ell \in [K]$. 

\subsection{Degree-corrected version}\label{sec:dc:model}
A limitation of the SBM is that nodes in the same community have the same expected degree. To allow for degree heterogeneity within communities, bringing the model closer to real networks, a common approach is to use the DC-SBM~\cite{Dasgupta2004,Karrer2011}. It is fairly straightforward to introduce degree-correction in our setup. Consider the form of the matched SBM introduced in~\eqref{eq:model:3}. To each node $i$ in group $r$, we associate a propensity parameter $\theta_{ri} > 0$. Thus, we have additional parameters $\theta_r := (\theta_{ri}, \;i \in [N_r])$ for $r=0,1$. The degree-corrected (DC) version of the model replaces~\eqref{eq:model:3} with
\begin{align}\label{eq:dc:model}
A_{ij} \mid Z \; \indep\; \poi(\theta_{1i} \theta_{2j}\Psi_{z_{1i}, \,z_{2j}})
\;\;\; i \in [N_1],\; j \in [N_2].
\end{align}
Replacing the Bernoulli with Poisson is for convenience in later derivations, and is common in dealing with DC-SBM~\cite{Karrer2011}. In order for the parameters $(\theta_1,\theta_2,\Psi)$ to be identifiable, we need to agree on a normalization of $\theta_{ri}$ per each community. Here, we adopt the following:
\begin{align}\label{eq:theta:normalization}
\frac{1}{|C_{rk}|} \sum_{i \,\in\, C_{rk}} \theta_{ri} = 1  \iff  \sum_{i=1}^{N_r} (\theta_{ri} -1) z_{rik} = 0, \quad k \in [K], \; r=1,2.
\end{align}
With this normalization, we recover the original model when $\theta_{ri} = 1$ for all $i$ and $r$. Our normalization is similar to the one considered in~\cite{Gao2016}.

\begin{rem}
	A normalization of the form~\eqref{eq:theta:normalization} is often assumed when one considers both $\theta = (\theta_1,\theta_2)$ and $Z$ to be deterministic unknown parameters, or alternatively when working conditioned on $\theta$ and $Z$. Throughout, we assume $\theta$ to be an unknown parameter. However, to be pedantic, \eqref{eq:theta:normalization} is inconsistent with i.i.d. random generation of $z_{ri}$ from a $\text{Mult}(1,\pi_r)$ as in~\eqref{eq:model:1}. One way to get around this is to assume that the labels are generated a priori from the product multinomial distribution described in~\eqref{eq:model:1} \emph{conditioned} on the set of labels satisfying~\eqref{eq:theta:normalization}. We will ignore the change in the label prior this conditioning makes in deriving the algorithms. In the end, we enforce~\eqref{eq:theta:normalization} in an ``averaged'' sense, replacing $z_{rik}$ with the corresponding (approximate) posterior $\tau_{rik}$, as detailed in subsection~\ref{sec:dc:extension}. Viewed as a set of constraints on the collection of soft-labels $(\tau_{rik})$,~\eqref{eq:theta:normalization} is not that restrictive.
\end{rem}

\subsection{Covariate correlation on matched clusters}\label{sec:covariate:corr}
One desirable feature in modeling covariates, in the context of a matched bipartite network, is the ability to gain some information about whether a pair $(i,j) \in [N_1] \times [N_2]$ belongs to a matched cluster, by just looking at their respective covariates $x_{1i}$ and $x_{2j}$. Assume for the moment that there is no measurement noise, i.e., $\sigma_r^2 = 0, \; r=1,2$. Then, the question boils down to whether we can tell  $(v_{1k},v_{2k'})$ for $k \neq k'$ apart from $(v_{1k},v_{2k})$.
According to the model, $(v_{1k},v_{2k})$ and $(v_{1k'},v_{2k'})$ are independent Gaussian vectors, hence $	(v_{1k},v_{2k},v_{1k'},v_{2k'})$ is Gaussian with mean $(\mu,\mu) = (\mu_1, \mu_2, \mu_1, \mu_2)$ and covariance $
(\begin{smallmatrix}
\Sigma & 0 \\
0 & \Sigma
\end{smallmatrix})$.
Recalling the decomposition of $\Sigma$, it follows that $(v_{1k},v_{2k'}) \sim N(\mu, (\begin{smallmatrix}
\Sigma_{11} & 0 \\
0 & \Sigma_{22}
\end{smallmatrix} ))$ for $k \neq k'$ whereas $(v_{1k},v_{2k}) \sim N(\mu, (\begin{smallmatrix}
\Sigma_{11} & \Sigma_{12} \\
\Sigma_{21} & \Sigma_{22}
\end{smallmatrix} ))$. As long as $\Sigma_{12}\neq 0$, these two distributions are different, hence the model is able to distinguish between the two cases. In other words, there is information in the covariates about the matching of the clusters in the two groups. However, this information (in itself) is quite weak since it amounts to distinguishing between two multivariate Gaussian distributions, based only on a single draw from each. Fortunately, the model also carries information about the matching in the adjacency matrix $A$.

\begin{figure}[t]
	\centering
	\begin{tabular}{p{.5in}p{.5in}p{.5in}p{.5in}}
		\includegraphics[height=1.3in]{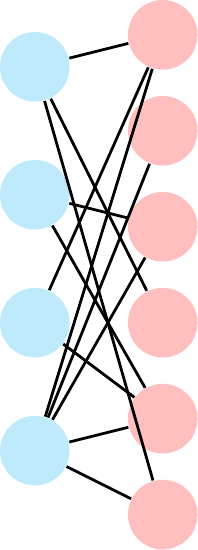} &
		\includegraphics[height=1.3in]{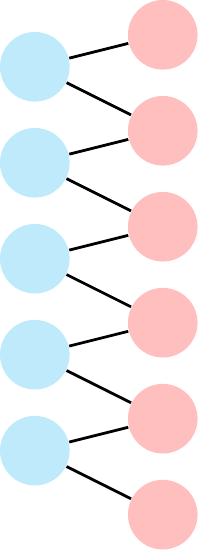}	&
		\includegraphics[height=1.3in]{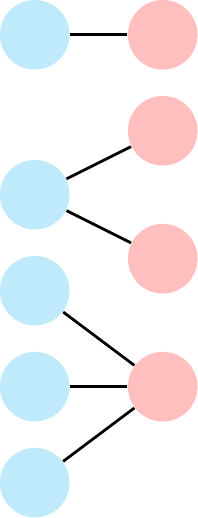} &
		\includegraphics[height=1.3in]{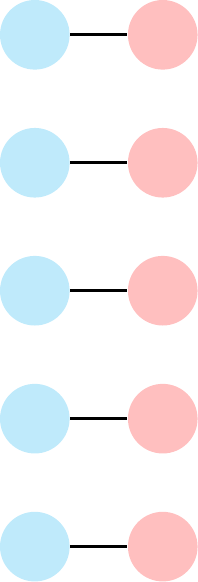}	\\
		(a) & (b) & (c) & (d)			
	\end{tabular}
	\caption{Possible relations between communities of the two sides, in a bipartite network. (a) and (b) are hard to interpret. Structures like (c), i.e., collections of disjoint stars, are interpretable and (d) is the simplest within this class.}
	\label{fig:interp}
\end{figure}

\subsection{Interpretability and identifiability}\label{sec:interp}
We alluded earlier to the merits of having a 1-1 matching between the communities of the two sides built into the model. Our main argument for the advantage of a 1-1 matching is interpretability. Figure~\ref{fig:interp} shows some possible relations that could exist between communities of the two side (each circle represents a community). The closer this relation is to a complete bipartite graph, the harder it is to interpret; Figure~\ref{fig:interp}(a) thus is perhaps the least informative relation among the four. In Figure~(b), the relation is much sparser. However, it is still hard to interpret: all communities seem to be related, albeit indirectly. We would like to argue that structures like (c) where the graph is a collection of disjoint \emph{stars} is interpretable. One would like to fit such models, though in full generality, this seems to be a difficult task. One has to somehow control the branching numbers of the stars, which indirectly control the number of communities on either side. Thus, the problem is at least as hard as deciding the number of communities in the usual SBM. 
%This goes back to the issue of identifiability: one can always split into more branches having the same probability of connection. 
Our 1-1 matching relation, Figure~\ref{fig:interp}(d), is the simplest structure of the type depicted in~(c). It is in a sense a first-order approximation of the models in this class. It is easiest to fit and is the most interpretable.

%\subsection{Identifiability and one-to-many matchings}

%!TEX root = sbm_bip_arxiv.tex
\section{Model fitting}\label{sec:model:fitting}%by variational inference}
In order to fit the model, we derive algorithms based on variational inference ideas. The algorithm starts from some initial guess of the labels and parameters and proceeds to improve the likelihood via simple iterative updates to the parameters and an approximate posterior on the labels. We first discuss the case with no degree correction. The extension to the degree-corrected model is discussed in subsection~\ref{sec:dc:extension}.

\subsection{The likelihood}\label{subsec:likelihood} Let us introduce some notation. We write $v_{* k} = (v_{rk}, r=1,2) \in \reals^{d_1+d_2}$ and $v_{r *} = (v_{rk}, k \in [K]) \in \reals^{K d_r}$, and $V = (v_{rk}, \,r = 1,2,\, k \in [K]) \in \reals^{K(d_1+d_2)}$.
Similarly, $Z = (z_{ri},\, i\in [N_r],\, r = 1,2)$ and $X = (x_{ri},\, i\in [N_r],\, r = 1,2)$. (In this section, the particular matrix form of $Z$ and $X$ are not of interest. $Z$ and $X$ are simply placeholders for the collections of labels and covariates.)~Let
\begin{align*}
y_{ij} := \mathbbm{1}\{z_{1i} = z_{2j}\}, \quad
z_{rik} := \mathbbm{1}\{z_{ri} = k\}.
\end{align*}
The joint distribution of all the variables in the model factorizes as follows:
\begin{align*}
	p(A,X,Z,V) &= p(A|Z) \, P(X|Z,V)\, p(Z)\, p(V) \\
	&= \prod_{i=1}^{N_1} \prod_{j=1}^{N_2} p(A_{ij} | z_{1i},z_{2j}) \prod_{r=1}^2 \prod_{i=1}^{N_r} \Big\{ p(x_{ri}|z_{ri} , v_{r*}) 
	p(z_{ri}) \Big\} \prod_{k=1}^K p(v_{*k}). %\\
\end{align*}

We have $p(x_{ri}|z_{ri} , v_{r*}) = \prod_{k=1}^K  [ f_r(x_{ri}; v_{rk})]^{z_{rik}}$ where  $f_r(x_{ri}; v_{rk}) := N(x_{ri}; v_{rk},\sigma^2_r I_{d_r})$. In addition, $p(z_{ri}) = \prod_{k=1}^K {\pi_{rk}}^{z_{rik}}$. For the network part, we in general have 
\begin{align}\label{eq:ell1:def}
	\ell_1(\Psi) =: \log p(A|Z) = \sum_{ij} \sum_{k\ell} z_{1ik} z_{2j\ell}
	\;g(\Psi_{k\ell},A_{ij}) 
\end{align}
where $g$ is either the log-likelihood of the Bernoulli, $\gbin(p,\alpha) = \alpha \log \frac{p}{1-p} + \log (1-p)$, or Poisson, $\gpoi(p,\alpha) = \alpha \log p  - p$. 

In the special planted partition case, $\log p(A|Z)$ greatly simplifies: By breaking up over $k = \ell$ and $k \neq \ell$, we obtain
\begin{align}\label{eq:ell1_longer}
	\ell_1(\Psi) = \log p(A|Z)  
		&= \sum_{ij} \Big[ \sum_{k} z_{1ik} z_{2jk}
			\;g(p,A_{ij}) + \sum_{k\neq \ell} z_{1ik} z_{2j\ell}
			\;g(q,A_{ij}) \Big] \notag \\ 
		&= \sum_{ij} y_{ij} g(p,A_{ij}) + (1-y_{ij}) g(q,A_{ij}).
\end{align}
where we have used $\sum_{k\ell} z_{1ik} z_{2j\ell} = 1$. The complete log-likehood of the model, i.e., assuming we observe the latent variables $(Z,V)$, is then 
\begin{align*}
\ell(\mu,\Sigma,\sigma,\pi,\Psi) = \ell_1(\Psi ) + \ell_2(\mu,\Sigma,\sigma,\pi)
\end{align*}
where $\ell_1(\Psi )$ is as defined in~\eqref{eq:ell1:def} and 
% $\ell_1(\Psi ) =  \sum_{ij} \sum_{k\ell} z_{1ik} z_{2j\ell}\;g(\Psi_{k\ell},A_{ij})$
\begin{align*}
	 \ell_2(\mu,\Sigma,\sigma,\pi) = \sum_{r=1}^2 \sum_{i=1}^{N_r} \sum_k z_{rik}  \log \big[ \pi_{rk}\, f_r(x_{ri}; v_{rk}) \big] + \sum_{k} \log p(v_{*k} | \mu, \Sigma).
\end{align*}

\subsection{Mean-field Approximation}\label{subsec:meanfield}
Variational inference is often regarded as the approximation of a posterior distribution by solving an optimization problem \cite{Wainwright2008,Blei2016}. The idea is to pick an approximation $q$ from some tractable family of distributions over the latent variables $(Z,V)$ and try to make this approximation as close as possible in KL divergence to the true posterior. We prefer to think of the approach as a generalization of the EM  algorithm, i.e., a general approach to maximize the incomplete likelihood by maximizing a lower bound on it. This lower bound, which we call variational likelihood, also known as the  evidence lower bound (ELBO)~\cite{Jordan1999}, involves both the likelihood parameters and a distribution $q$, namely,
\begin{align}\label{eq:elbo:def}
  J := \E_q[\ell(\mu,\Sigma,\sigma,\pi,\Psi) - \log q(Z,V)].
\end{align}
Here the expectation is taken,  assuming $(Z,V) \sim q$. One maximizes $J$ by alternating between maximizing over likelihood parameters $(\mu,\Sigma,\sigma,\pi,\Psi)$ and the variational posterior $q$. Without additional constraints, the optimization over $q$ leads to the posterior distribution of $(Z,V)$ given $(X,A)$, resulting in the EM algorithm. A genuine variational inference procedure, however, imposes some simplifying constraints on $q$. In particular, we impose the following factorized form, often referred to as the \emph{mean-field approximation}:
% Minimizing the KL divergence is equivalent to maximizing the Evidence Lower BOund (ELBO)(cite). 
%
%To do variational inference, we need a surrogate for the posterior distribution of $(Z,V)$. 
%
%We can assume that the variational distribution factorizes as follows:
\begin{align}\label{eq:var:dist}
	\textstyle 
  q(Z,V) = q_V(V) q_Z(Z), \quad q_Z(Z) = \prod_{r,i} q_{ri}(z_{ri}), \quad q_V(V)=\prod_{k=1}^K N(v_{*k}; \tilde{\mu}_k,\tilde{\Sigma}_k)
\end{align}
where $q_{ri}(z_{ri}) = \prod_{k=1}^K \tau_{rik}^{z_{rik}}$ is a multinomial distribution. In keeping up with our notation we write $\tau_{ri} = (\tau_{rik}, \; k\in [K])$. Note that $\tau = (\tau_{ri})$ collects  the approximate posteriors on node labels. They are the key parameters in our inference. 

  The particular form assumed for $q_V$ in~\eqref{eq:var:dist} is motivated by looking at the (true) posterior of $V$ given $Z$. We could have assumed a factorized form $q(Z,V) = q_Z(Z) p(V|Z)$ where $p(V|Z)$ is the the true posterior of $V$ given $Z$. However, the parameters of $p(V|Z)$ have a complicated dependence on $Z$. We have kept the form of $p(V|Z)$ while freeing the parameters, letting them be optimized by the algorithm.

\medskip
%Plugging the variational distribution in \elbo, we can compute $J$ as follows:
To simplify notation, let us define $\Gamt := ((\Sigt_k, \mut_k), k=1,\dots,K)$, collecting the parameters for the variational posterior $q_V$. Plugging in the variational distribution~\eqref{eq:var:dist} into the variational likelihood~\eqref{eq:elbo:def} using expression \eqref{eq:ell1_longer}  for $l_1(\psi)$, after some algebra detailed in Appendix~\ref{sec:elbo:deriv}, we get
% \begin{align*}
% 	J &= \E_q\Big[\ell(\mu,\Sigma,\sigma,Q)] -\log q(Z,V)\Big] 
% \end{align*}
% where
\begin{align}\label{eq:elbo:expr:1}
\begin{split}
	J &= \sum_{i,j} \Big[\gamma_{ij}(\tau) g(p;A_{ij}) + (1-\gamma_{ij}(\tau)) g(q; A_{ij}) \Big] 
+ \sum_{r,i,k} \tau_{rik} \big[ \beta_{rik}(\Gamt, \sigma^2) + \log \frac{\pi_{rk}}{\tau_{rik}}\big]\\
  &-\frac{1}{2} \sum_r d_r N_r \log \sigma_r^2 -\frac{K}2 \big\{ \log |\Sigma| + \tr[\Sigma^{-1} \Psib(\Gamt,\mu)] \big\}
+ \frac12\sum_k\log|\Sigt_k| + \text{const.}
\end{split}
\end{align}
where 
\begin{align}
\gamma_{ij}(\tau) := \E_{q_Z}(y_{ij}) = \sum_{k=1}^K \tau_{1ik} \tau_{2jk},\quad
%\begin{align}
%	\gamma_{ij}(\tau) &:= \E_{q_Z}(y_{ij}) = \sum_{k=1}^K \tau_{1ik} \tau_{2jk}, \label{eq:gam:def}\\
%	\beta_{rik}(\Gamt, \sigma^2) &:= 
%  - \frac1{2\sigma^2_r} \Big[\tr\big((\Sigt_k)_{rr}\big) + \|x_{ri} - \mut_{rk}\|^2 \Big], \label{eq:bet:def}\\
%  \Psib(\Gamt,\mu) &:= \frac{1}{K} \sum_{k=1}^K \big[\Sigt_k + (\mut_k - \mu) (\mut_k - \mu)^T \big]. \notag
%\end{align}
\beta_{rik}(\Gamt, \sigma^2) := 
		- \frac1{2\sigma^2_r} \Big[\tr\big((\Sigt_k)_{rr}\big) + \|x_{ri} - \mut_{rk}\|^2 \Big], 	\label{eq:bet:def}
\end{align}
$(\Sigt_k)_{rr}, r=1,2$ refers to the two diagonal blocks of $\Sigt_k$ of sizes $d_r \times d_r$, and 
\begin{align}\label{eq:S:def}
		\Psib(\Gamt,\mu) := \frac{1}{K} \sum_{k=1}^K \big[\Sigt_k + (\mut_k - \mu) (\mut_k - \mu)^T \big].
\end{align}

%\subsection{Optimizing ELBO}
\subsection{Optimizing the variational likelihood}\label{subsec:optimization}
We proceed to maximize $J$ by alternating between the likelihood parameters $(\mu,\Sigma,\sigma,\pi,\Psi)$ and variational parameters $(\tau,\Gamt)$. Each of these two sets of parameters is also optimized by alternating maximization. In other words, the overall optimization algorithm is a  block coordinate ascent. The key update is that of label distributions $\tau$, which we describe in details below. The other updates are more or less standard and detailed in Appendix~\ref{sec:sig2:etc}.

\subparagraph{Updating node labels ($\tau$).} To optimize $\tau$, we  use block coordinate ascent, by fixing $\tau_2 := (\tau_{2j}, j \in [N_2])$ and optimizing over $\tau_1 := (\tau_{1j}, j \in [N_1])$ and vice versa. Here we only consider optimization over $\tau_1$ given $\tau_2$.
To simplify notation, let $	h(p,q;\alpha) := g(p;\alpha) - g(q;\alpha)$.
%\begin{align*}
%	h(p,q;\alpha) := g(p;\alpha) - g(q;\alpha) = 
%	%\log \frac{p^{\alpha}(1-p)^{1-\alpha}}{q^\alpha(1-q)^{1-\alpha}}
%	\log \big(\frac{p}{q}\big)^{\alpha}
%		\big(\frac{1-p}{1-q}\big)^{1-\alpha}
%\end{align*}
Considering only the terms in $J$ that depend on $\tau$, we have 
\begin{align*}
	J = \sum_{ij} \gamma_{ij}(\tau)\, h(p,q;A_{ij}) + \sum_{r,i,k} \tau_{rik} \big[ \beta_{rik}(\Gamt, \sigma^2) + \log \frac{\pi_{rk}}{\tau_{rik}}\big] + \text{const.}
\end{align*} 
where const. collects terms that do not depend on $\tau$. Let $\xi_{rik} := \beta_{rik}(\Gamt, \sigma^2) + \log \pi_{rk}$. Using the definition of $\gamma_{ij}(\tau) 
= \sum_{k=1}^K \tau_{1ik} \tau_{2jk}$, 
\begin{align*}
	J = \sum_{ij}  \Big(\sum_k \tau_{1ik} \tau_{2jk}\Big) \, h(p,q;A_{ij}) + \sum_{r,i,k} \tau_{rik} \big[ \xi_{rik} - \log \tau_{rik}\big] + \text{const.}
\end{align*} 
Now, assume further that $\tau_2$ is constant. Then,
\begin{align*}
	J &= \sum_{i}  \sum_k  \tau_{1ik} \Big(\sum_j  \tau_{2jk}\, h(p,q;A_{ij}) \Big) + \sum_{i,k} \tau_{1ik} \big[ \xi_{1ik} - \log \tau_{1ik}\big] + \text{const.} \\
	&= \sum_{i} \sum_k    \tau_{1ik} \Big(\sum_j  \tau_{2jk}\, h(p,q;A_{ij}) + \xi_{1ik} - \log \tau_{1ik} \Big) +  \text{const.}
\end{align*}
where const. includes terms also dependent on $\tau_2$, but not on $\tau_1$. The cost function above is separable over $i$, and for each $i$ we have an instance of the problem given in the following lemma. Recall that $\Pc_K$ is the set of probability vectors in $\reals^K$.
\begin{lem}\label{lem:tau:update}
 For any nonnegative vector $(a_1,\dots,a_K)$, let $f_a: \Pc_K \to \reals$ be defined by $f_a(p) := \sum_{k=1}^K p_k (a_k   - \log p_k)$. Then the maximizer of $f_a$ over $\Pc_K$ is given by the softmax operation:
	\begin{align}\label{eq:softmax:op}
		\argmax_{p \in \Pc_k} f_a(p) =  \softmax(a) := \frac{e^{a_k}}{\sum_\ell e^{a_{\ell}}}.
	\end{align}
\end{lem}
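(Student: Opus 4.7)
The plan is to recognize $f_a$ as (up to an additive constant) the negative KL divergence between $p$ and the probability vector $q := \softmax(a)$, so that Gibbs' inequality immediately gives both the optimizer and its uniqueness.

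Concretely, set $Z := \sum_{\ell=1}^K e^{a_\ell}$ and $q_k := e^{a_k}/Z$, so $q \in \Pc_K$. Then write
\begin{align*}
f_a(p) \;=\; \sum_{k=1}^K p_k \bigl(\log e^{a_k} - \log p_k\bigr)
\;=\; \sum_{k=1}^K p_k \log \frac{q_k Z}{p_k}
\;=\; \log Z \;-\; \sum_{k=1}^K p_k \log \frac{p_k}{q_k},
\end{align*}
using $\sum_k p_k = 1$ to pull $\log Z$ outside. The last sum is $\mathrm{KL}(p \,\|\, q) \ge 0$, with equality iff $p = q$, by Jensen's inequality applied to $-\log$ (with the standard convention $0\log 0 = 0$ to handle boundary points of $\Pc_K$). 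Since $\log Z$ is independent of $p$, the unique maximizer of $f_a$ over $\Pc_K$ is $p = q = \softmax(a)$, proving~\eqref{eq:softmax:op}.

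If one prefers a direct argument, an equivalent route is a Lagrangian calculation: introducing a multiplier $\lambda$ for the constraint $\onev^T p = 1$ and differentiating $f_a(p) - \lambda(\onev^T p - 1)$ in $p_k$ gives the stationarity condition $a_k - \log p_k - 1 - \lambda = 0$, so $p_k \propto e^{a_k}$, and normalization fixes the constant. Strict concavity of $p \mapsto -p \log p$ makes $f_a$ strictly concave on $\Pc_K$, ensuring the stationary point is the unique global maximizer.

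There is no real obstacle here; the only minor point worth noting is that the non-negativity hypothesis on $(a_1,\dots,a_K)$ in the statement is not actually used anywhere (softmax and the KL identity are valid for arbitrary real $a_k$), and the boundary of $\Pc_K$ is handled by the usual $0\log 0 = 0$ convention, which makes $f_a$ continuous on the closed simplex and thus guarantees the maximum is attained at the interior point $q$.
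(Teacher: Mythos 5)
Your proof is correct and follows essentially the same route as the paper: both rewrite $f_a(p)$ as $\log\bigl(\sum_\ell e^{a_\ell}\bigr) - \mathrm{KL}(p\,\|\,\softmax(a))$ and conclude from nonnegativity of the KL divergence. Your observations that nonnegativity of $a$ is not needed and that equality in Gibbs' inequality gives uniqueness are accurate refinements, but the substance of the argument is identical.
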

%For the proof see Appendix~?.
\begin{proof}
	We have $f_a(p) = -\sum_k p_k \log(p_k/e^{a_k})$. If $\sum_k e^{a_k} = 1$, then $-f_a(p)$ is the KL-divergence between $(p_k)$ and $(e^{a_k})$ and the result follows. Otherwise, normalizing only adds a constant to $f_a$, that is, with $C = 1/\sum_k e^{a_k}$ and $q_k = C e^{a_k}$, we have  $f_a(p) = -\sum_k p_k \log(p_k/q_k) -\log C $ and the result follows.
\end{proof}
We write the solution of Lemma~\ref{lem:tau:update} simply as $p_k \propto_k e^{a_k}$ where $\propto_k$ means proportional as a function of $k$. Then, $\tau_1$ update is $\tau_{1ik} \propto_k \exp \big[ \sum_j  \tau_{2jk}\, h(p,q;A_{ij}) + \xi_{1ik}  \big]$, or after unpacking $ \xi_{1ik}$,
\begin{align}\label{eq:tau:update:v1}
	\tau_{1ik} 
	%\;&\propto_k\; \exp \Big[ \sum_j  \tau_{2jk}\, h(p,q;A_{ij}) + \xi_{1ik}  \Big],  \\
	\;&\propto_k\; \pi_{1k} \exp \Big[ \sum_j  \tau_{2jk}\, h(p,q;A_{ij}) + \beta_{1ik}(\Gamt, \sigma^2)  \Big]\quad i=1,\dots,N_1.
\end{align}
The update for $\tau_2$ is similar. %See Algorithm~?.

%\begin{rem}[Improving the speed]
%    For binary likelihood, we can write $h(p,q;\alpha) = \alpha \phi_1 + \phi_0$.
%	where $\phi_1 = \log \frac{p(1-q)}{q(1-p)}$ and $\phi_0 = \log \frac{1-p}{1-q}$. Then, in matrix notation 
%%	\begin{align*}
%%	\tau_{1ik}
%%	\;&\propto_k\ \pi_{1k} \exp \Big[ \phi_1 \sum_j  \tau_{2jk}A_{ij} + \phi_0\tauc_{2k}+ \beta_{1ik}(\Gamt, \sigma^2)  \Big]\quad i=1,\dots,N_1
%%	\end{align*}
%%	or in matrix notation
%	\begin{align*}
%	\tau_{1ik}
%	\;&\propto_k\ \pi_{1k} \exp \Big(\phi_1 [A \tau_2]_{ik} + \phi_0\tauc_{2k}+ \beta_{1ik}(\Gamt, \sigma^2)  \Big)\quad i=1,\dots,N_1
%	\end{align*}
%	where $[A \tau_2]_{ik} = \sum_j \tau_{2jk} A_{ij}$. When $A$ is sparse, the matrix-vector product $A\tau_2$ can be computed quite fast.
%\end{rem}

%\subsection{Updating $\Sigt$ and $\mut$}
%\label{subsec:Sigt:mut}
\subparagraph{Updating $\Sigt$ and $\mut$.}
Let us define $\tauc_{rk} := \sum_{i=1}^{N_r}\tau_{rik}$ and $D^{-1}_k := \diag\big( \frac{\tauc_{1k}}{\sigma^2_1}I_{d_1}, \frac{\tauc_{2k}}{\sigma^2_2}I_{d_2}\big)$
%\begin{align}\label{eq:tau:check:def}
%  \tauc_{rk} = \sum_{i=1}^{N_r}\tau_{rik}, \quad 
%  \text{and} \quad
%  D^{-1}_k = \diag\Big( \frac{\tauc_{1k}}{\sigma^2_1}I_{d_1}, \frac{\tauc_{2k}}{\sigma^2_2}I_{d_2}\Big)
%\end{align}
Then, as a function of $\Sigt$, $J$ can be written as (see Appendix~\ref{sec:details:Gamt})
\begin{align}\label{eq:J:Sigt:expr}
  J(\Sigt)\doteq -\frac12 \sum_k \tr\big[(D^{-1}_k +\Sigma^{-1})\Sigt_k\big] -\log|\Sigt_k|.
\end{align}
This is separable over $k$, with each term being the likelihood of a multivariate Gaussian with covariance parameter. The maximizers are then simply $ \Sigt_k = (D^{-1}_k +\Sigma^{-1})^{-1}$ for $k \in [K]$.
%\begin{align}\label{eq:Sigt:update}
%  \Sigt_k = (D^{-1}_k +\Sigma^{-1})^{-1}, \quad k \in [K].
%\end{align}
%

To derive the updates for $\mut$, let $\xc_{rk} := \sum_{i=1}^{N_r} \tau_{rik} x_{ri}$ and $\muc_{rk} := \xc_{rk}/\tauc_{rk}$.
%\begin{align}\label{eq:x:check:def}
% \xc_{rk} := \sum_{i=1}^{N_r} \tau_{rik} x_{ri}, \quad \muc_{rk} := \frac{\xc_{rk}}{\tauc_{rk}}
%\end{align}
Then, as a function of $\mut$,  $J$ can be written as (see~Appendix~\ref{sec:details:Gamt}):
\begin{align}\label{eq:J:mut:expr}
J(\mut) \doteq -\frac12\sum_k (\mut_k - m_k)^T (D_k^{-1}+\Sigma^{-1}) (\mut_k - m_k) 
  \end{align}
where $m_k = (D_k^{-1}+\Sigma^{-1})^{-1} (D_k^{-1}\muc_k +\Sigma^{-1}\mu)$. It is clear that the optimal value of $\mut_k$ is equal to $m_k$, which using the optimal value of $\Sigt_k$, can be written as $\mut_k = \Sigt_k (D_k^{-1}\muc_k +\Sigma^{-1}\mu)$.
%\begin{align}\label{eq:mut:update}
%  \mut_k = \Sigt_k (D_k^{-1}\muc_k +\Sigma^{-1}\mu).
%\end{align}

% So the optimal value of $\mut_k = m_k = (D_k^{-1}+\Sigma^{-1})^{-1} (D_k^{-1}\muc_k +\Sigma^{-1}\mu)$. \\
% Note that based on our derivation in the last section, the optimal value of $\Sigt_k$ was found to be $(D_k^{-1}+\Sigma^{-1})^{-1}$. 

% \subsection{$\mut$}

%\subsection{Updating $\Sigma$, $\mu$ and $\sigma^2$}
\subparagraph{Updating $\Sigma$, $\mu$ and $\sigma^2$.}
As a function of $\Sigma$, we have $J(\Sigma) \doteq -\frac{K}2 \big[ \log |\Sigma| + \tr(\Sigma^{-1} \Psib(\Gamt)) \big]$ which is the standard Gaussian likelihood, giving the optimal value $\Sigma = \Psib(\Gamt,\mu)$.
%\begin{align}\label{eq:Sigma:update}
%  \Sigma &= \Psib(\Gamt,\mu) = \frac{1}{K} \sum_{k=1}^K \Psi_k (\Gamt) = \frac{1}{K} \sum_{k=1}^K \big[\Sigt_k + (\mut_k - \mu) (\mut_k - \mu)^T\big].
%\end{align}
Similarly, as a function of $\mu$, $ J(\mu) \doteq -\frac{K}2 \big[\tr(\Sigma^{-1} \Psib(\Gamt)) \big] \doteq
  -\frac12 \sum_k \big[(\mut_k - \mu)^T  \Sigma^{-1} (\mut_k - \mu)\big]$
giving the optimal solution $\mu = \frac{1}{K} \sum_k \mut_k$.
%\begin{align}\label{eq:mu:update}
%    \mu &= \frac{1}{K} \sum_k \mut_k.
%\end{align}
%
% Now we derive the optimal value for $\Sigma$:
% \begin{align*}
% J(\Sigma) &= -\frac{K}2 \big[ \log |\Sigma| + \tr(\Sigma^{-1} \Psib(\Gamt)) \big]\\
% & = \frac{K}2 \big[ \log |\Sigma^{-1}| - \tr(\Sigma^{-1} \Psib(\Gamt)) \big]\\
% \frac{\partial J(\Sigma)}{\partial \Sigma^{-1}}&= \frac{K}{2}\big[\Sigma - \Psib(\Gamt)\big] = 0\\
% \text{So};\ 
% \end{align*}
%
The update for $\sigma^2 = (\sigma^2_1,\sigma^2_2)$ can be easily obtained too (see Appendix~\ref{sec:sig2:etc})
\begin{align}\label{eq:sig2:update}
     \sigma^2_r &= \frac1{N_r d_r}\Big[\sum_k \tauc_{rk}\tr\big((\Sigt_k)_{rr}\big)+\sum_{i,k} \tau_{rik}\|x_{ri} - \mut_{rk}\|^2 \Big], \quad r = 1,2.
\end{align}

%\subsection{Updating $\pi$ and $Q = (p,q)$}
%\bigskip
\subparagraph{Updating $\pi$ and $\Psi \equiv (p,q)$.}
Updating these parameters is standard (See Appendix~\ref{sec:sig2:etc}): 
\begin{align}\label{eq:pi:Q:update}
  \pi_r= \frac{(\tauc_{r1} ,\ldots, \tauc_{rK})}{\sum_k \tauc_{rk}},\;r =1,2, \quad p = \frac{\sum_{ij} \gamma_{ij}(\tau) A_{ij}}{\sum_{ij}\gamma_{ij}(\tau)}, \quad q= \frac{\sum_{ij} \big(1-\gamma_{ij}(\tau)\big) A_{ij}}{\sum_{ij} \big(1-\gamma_{ij}(\tau)\big)}.
\end{align}
%where $\tauc_{rk}$ is defined in~\eqref{eq:tau:check:def} and $\gamma_{ij}(\tau)$ defined in~\eqref{eq:gam:def}..

%\paragraph{Matrix form of the updates.}
\subsubsection{Improving the speed}
	For $r=0,1$, we treat each $(\tau_{rik})_{ik}$ as a matrix $\tau_r \in [0,1]^{N_r \times K}$. The $\tau$-update in~\eqref{eq:tau:update:v1} can be simplified to improve computational complexity for sparse networks $A$.
	We can write $h(p,q;\alpha) = \alpha \phi_1 + \phi_0$,
	where for the binary likelihood, $\phi_1 = \log \frac{p(1-q)}{q(1-p)}$ and $\phi_0 = \log \frac{1-p}{1-q}$, and for the Poisson likelihood considered in Section~\ref{sec:dc:extension} below, $\phi_0 = q-p$ and $\phi_1 = \log(p/q)$. Then, in matrix notation 
	%	\begin{align*}
	%	\tau_{1ik}
	%	\;&\propto_k\ \pi_{1k} \exp \Big[ \phi_1 \sum_j  \tau_{2jk}A_{ij} + \phi_0\tauc_{2k}+ \beta_{1ik}(\Gamt, \sigma^2)  \Big]\quad i=1,\dots,N_1
	%	\end{align*}
	%	or in matrix notation
	\begin{align*}
	\tau_{1ik}
	\;&\propto_k\ \pi_{1k} \exp \Big(\phi_1 [A \tau_2]_{ik} + \phi_0\tauc_{2k}+ \beta_{1ik}(\Gamt, \sigma^2)  \Big)\quad i=1,\dots,N_1
	\end{align*}
	where $[A \tau_2]_{ik} = \sum_j \tau_{2jk} A_{ij}$. When $A$ is sparse, the matrix-vector product $A\tau_2$ can be computed quite fast. Letting $\beta_{r} = (\beta_{rik})_{ik} \in \reals^{N_r \times K}$, we have the $\tau$-update in vector form:
	\begin{align}\label{eq:tau:update:matrix:form}
		\tau_1 = \texttt{row-softmax}\big[ \phi_1 A \tau_{2} + \phi_0 \onev_{N_1} (\tauc_2 + \log \pi_1)^T + \beta_1 \big]
	\end{align}
	and similarly for $\tau_2$. Here, \texttt{row-softmax} is the row-wise softmax operator, applying~\eqref{eq:softmax:op} to each row of a matrix.
	
	\medskip
	Further improvements are possible in estimating $p$ and $q$. Note that we can write $\tauc_{rk} := [\tau_r^{T}\onev_{N_r}]_k$. Let us treat $\tauc_r$ as a $K$-vector, with elements $\tauc_{rk}$. Then, we have  $\sum_{ij} \gamma_{ij}(\tau) A_{ij} = \tr(\tau_1^T A \tau_2)$ and $\sum_{ij} \gamma_{ij}(\tau) =\ip{\tauc_1, \tauc_2}$ , and
	\begin{align}\label{eq:improved:p:q:estim}
		p = \frac{\tr(\tau_1^T A \tau_2)}{\ip{\tauc_1, \tauc_2}}, \quad q = \frac{r \rho - p}{r - 1}, \quad \text{where}\; \frac1r = \ip{\frac{\tauc_1}{N_1},\frac{\tauc_2}{N_2}} \; \text{and}\; 
		\rho = \frac1{N_1N_2} \sum_{ij} A_{ij}.
	\end{align}
	Note that $\rho$ is the density of the graph (or $A$) and that $\ip{\tauc_1, \tauc_2} = \tr(\tau_1^T E \tau_2)$ where $E$ is the all-ones matrix of appropriate dimension. Finally, let us define $\beta'_{rik} = \tr\big((\Sigt_k)_{rr}\big) + \|x_{ri} - \mut_{rk}\|^2$ noting that $\beta_{rik} = \frac1{2\sigma_r^2}\beta'_{rik}$ from definition~\eqref{eq:bet:def}. Letting $\beta'_r = (\beta'_{rik})_{ik}$ be its matrix form, we can write the update for $\sigma_r^2$ compactly as
	\begin{align}\label{eq:improved:sigma2:estim}
		\sigma_r^2 = \frac1{N_r d_r} \sum_{ik} \tau_{rik} \beta'_{rik} = 
		\frac{1}{N_r d_r} \tr(\tau_r^T \beta'_{r}).
	\end{align}

\subsection{Extension to the degree-corrected case}\label{sec:dc:extension}
%The general log-likelihood based on $A$ only will be
%\begin{align*}
%\sum_{ij} \sum_{k\ell} z_{1ik} z_{2j\ell}
%\;g(\Psi_{k\ell},A_{ij})
%\end{align*}
%where $g(p,\alpha) =  \alpha \log \frac{p}{1-p} + \log (1-p)$. In the special case, by breaking up over $k = \ell$ and $k \neq \ell$ we obtain
%\begin{align*}
%\sum_{ij} \Big[ \sum_{k} z_{1ik} z_{2jk}
%\;g(p,A_{ij}) + \sum_{k\neq \ell} z_{1ik} z_{2j\ell}
%\;g(q,A_{ij}) \Big] = 	\sum_{ij} \Big[y_{ij} g(p,A_{ij}) + (1-y_{ij}) g(q,A_{ij})\Big]
%\end{align*}
%where we have used $\sum_{k\ell} z_{1ik} z_{2j\ell} = 1$.
In this case the network-dependent part of the likelihood is replaced with
$$\ell_1(\Psi,\theta ) =  \sum_{ij} \sum_{k\ell} z_{1ik} z_{2j\ell}
\;g(\theta_{1i} \theta_{2j}\Psi_{k\ell},A_{ij}).$$
 Again, we focus on the case where $\Psi_{kk} = p$ and $\Psi_{k\ell} = q$ for $k \neq \ell$. Recalling the notation $h(p,q,\alpha) = g(p,\alpha) - g(q,\alpha)$, we have
\begin{align}\label{eq:ell1:dc}
\ell_1(\Psi,\theta )  = \sum_{ij} \big[y_{ij} \,h\big(p \theta_{1i} \theta_{2j},\,q\theta_{1i} \theta_{2j},\, A_{ij}\big) + g\big(\theta_{1i} \theta_{2j} q,A_{ij} \big)\big].
\end{align}
 We assume a Poisson log-likelihood with $g(p,\alpha) = \alpha \log p - p$ for which $h(p,q,\alpha) = \alpha \log (p/q) + q-p$. We also recall the normalization assumption~\eqref{eq:theta:normalization}, $\sum_{i} \theta_{ri} z_{rik} = \sum_{i}z_{rik}$, which implies $\sum_{ij} y_{ij} \theta_{1i} \theta_{2j} = \sum_{ij} y_{ij}$ and $\sum_{ij} \theta_{1i}\theta_{2j} = N_1N_2$. Using these two implications, the first term in~\eqref{eq:ell1:dc} simplifies to
\begin{multline*}
\sum_{ij} y_{ij}\big[ (q-p) \theta_{1i} \theta_{2j} + A_{ij} \log(p/q)\big] + \sum_{ij} \big[ {-\theta_{1i}\theta_{2j} q} + A_{ij} \log (\theta_{1i}\theta_{2j} q)\big]\\ = \sum_{ij} y_{ij}\big[ (q-p)+ A_{ij} \log(p/q)\big]  - qN_1N_2 + \sum_{ij}  A_{ij} \log (\theta_{1i}\theta_{2j} q)
\end{multline*}
%where we have used the implications of the constraints: $\sum_{ij} y_{ij} \theta_{1i} \theta_{2j} = \sum_{ij} y_{ij}$, and $\sum_{ij} \theta_{1i}\theta_{2j} = N_1N_2$. 
Let $\phi_0 = q-p$ and $\phi_1 = \log(p/q)$. 

\subparagraph{$\tau$-update.} Let us fix $\theta$ and obtain updates for the label posteriors $\tau$. Taking expectations of the objective and the constraints, the $\tau$-portion of the update is equivalent to maximizing 
\begin{align*}
\sum_{ij} \gamma_{ij}(\tau) \big[\phi_0 + A_{ij} \phi_1 \big] 
+ \sum_{rik} \tau_{rik} \big[ \beta_{rik}(\Gamt, \sigma^2) + \log \frac{\pi_{rk}}{\tau_{rik}}\big]
\end{align*}
subject to constraints $\sum_{i} \tau_{rik} (\theta_{ri} -1) = 0$ for all $k$. Note that these constraints follow by taking expectations of the normalization constraints~\eqref{eq:theta:normalization} under $Z \sim q$. Focusing on updating $\tau_{1k}$, we have the following optimization problem:
\begin{align}\label{eq:dc:tau:update:optim}
\begin{array}{ll}
\displaystyle \max_{\tau_1} & \sum_{i,k}    \tau_{1ik} \big(\sum_j  \tau_{2jk}\big[\phi_0 + A_{ij} \phi_1 \big]  + \xi_{1ik} - \log \tau_{1ik} \big)  \\	
\text{subject to} & \sum_{i} \tau_{1ik} (\theta_{1i} -1) = 0
\end{array}
\end{align}
where $\xi_{1ik} = \beta_{1ik} + \log \pi_{1k}$ as before.
In Appendix~\ref{sec:dc:tau:details}, we derive a dual ascent algorithm for solving this problem with the following updates:
\begin{align}\label{eq:dc:tau:update}
	\begin{split}
	\tau_1(\lambda) &= \texttt{row-softmax}\big[ \phi_1 A \tau_{2} + \phi_0 \onev_{N_1} (\tauc_2 + \log \pi_1)^T + \beta_1  + (\theta_1-\onev) \lambda^T\big],\\
	\lambda^{+} &= \lambda - \mu [\tau_1(\lambda)]^T (\theta_1-\onev).
	\end{split}
\end{align}
Here, $\lambda\in \reals^K$ is the dual variable, $\lambda^+$ is its update, $\theta_1 =(\theta_{1i})\in \reals^n$, and $\mu$ is a proper step-size. These two iterations are repeated till convergence, before updating other parameters. Note that when $\theta_1 = \onev$, the dual ascent algorithm reduces to the single step of~\eqref{eq:tau:update:matrix:form} obtained for the case without degree correction.

\subparagraph{$\theta$-update.}
Let us now fix $\tau$ and the rest of the parameters and optimize over $\theta$. The relevant portion of the objective function is 
\begin{align*}
\sum_{ij} \gamma_{ij}(\tau) \big[ q-p + A_{ij} \log(p/q)\big] - qN_1N_2 + \sum_{ij}  A_{ij} \log (\theta_{1i}\theta_{2j} q).
\end{align*}
Consider optimizing over $(\theta_{1i})$, which is equivalent to maximizing $\sum_{i}  d_{1i} \log \theta_{1i}$, subject to $\sum_{i} \tau_{1ik}(\theta_{1i}-1) = 0$ for all $k$, and $\theta_{1i} \ge 0$ for all $i$. This problem is suitable for an application of the Douglas--Rachford (DR) splitting algorithm~\cite{Douglas1956,OConnor2014}. Let $f_t(\cdot ; d) : \reals_+^n \to \reals_+^n$ with $d \in \reals_+^n$ and $t > 0$, be defined by 
\begin{align}\label{eq:prox:of:log}
	[f_t(x;d)]_i := \frac12 \big[ x_i +  \sqrt{ x_i^2 + 4 t d_i } \big].
\end{align}
Also, let $H_1:= \tau_1 (\tau_1^T\tau_1)^{-1} \tau_1^T$ be the projection operator onto the span of $\tau_1 \in \reals^{N_1 \times K}$. The algorithm performs the following iterations for updating $(\xi_1,\theta_1)$ to $(\xi_1^+,\theta_1^+)$:
\begin{align}\label{eq:dc:theta:update}
	\begin{split}
	\theta_1^+ &= f_t(\xi_1; d_1) \\
	\xi_1^+ &= \theta_1^+ - H_1(2 \theta_1^+ - \xi_1 - \onev) 
	\end{split}
\end{align}
where $\xi_1 \in \reals^{N_1}$ is an auxiliary variable, $d_1 = (d_{1i}) \in \reals^{N_1}$ collects the degrees of side $1$,  and $t > 0$ is the fixed parameter of DR algorithm (often set to $1$). The details for the derivation of this algorithm can be found in Appendix~\ref{sec:dc:theta:details}. The same updates apply to $\theta_2$, replacing subscript 1 with 2.

\begin{rem}
	Note that if $\tau_1 = (\tau_{1ik})$ was a hard label assignment, then the optimization for $(\theta_{1i})$ would have a simple solution. To see this, let $C_{k}(\tau_1)$ be the $k$th cluster of hard label $\tau_1$. Then, the optimal value of $\theta_1$ is given by 
	\begin{align*}
		\theta_{1i} = \frac{d_{1i}}{ \sum_{i' \in C_k(\tau_1)} d_{1i'} }, \quad \text{for}\; i \in C_{1k}(\tau_1).
	\end{align*}
	This is in fact, the choice in profile-likelihood approaches to fitting DC-SBM, where one replaced $\theta_1$ with this optimal value, in addition to optimal values of edge probabilities and class priors, all in terms of $\{C_{1k}(\tau_1)\}$, and then optimize the resulting profile likelihood over $\{C_{1k}(\tau_1)\}$. See for example ~\cite{Karrer2011}. Our approach here, allows us to keep a soft-label assignment $\tau_1$ throughout the algorithm, viewing optimization over $\theta_1$ as another phase of block-coordinate ascent for the overall constrained optimization problem.
\end{rem}

\subparagraph{$(p,q)$-update.}
%\paragraph{Optimizing over $\theta_{ri}$ and $(p,q)$.}
%Here, $d_{1i} = \sum_{j} A_{ij}$. Optimizing the Lagrangian $\sum_{i}  \big[d_{1i} \log \theta_{1i} - \sum_k \xi_k \tau_{1ik}(\theta_{1i}-1)\big]$ (keeping the constraint $\theta_{1} \ge 0$), via a dual ascent algorithm leads to the following updates
%\begin{align*}
%\theta_{1i}(\xi) = \max\Big\{0,\frac{d_i}{\sum_k \xi_k \tau_{1ik}} \Big\}, \quad \xi_k^+ = \xi_k +\alpha \sum_i \tau_{1ik}\big[\theta_{1i}(\xi)-1\big], \; k \in [K]
%\end{align*}
To optimize over $p$ and $q$ we note that because of the Poisson model, $p$ and $q$ are not tied together and the only constraint we have is $p,q \ge 0$. Optimizing over $p$ is equivalent to maximizing   $-p \sum_{ij}\gamma_{ij} + \log p \sum_{ij}\gamma_{ij}A_{ij}$ and optimizing over $q$, is equivalent to maximizing over $-q \sum_{ij}(1-\gamma_{ij}) + \log q \sum_{ij}(1-\gamma_{ij})A_{ij}$, both giving the same updates as those in~\eqref{eq:pi:Q:update}.
%\begin{align*}
%p = \frac{\sum_{ij}\gamma_{ij}A_{ij}}{ \sum_{ij}\gamma_{ij}}, \quad q = \frac{\sum_{ij}(1-\gamma_{ij})A_{ij}}{\sum_{ij}(1-\gamma_{ij})}
%\end{align*}
%as before.

\begin{algorithm}[t]\label{alg:1}
	\caption{Variational block coordinate ascent for fitting \mbisbm}
	\label{CHalgorithm}
	\begin{algorithmic}[1]
		%\Input Incidence matrix $A$ and covariate matrices $X_1$ and $X_2$.
		\State Initialize $\tau_r$  using \bisc, and $\theta_r = \onev_{N_r}$ for $r=1,2$. Pick tolerance $\eps \in (0,1]$.
		\State Initialize $\Sigma, \Sigt_k$ with $I_{d_1+d_2}$ and $\mu, \mut_k$ with $0$, for $k \in [K]$, and $\sigma_r^2=1$ for $r=1,2$.
		\While {not \texttt{CONVERGED}, nor maximum iterations reached}
		
		\State Update $(p,q)$ using~\eqref{eq:improved:p:q:estim} and $\pi_r, r=1,2$ using~\eqref{eq:pi:Q:update}.
		\State Update $(\phi_0,\phi_1) \gets (q-p, \,\log (p/q))$. 
		\If {DC-version}
		\State Update $\theta_r$ by repeating~\eqref{eq:dc:theta:update} till convergence.
		\EndIf
		
		\State Update $\beta_r, r=1,2$ using~\eqref{eq:bet:def}.
		\State $\tau_r^{\text{old}} \gets \tau_r$, $r=1,2$.
		\State Update $\tau_1$ by repeating~\eqref{eq:dc:tau:update} till convergence.
		\State Update $\tau_2$ by repeating~\eqref{eq:dc:tau:update}, with subscripts 1 and 2 switched and $A$ replaced with $A^T$, till convergence.
		\medskip
		\State Update the following for for $r=1,2$ and $k \in [K]$: \Comment{Update parameters}
		\State $\tauc_{rk} \gets \sum_{i=1}^{N_r}\tau_{rik}$, and $D^{-1}_k \gets \diag\big(\tauc_{1k}I_{d_1}/\sigma^2_1,
		\tauc_{2k}I_{d_2}/\sigma^2_2\big)$, 
		\State
		$\xc_{rk} \gets \sum_{i=1}^{N_r} \tau_{rik} x_{ri}$, and $\muc_{rk} \gets \xc_{rk}/\tauc_{rk}$.
		\State Update $ \Sigt_k \gets (D^{-1}_k +\Sigma^{-1})^{-1}$ and $\mut_k \gets \Sigt_k (D_k^{-1}\muc_k +\Sigma^{-1}\mu)$.
		\State Update $\mu \gets \frac1K \sum_k \mut_k$ and $\Sigma \gets \frac{1}{K} \sum_{k=1}^K \big[\Sigt_k + (\mut_k - \mu) (\mut_k - \mu)^T \big]$.
		\State Update $\sigma_r^2, r=1,2$ using~\eqref{eq:improved:sigma2:estim}.
		%	Evaluate $\beta_{rik}(\Gamt, \sigma^2)$ using~\eqref{eq:bet:def}.
		\State \texttt{CONVERGED} $  \gets \big[\max\{\delta_1,\delta_2\} < \eps/K\big]$,\; where $\delta_r := \mnorm{\tau_r -\tau_r^{\text{old}}}_\infty, r=1,2$
		\EndWhile
	\end{algorithmic}
	\label{alg:mbisbm}
\end{algorithm}

\subsection{Summary of the algorithms}
Algorithm~\ref{alg:mbisbm} summarizes the updates for fitting the proposed matched bipartite SBM model, to which we refer as \mbisbm. We have stated the general form of the algorithm with degree correction (DC) and covariates. Note for example that if no degree-correction is desired, $\theta_r$ remains equal to $\onev_{N_r}$ and the iterations~\eqref{eq:dc:tau:update} in steps 11 and 12, for updating the label distributions ($\tau_r$), automatically reduce to the simple update~\eqref{eq:tau:update:matrix:form} (that is, the iterations converge in one step.). There are other variations available. For example, if desired, step~5 can be replaced with $(\phi_0,\phi_1) \gets(\log \frac{1-p}{1-q}, \log\frac{p(1-q)}{q(1-p)})$ to use values based on a Bernoulli likelihood instead of a Poisson. Empirically, we have not found much difference between the two. With minor modifications, the algorithm can be used when only one side has covariates or without covariates for either side.% \aaa{Need to test reordering params. estimate.}

%If there are no node covarites in the bipartite network, we either can set $\Sigma = 0$ and use the first algorithm, or use the simplified form detailed in Algorithm~\ref{alg:2}.

%\begin{algorithm}[H]\label{alg:2}
%\caption{Variational block coordinate ascent when there are no node covarites}
%\label{NOXalgorithm}
%\begin{algorithmic}[1]
%%\Procedure{CH\textendash Election}{}
%\Function{Update variational posterior}{$\pi,(p,q)$}
%
%\State Update $\tau_{1i}$ using:
%  %
%  $\tau_{1ik} \,\propto_k\,  \pi_{1k} \exp \big[ \sum_j  \tau_{2jk}\, h(p,q;A_{ij}) \big],\; i=1,\dots,N_1$.
%
%\State Update $\tau_{2i}$ using:
%  %
%  $\tau_{2ik} \,\propto_k\,  \pi_{2k} \exp \big[ \sum_j  \tau_{1jk}\, h(p,q;A_{ij})  \big],\; i=1,\dots,N_2$.
%
%\Return $\tau$
%\EndFunction
%
%\Function{Update likelihood parameters}{$\tau$}
%
%\State Update $\pi_r, r=1,2$ and $(p,q)$, in any order, using~\eqref{eq:pi:update} and~\eqref{eq:Q:update}.
%
%% \State Update $\pi_{rk} =\frac{\tauc_{1k}}{\sum_l \tauc_{1l}} \quad r = 1,2$.
%
%% \State Update $p = \frac{\sum_{ij} \gamma_{ij}(\tau) A_{ij}}{\sum_{i,j}\gamma_{ij}(\tau)}$.
%
%% \State Update $q = \frac{\sum_{i,j} \big(1-\gamma_{ij}(\tau)\big) A_{ij}}{\sum_{i,j} \big(1-\gamma_{ij}(\tau)\big)}$.
%
%\Return $\pi,(p,q)$
%\EndFunction
%\end{algorithmic}
%\end{algorithm}

%The initialization of the algorithms is discussed in Section~?.
\subsubsection{Initialization of the algorithms}\label{subsubsec:initialization}

It is known that variational inference is sensitive to initialization \cite{Blei2016}.
% Thus, it is important to choose the choose the initial labels carefully. 
%
The main component of the algorithm that needs careful initialization is the matrix of (approximate) posterior node labels $\tau = [\tau_1;\tau_2]$.
We propose to initialize $\tau$ using a bipartite spectral clustering algorithm, \bisc for short, which is a variant of the approach of~\cite{Dhillon2001}. The difference between our version and that of~\cite{Dhillon2001} is that  \cite{Dhillon2001} does not normalize the rows of the singular vectors and keeps top $\ceil{\log_2 K}$
singular vectors, as opposed to $K$. We have found that row normalization greatly improves the performance, and it is fairly standard in usual (non-bipartite) Laplacian-based spectral clustering. Algorithm~\ref{alg:bisc} summarizes our version.

\newcommand\Ut{\widetilde{U}}
\newcommand\Vt{\widetilde{V}}
\begin{algorithm}[t]
	\caption{Bipartite Spectral Clustering (\bisc)}
	%\label{BiSC}
	\begin{algorithmic}[1]
		%\Procedure{CH\textendash Election}{}
		{}
		\State Input: bi-adjacency matrix $A \in \{0,1\}^{N_1 \times N_2}$.
		\State Let $D_1 = \diag(\sum_{j}A_{ij}, i=1,\dots,N_1)$ and $D_2 = \diag(\sum_{i} A_{ij}, j=1,\dots,N_2)$.
		%be the diagonal matrices of the degrees of the two sides.
		\State Form  $L = D_1^{-1/2} A D_2^{-1/2}$.
		\State Let $L = USV^T$ be the SVD of $L$ truncated to $K$ largest singular values ($U \in \reals^{N_1 \times K}$ and $V \in \reals^{N_2 \times K}$). %Compute $K$ left and right singular vectors of $A_n$, $u_1,\cdots, u_{K}$ and $v_1, \cdots, v_{K}$ 
		
		\State  Normalize each row of $U$ and $V$ to unit $\ell_2$ norm to get $\Ut$ and $\Vt$, resp., then form
		\begin{align*}
			Z = \begin{bmatrix}
				D_1^{-1/2} \Ut \\
				D_2^{-1/2} \Vt 
				\end{bmatrix}. %\in \reals^{(N_1+N_2) \times K}
		\end{align*}
			
		\State  Run $k$-means  with $K$ clusters on the rows of $Z$.
	\end{algorithmic}
	\label{alg:bisc}
\end{algorithm}

In simulation studies, we also consider a couple of competing initializations. One interesting choice is to use the usual Laplacian-based spectral clustering, which is oblivious to the bipartite nature of the problem. For this choice, we use the regularized version described in~\cite{Amini2013} as \scp. Note that \scp will be applied to the (symmetric) adjacency matrix $\At$; see~\eqref{eq:At:Zt:Psit}. It is also possible to regularize \bisc using similar ideas, though surprisingly, we found the simple unregularized version of \bisc is quite robust, and we have used this simple version when reporting results.

 When working with simulated data, since we have access to the true labels, we will also consider a perturbed version of truth as an initialization. Specifically, we generate from a mixture of the true label distribution and Dirichlet noise, i.e. $\tau_{ri} = \omega z_{ri} +(1- \omega) \eps_{ri}$ where $\eps_{ri} \sim \text{Dir}(.5 {\bf{1}}_K)$. 
 %\aaa{Not a true mixture.} 
 Here, we treat $z_{ri}$, the true label of node $i$ in group $r$, as a distribution on the $K$ labels. Parameter $\omega \in [0,1]$ measures the degree of initial perturbation towards noise. For example, with $\omega = 0.1$, about $10\%$ of the initial labels are correct. We will refer to this initialization as \texttt{$\sim$rnd}, for approximately random. This initialization will act as a proxy for a ``good enough'' initialization and allows us to study the behavior of our variational inference procedure decoupled from specific initializations produced by spectral clustering (or other methods). 
 
 Let us say a few words about the initialization of other parameters. The algorithm is moderately sensitive to the initialization of $p$, $q$ and $\pi_r, r=1,2$. When the quality of the initial labels ($\tau_1$ and $\tau_2$) is good, one can initialize these parameters, based on $(\tau_r)$, by running the corresponding updates first, as is done in Algorithm~\ref{alg:mbisbm}, lines 4--5. This is the form we suggest in practice when using the \bisc initialization, and is used in the real data application (Section~\ref{sec:realdata}). However, when the quality of the initial labels is not good, for example, when using \scp in the simulations, $p$ and $q$ obtained based on initial $(\tau_r)$ can become quite close leading to numerical instability. We have found in those cases that initializing these parameters with fixed values, say $(p,q) = (0.1,0.01)$ and $\pi_r$ set to uniform distribution of $[K]$, greatly improves the stability of the algorithm. (This is since even one iteration of the algorithm could significantly improve upon initial labels.) This fixed initialization of $(p,q,\pi_r)$, independent of $\tau_r$ is what we have used in Monte Carlo simulations on synthetic data, when comparing different label initializations (Section~\ref{sec:simulation}).
 
 %\aaa{Need link to the code.}
 %we show simulations with $\omega = 0.99$, where noise effectively dominates the initialization.

%The algorithm is not as sensitive to initial values of other parameters.
%One could initialize means by zero and covariance matrices by identity. Initial value of $\tau$ can be used to estimate other parameters of the network such as $(p,q)$ by running the corresponding updates first. Although, we have found that initializing $(p,q)$ to fixed values, say $(0.1,0.01)$ gives superior results. We refer to the code for more details.

%!TEX root = sbm_bip_arxiv.tex
\section{Simulations}\label{sec:simulation}
In this section, we  show that effectiveness of our proposed algorithm in recovering the true labels in synthetic bipartite networks. For the most part, we generate data from our proposed model~\eqref{eq:model:1}--\eqref{eq:model:2}. In the plots investigating the degree-corrected version of the algorithm, we generate from the degree-corrected version of the network described in subsection~\ref{sec:dc:model}.

%Throughout the simulations, we have initialized with $(p,q) = 0.01(8,1)$, $\Sigma = I_{d_1 + d_2}$, $\mu = 0$, $\sigma^2_r = 10$, for $r=1,2$, and $\Sigt_k = 0$, for $k \in [K]$.

\subparagraph{Data generation.}
Key parameters regarding covariate generation in~\eqref{eq:model:1} are $(\mu,\Sigma)$ for generating $v_{*k}$. We take $\mu = 0$ and $\Sigma = \nu I_{d_1 + d_2}$ throughout. Varying $\nu$ (or dimensions $d_r$) changes the information provided by the covariates (Appendix~\ref{sec:extra:sim}). Larger $\nu$ causes $v_{*k}$ to be further apart, hence covariates are more informative. $\nu = 0$ corresponds to zero covariate information. We also fix covariate noise levels at $\sigma_r = 0.5$ for $r=1,2$, and the network size at $N= (N_1,N_2) = (200,800)$.

% Covariate dimensions are $(d_1,d_2) = (35,30)$. The network size $N= (N_1,N_2)$ is varied and indicated in each case.

Key parameters regarding network generation in~\eqref{eq:model:2} are $p$ and $q$. We reparametrize our planted partition model in terms of expected average degree
\begin{align}\label{eq:avg:ex:degree:formula}
	\lambda = \frac{2 N_1 N_2}{N_1+ N_2}\big[q + (p-q)\sum_k \pi_{1k} \pi_{2k} \big]
\end{align}
(see Appendix~\ref{sec:expec:avg:degree}) and the out-in-ratio $\alpha = q/p \in[0,1)$. Estimation becomes harder when $\lambda$ decreases (few edges) or when $\alpha$ increases (communities are not well separated). We fix $\alpha = 1/7$ and vary $\lambda$ in the subsequent simulations.

When generating from the degree-corrected version, we draw $(\theta_i, i\in C_{rk})$ from a Pareto (i.e., power-law) distribution, for each $k \in [K]$ and $r=0,1$. Real networks are frequently reported to have power-law degree distributions~\cite{Barabasi1999}. The Pareto$(a,R)$ in general has density $\theta \mapsto (a R^{a}) \theta^{-a-1} 1\{\theta > a\}$, with mean $ a R/(a-1)$ for $a > 1$ and variance $R^2 a /[(a-1)^2 (a-2)]$ for $a > 2$. Since $|C_{rk}|^{-1} \sum_{i \in C_{rk}} \theta_i$ will be approximately equal to the mean of the Pareto, and we want this average to be $1$, we have to choose $R = (a-1)/a$, that is, we generate $\theta_i \iid \text{Pareto}(a,(a-1)/a)$ for $i \in C_{rk}$. (To comply with our model specification, we further normalize $\theta_i$ for their within-community averages to be exactly one; this will have little effect since the average is already close to $1$.) The variance in this case is $[a(a-2)]^{-1}$ which is decreasing in $a$ over $(2,\infty)$. In order to get maximum degree heterogeneity (i.e., the worse case in terms of the difficulty of fitting), we take $a = 2$, corresponding to infinite variance. We note that expression~\eqref{eq:avg:ex:degree:formula} remains valid for the degree-corrected case without modification, assuming normalization~\eqref{eq:theta:normalization}; see Appendix~\ref{sec:expec:avg:degree}. 

% We first need to set the values for $K, N$ and $d$ and initialize $\mu, \sigma^2, \Sigma , p, q$ and $\pi$. For simplicity we initialize $\Sigma$ as $\nu I_{d_1 + d_2}$ for some $\nu$. 
% %The bigger $\nu$ is, the more separated $v$ and as a result 
%  Then we generate the true labels and the adjacency matrix $A$ from \text{Ber($p$)} for the matched clusters and \text{Ber($q$)} for the unmatched ones. After generating $v_{k*}$ for $k\in[K]$ from $N(\mu,\Sigma)$, we generate $X_r$ from $N(v_{*r},\sigma_r^2 I_{d_r})$ for $r = 1,2$. 

   %and when block sizes vary. 
  
%\begin{figure}
%	\centering
%    \raisebox{-0.5\height}{\includegraphics[scale=.5]{figs2/adj_lambda1p4.eps}}
%    \raisebox{-0.5\height}{\includegraphics[scale=.45]{figs2/uneq_avd1p4_pert90_3.eps}}
%	\caption{Adjacency matrix (left), true labels, perturbed initial labels with $\omega = 0.9$ and estimated labels $\tau$ (right). The expected average degree of the network is $\lambda = 1.4$.}
%	\label{fig:block}
%\end{figure}

\newcommand{\tauh}{\widehat{\tau}}
\begin{figure}[ht]
	\centering
	\hskip2ex\includegraphics[width=.6\linewidth]{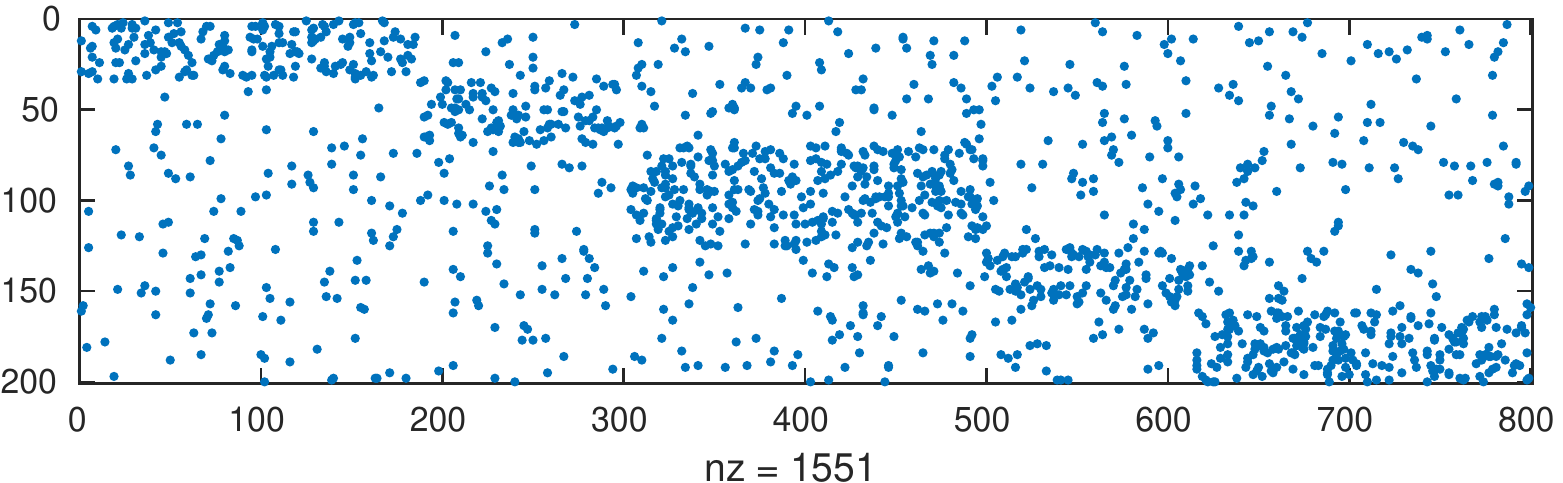}
	\vskip1ex
	\begin{tabular}{cccc}
		\includegraphics[width=0.103\linewidth]{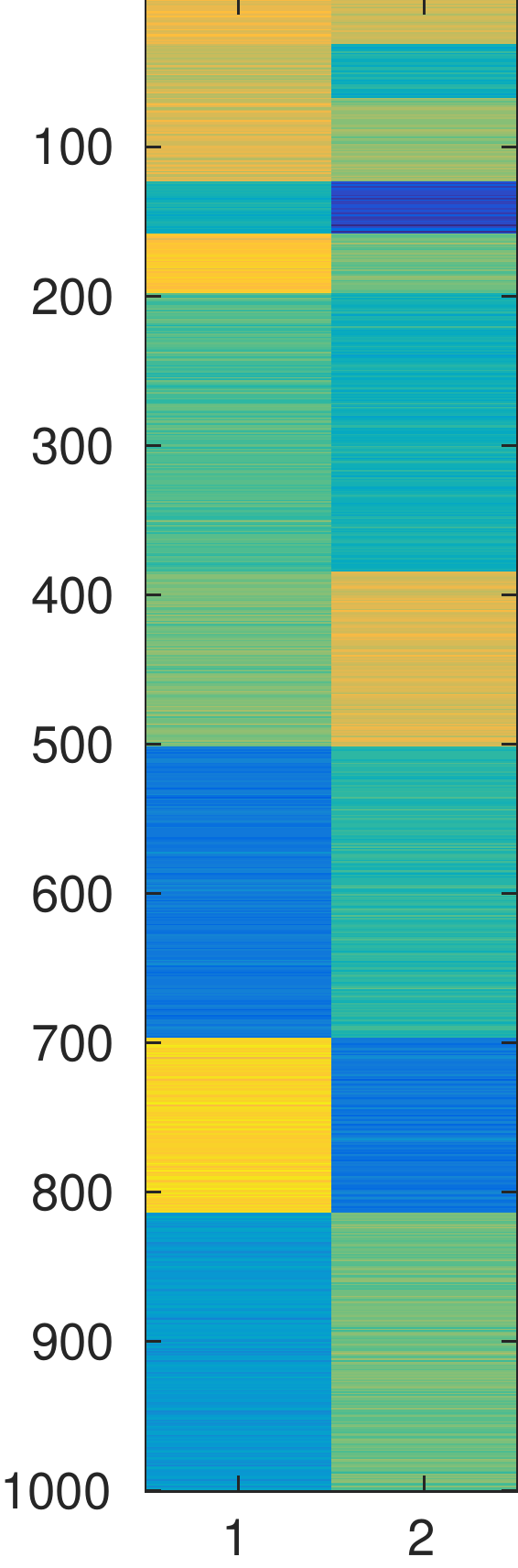}\hskip2ex &
		\includegraphics[width=0.2\linewidth]{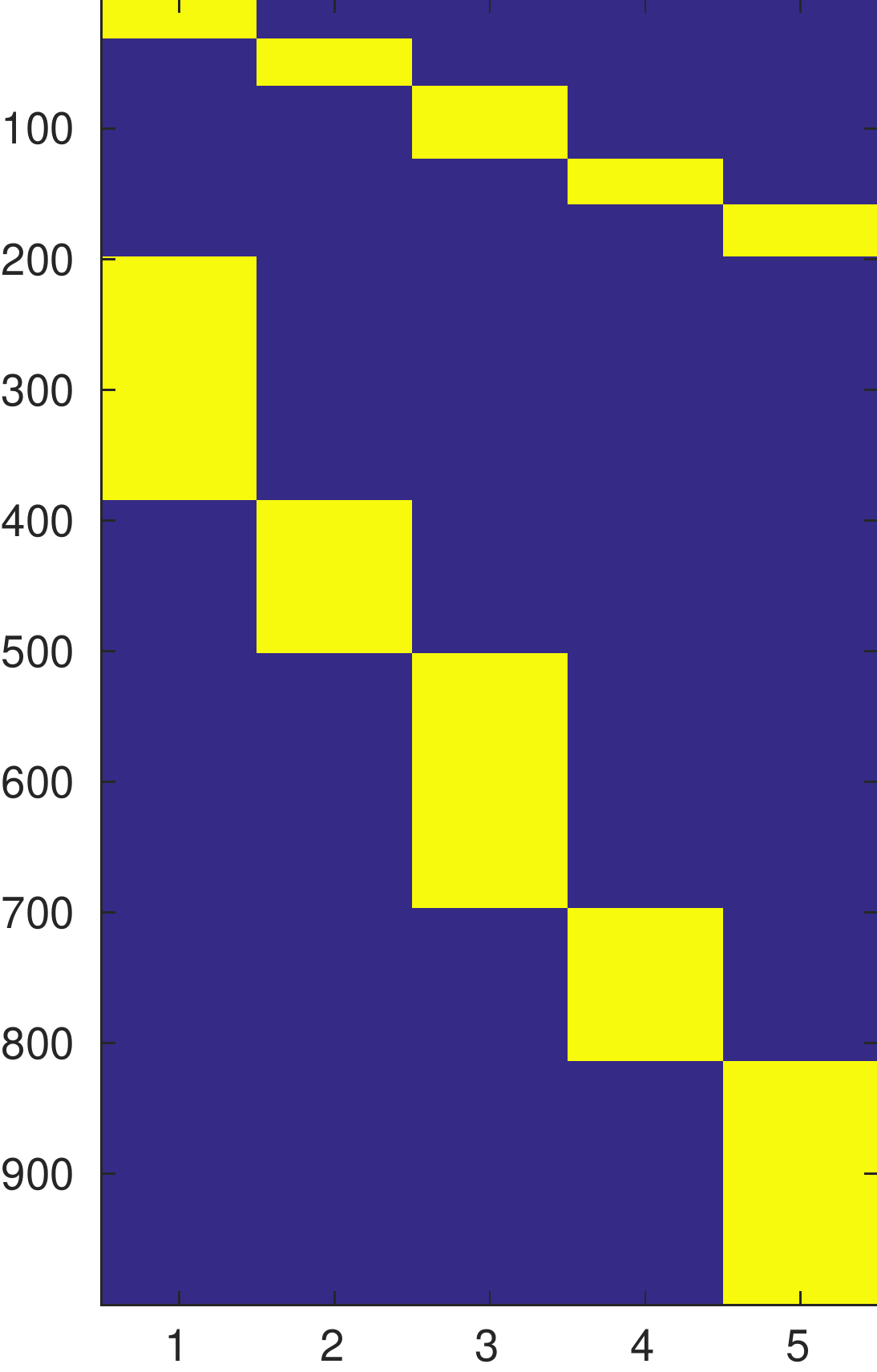}\hskip2ex &
		\includegraphics[width=0.2\linewidth]{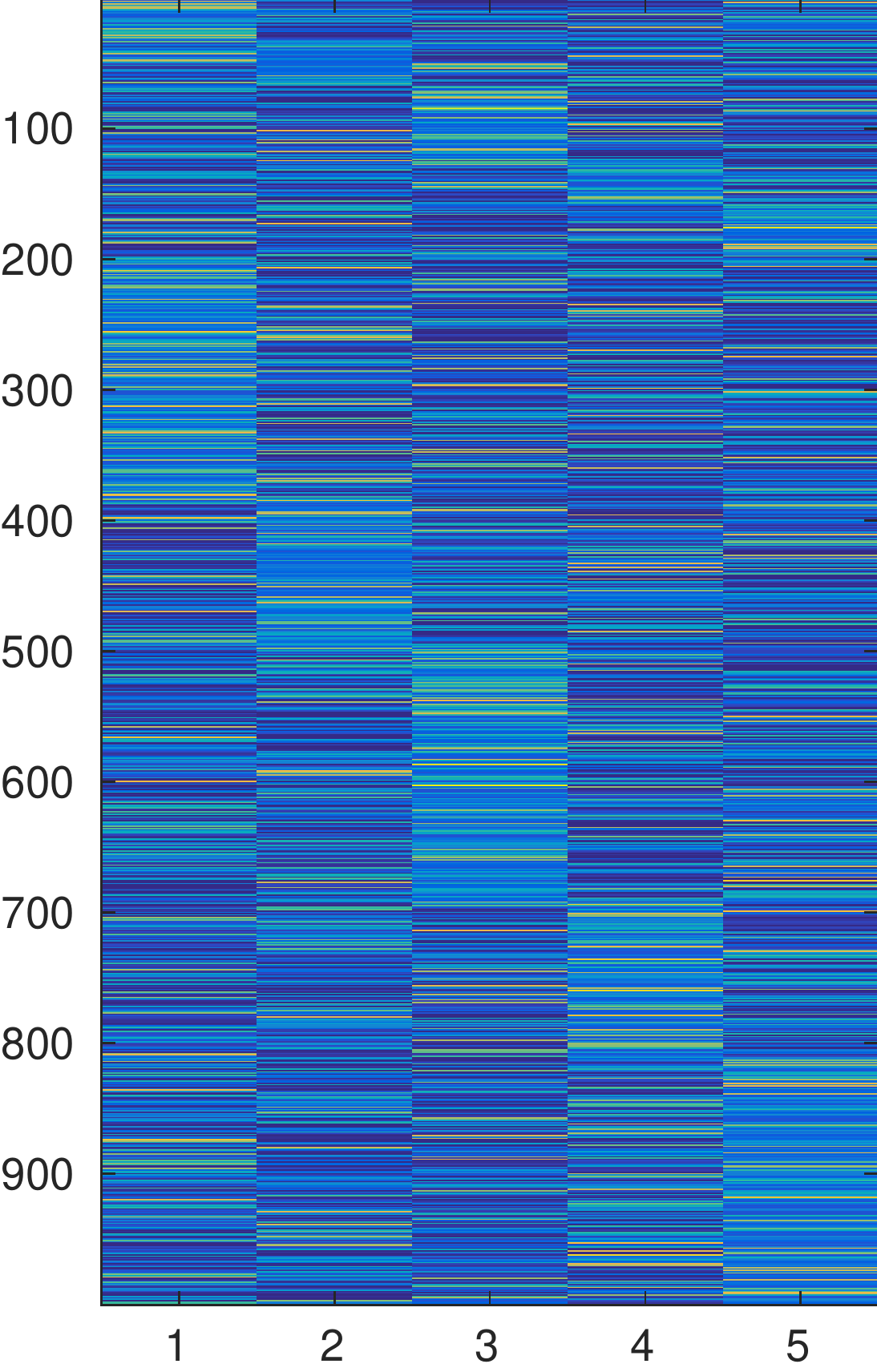}\hskip2ex &
		\includegraphics[width=0.2\linewidth]{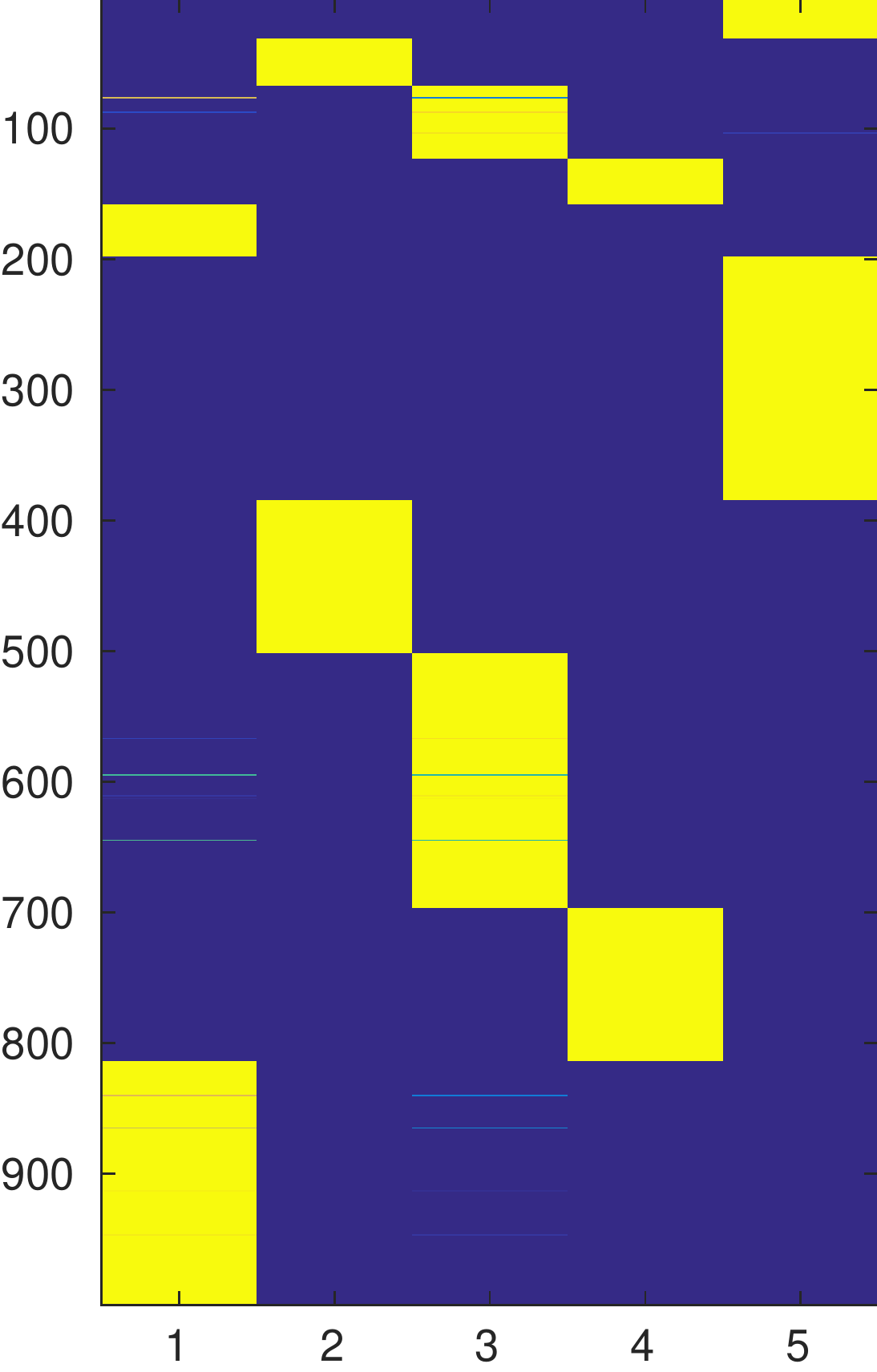}
		\\
		(a) & (b) & (c) & (d)
	\end{tabular} 
	\caption{Typical output of the algorithm. Top row: bi-adjacency matrix. Bottom row: (a) Concatenated covariate matrix $[X_1;X_2]$. (b) Concatenated true labels $[Z_1;Z_2]$. (c) Initial labels for the algorithm, Dirichlet-perturbed truth $0.1[Z_1;Z_2] + 0.9\,\text{Dir}(0.5\onev_K)$. (d) Concatenated output of the algorithm $[\tauh_1;\tauh_2]$. }% generated by demo3.m
	\label{fig:typical}
\end{figure}

\subparagraph{Matched NMI for evaluation.} In general, we measure the accuracy of the algorithms by the normalized mutual information (NMI) between the inferred and correct communities which is defined as the mutual information of the (empirical) joint distribution of the two label assignments divided by the joint entropy~\cite{Malvestuto1986}. NMI has a maximum value of 1 for perfect agreement and a minimum of 0 for no agreement. One could measure NMI individually between $Z_r \in \{0,1\}^{N_r \times K}$ (the true label matrix) and $\tau_r \in [0,1]^{N_r \times K}$ (the estimated soft-label matrix) for each $r=0,1$. However, one can also measure a \emph{matched NMI} by concatenating the labels of two sides vertically, i.e., forming $[Z_1;Z_2]$ and $[\tau_1;\tau_2]$ and measuring a single NMI between the resulting $(N_1 + N_2) \times K$ matrices. Some thought should convince the reader that this the natural way to also measure the effectiveness of the matching between the communities of the two sides: We have a matched NMI of $1$, if the true and estimated clusters on each side are in perfect agreement, and the matching between them is perfectly recovered.

\subparagraph{Typical output.} Figure~\ref{fig:typical} shows the typical output of the algorithm on the data generated from the model without degree correction (DC), i.e., $a = \infty$. Here the average degree is $\hat \lambda = 3.1$, $K= 5$, $\nu = 10$ and $d = (2,2)$, the dimensions of the covariates. Concatenated matrices of the true labels and the initial and final labels are shown. Vertical concatenation is used as discussed earlier, giving matrices of dimension $(N_1 + N_2) \times K$. Initial labels are the Dirichlet-perturbed truth with $\rho = 0.1$, i.e. $90\%$ noise, as discussed in Section~\ref{subsubsec:initialization}.
It is interesting to note that the output of the algorithm has recovered the communities with a nontrivial permutation of the community labels. 

In other words, the perturbation of the initial labels is high enough that the convergence of the algorithm cannot  simply be explained by a local perturbation analysis: the algorithm has not converged to the original labels, but to a perfectly valid permuted version of the original labels. That is, $\tauh_r \approx Z_r Q_r$ for $r=0,1$ where $Q_1$ and $Q_2$ are $K\times K$ permutation matrices. The matched NMI and misclassification rate for the algorithm are $0.98$ and $0.30\%$ in this case. If one runs $k$-means on the concatenated matrix of covariates $[X_1;X_2]$, disregarding the network information, one gets matched NMI and misclassification rate, $0.44$  and $38.50\%$. That is, the covariates themselves are not as informative alone as in combination with the network.

%\subsection{Evaluation of the algorithm}
%Fig.\ref{fig:block} shows a typical result of the algorithm. The adjacency matrix, the perturbed labels and the estimated $\tau$are shown for the model with $N_1 = 500, N_2 = 400$, $\alpha = .1494$, $\omega = .9$, $\lambda = 1.4$ . According to the figure, our algorithm can recover the labels (perfectly) even for a very sparse network (small $\lambda$).

\begin{figure}[t]
	\centering
	\includegraphics[width=.45\linewidth]{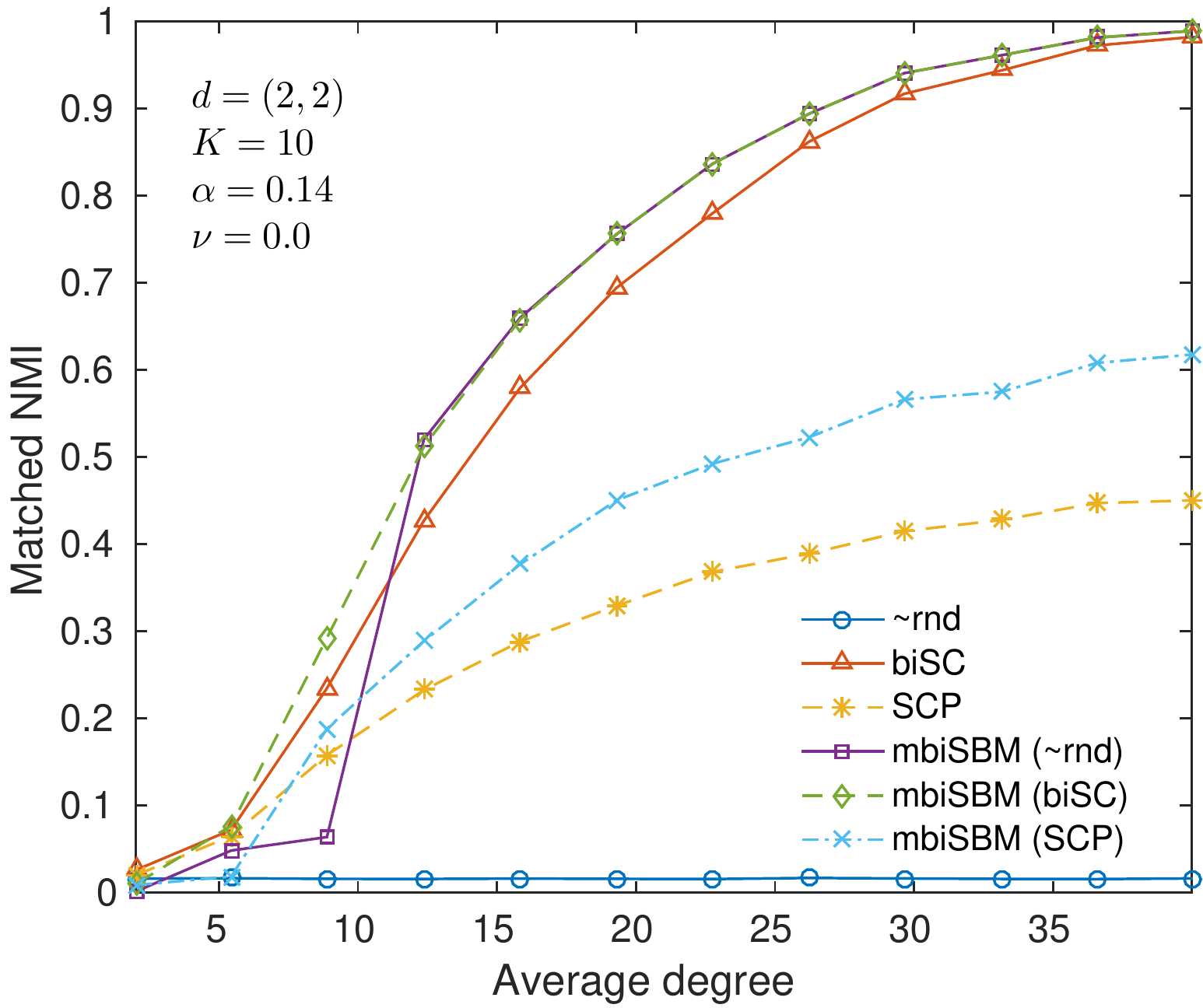}\hspace{1ex}
	\includegraphics[width=.45\linewidth]{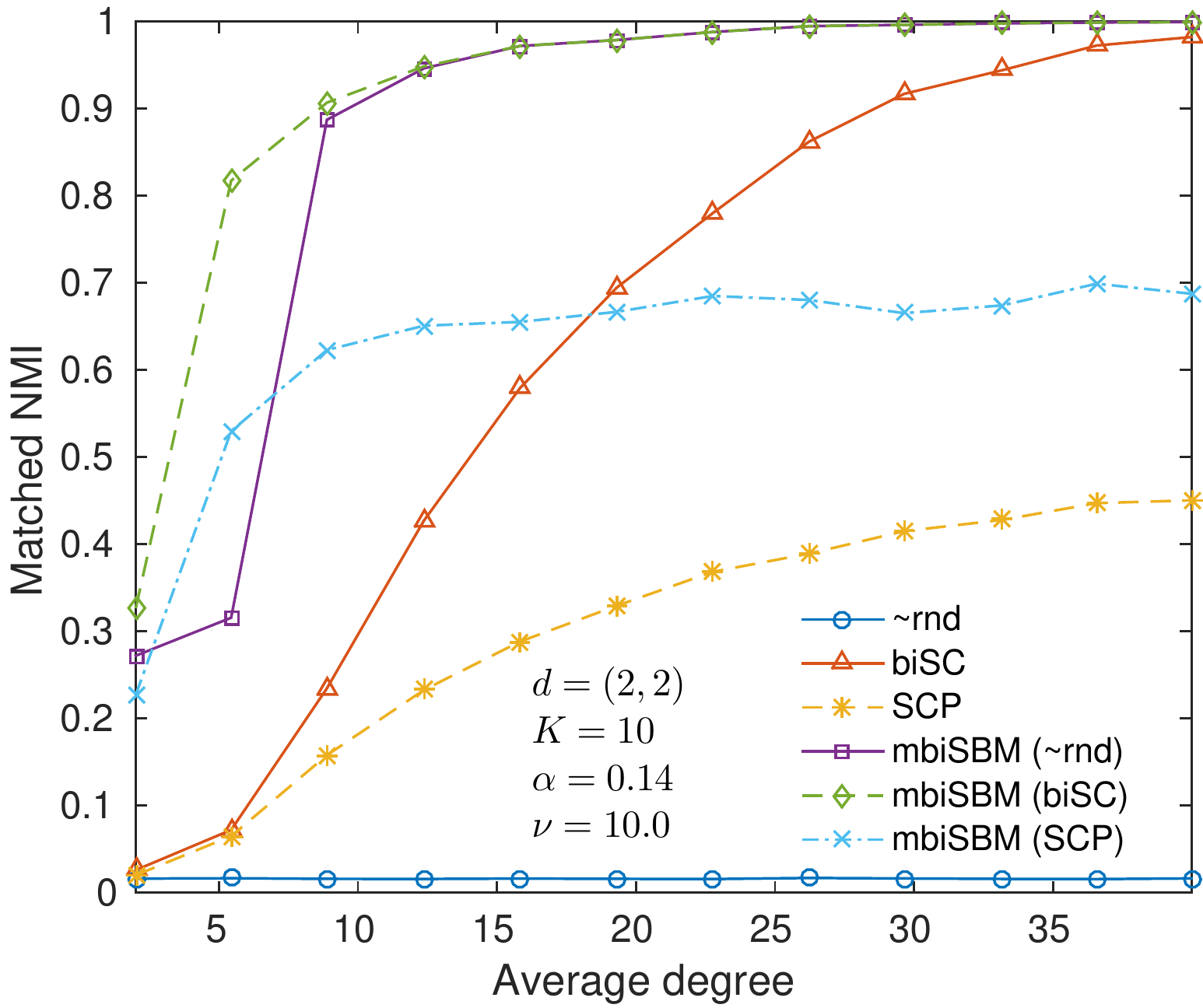}

	\caption{Effect of different initialization methods on mbiSBM with (left) $\nu = 0$ and (right) $\nu = 10$. The case $\nu = 0$ corresponds to no covariate information whereas $\nu > 0$ gives some covariates information.}
	\label{fig:nmi:undc}
\end{figure}

\subparagraph{Average behavior.}
Figure~\ref{fig:nmi:undc} shows the mathched NMI versus average expected degree $\lambda$ for various methods. The results are  averaged over 50 Monte Carlo replications. Naive (regularized) spectral clustering, denoted as \scp, is shown in addition to \bisc as discussed in Section~\ref{subsubsec:initialization}. Moreover, the plots show our algorithm \texttt{mbiSBM}, initialized with both spectral methods and with Dirichlet-perturbed truth ($\rho = 0.1$) denoted as \texttt{$\sim$rnd}. The two plots correspond to the case with no covariate information, $\nu = 0$, and the case with covariate information $\nu = 10$. In both cases, covariate dimensions are $d = (2,2)$, number of communities $K=10$ and out-in-ratio is $\alpha = 1/7$. There is no degree-correction in the model or \texttt{mbiSBM} algorithm. 

 As can be seen, \bisc outperforms SCP significantly. Without covariates, \texttt{mbiSBM} started with \bisc slightly improves upon \bisc; initializing with $\approx 10\%$ truth (\texttt{mbiSBM} \texttt{($\sim$rnd)}) has similar performance for sufficiently large $\lambda$, showing that \texttt{mbiSBM}  behaves well with any sufficiently good initialization. Note also that \texttt{mbiSBM} initialized with SCP, improves upon SCP for large $\lambda$. With covariate information ($\nu = 10$), \texttt{mbiSBM} significantly outperforms \bisc which does not incorporate the covariates. 

%Fig.\ref{fig:meandeg} and Fig.\ref{fig:ratio} show the agreement of estimated labels with the truth in various settings, as measured by the normalized mutual information (NMI) averaged over 50 Monte Carlo replications. Note that when $\nu = 0$, our algorithm will be downgraded to a matched bipartite block model with no node covariates. \aaa{What are these plots showing? Need to discuss.}

\begin{figure}[ht]
	\centering
	\begin{tabular}{cc}
	\includegraphics[width=.45\linewidth]{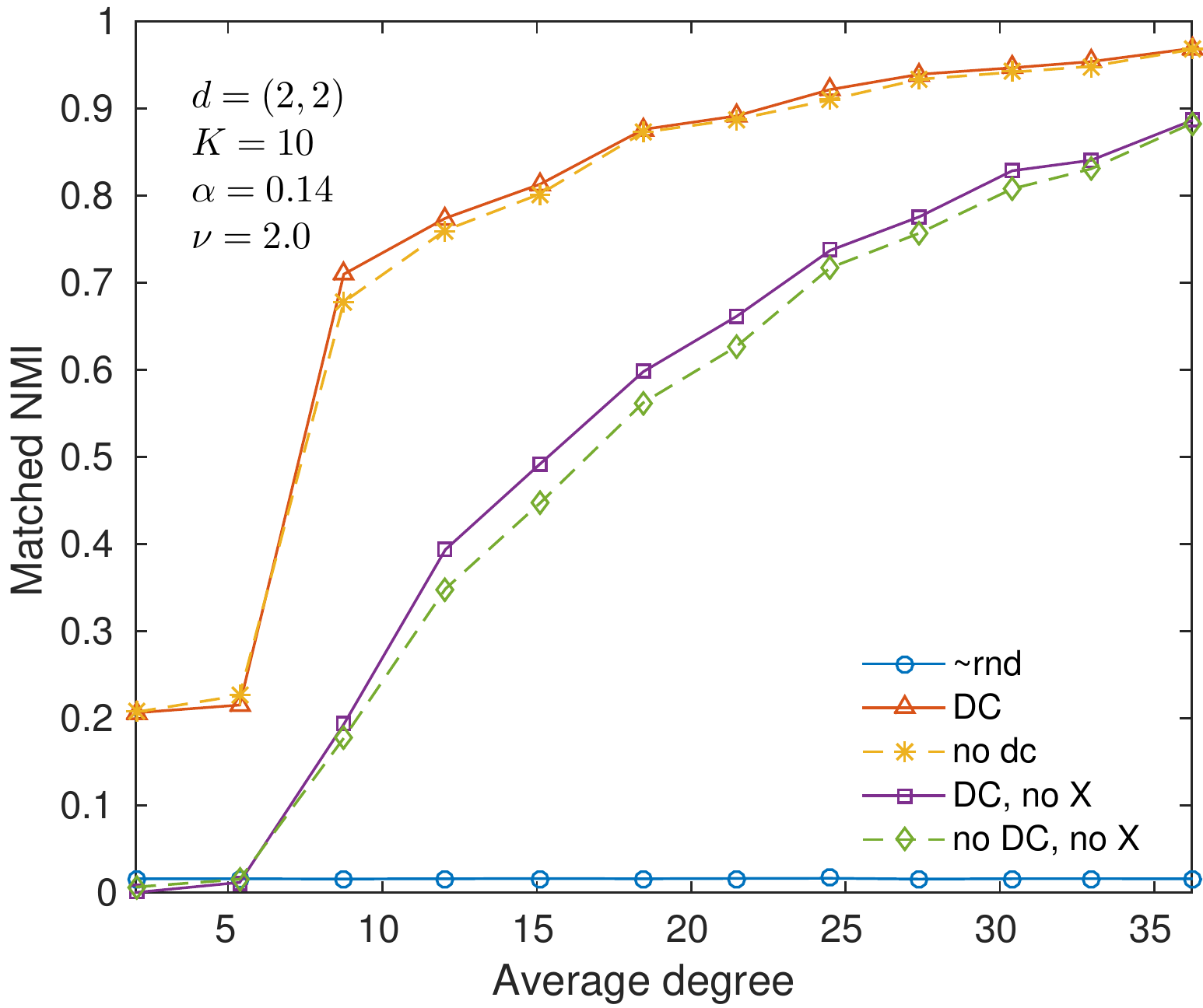}\hspace{1ex} &
	\includegraphics[width=.45\linewidth]{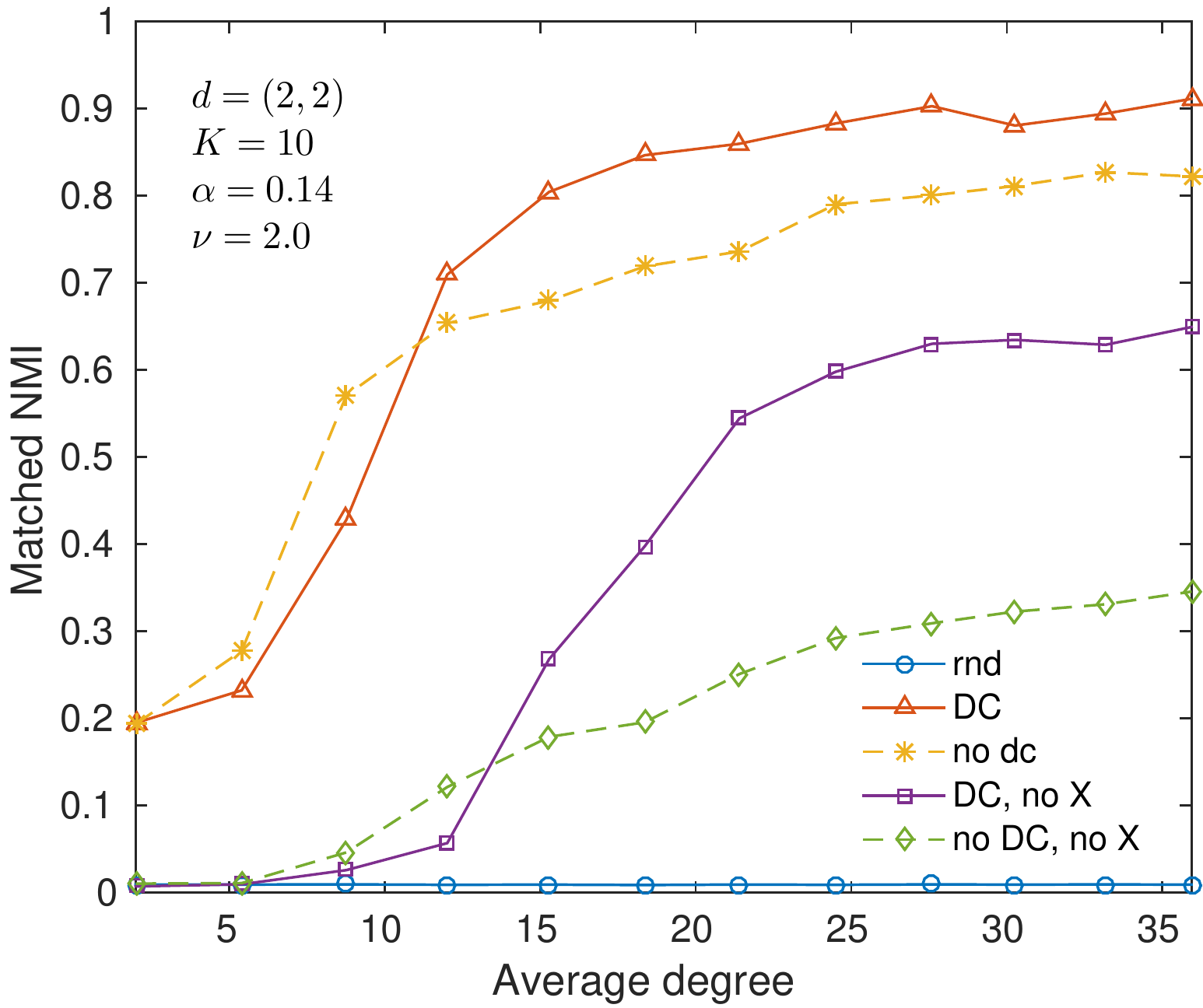}  \\
	(a) & (b)
	\end{tabular}

	\caption{Effect of the degree correction steps 6--8 in the algorithm, for a power-law network. Data is generated from DC version of the model (subsection~\ref{sec:dc:model}) with Pareto distribution $p(\theta)  \propto \theta^{-3},\;\theta > 2$ for degree parameters within each community. (Left) shows the results for a good initialization, the Dirichlet perturbed truth, denoted as~\simrnd (with $\approx$ 10\% true labels) and the (right)  shows the results for a completely random initialization, denoted as \rnd.}
	\label{fig:dc:effect}
\end{figure}

\subparagraph{Effect of degree correction.}
Figure~\ref{fig:dc:effect} investigates the effect of employing degree-correction in the algorithm. In both plots of the figure, we are generating from the same DC-version of the model using within-community Pareto degree distribution with parameter $a = 2$ as described earlier, $d=(2,2)$, $K= 10$, $\alpha = 1/7$, and $\nu = 2$. The difference between the two plots is how we initialize \texttt{mbiSBM} algorithm. The left panel corresponds to ``Dirichlet-perturbed truth'' initialization ($\rho = 0.1$) denoted as \texttt{$\sim$rnd}, whereas the right panel corresponds to completely random initialization, denoted as \texttt{rnd}. Four versions of the algorithm are considered, with or without covariate ($X$) incorporation, and with or without degree-correction (DC).

Surprisingly, as the left panel shows, with sufficiently good initialization (\texttt{$\sim$rnd}), degree-correction step of the algorithm provides only a slight improvement. However, the improvement of degree-correction is quite significant when starting from a poor initialization (\texttt{rnd}). In general, it is advisable to use the DC version since its solution has less variance. Figure~\ref{fig:dc:effect:boxplots}, illustrates the algorithm with DC correction and without, in the same setup of the left panel of Figure~\ref{fig:dc:effect}, that is, both cases initialized with \texttt{$\sim$rnd} (and both incorporating covariates). Though Figure~\ref{fig:dc:effect}(a) shows that mean behaviors are close, Figure~\ref{fig:dc:effect:boxplots} shows that the distributions of the outputs are quite different, with the solution of DC version having less variability. This is expected as the DC version is solving an optimization problem with much restricted feasible region.

\begin{figure}[t]
	\centering
	\includegraphics[width=.5\linewidth]{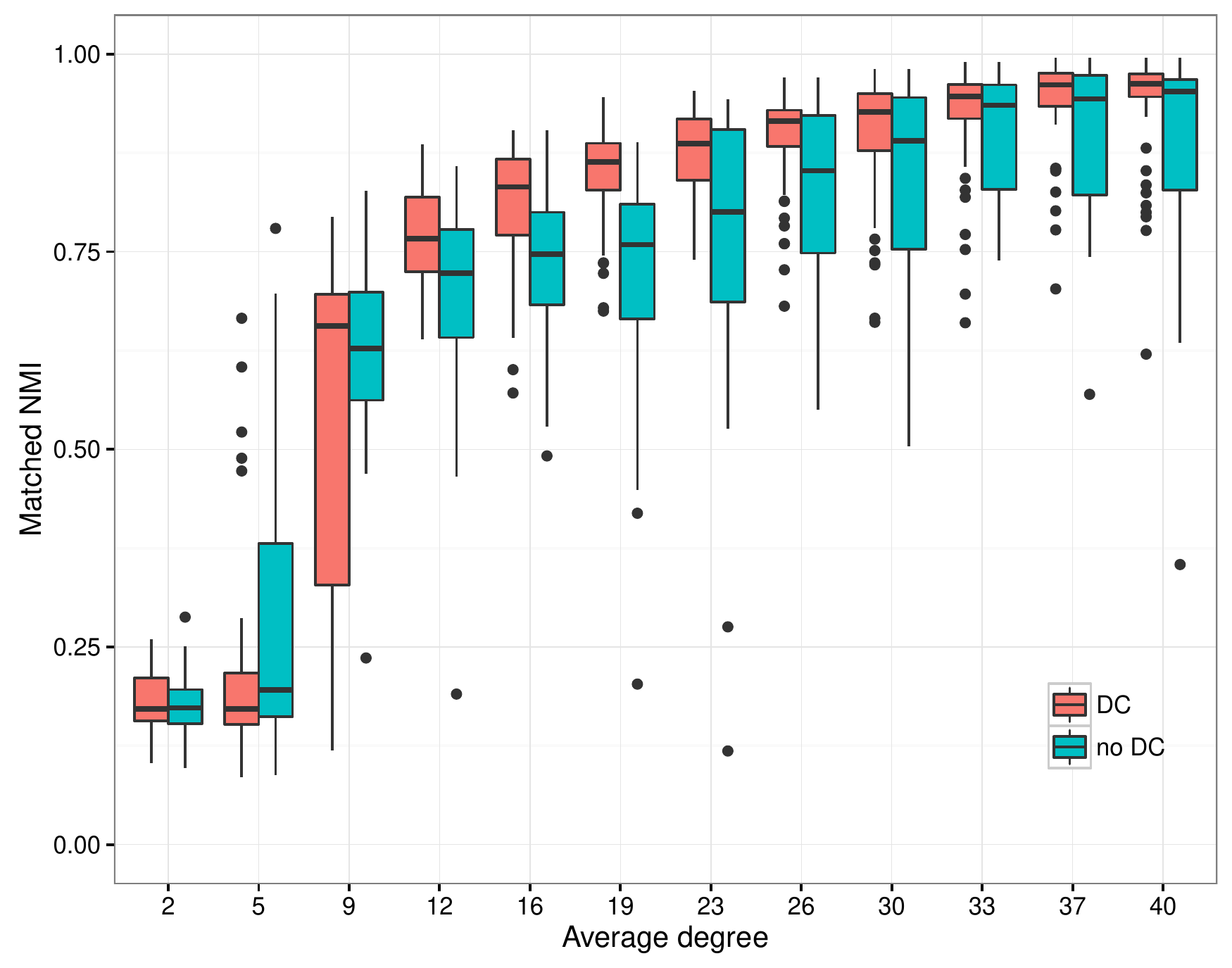}
	\caption{Effect of degree correction correction steps on variability, for a power-law network.}
	\label{fig:dc:effect:boxplots}
\end{figure}

%\subparagraph{Model misspecification (outliers).}

%Simulations
%\begin{itemize}
%	\item Plot of mNMI as a function of degrees (covar and no covar)
%	\item Effect of increasing the dimension of covariates (Plot mibSBM (~tru))
%	\item Effect of varying $\nu$ 
%	
%	\item Individual versus matched NMI
%	\item Effect of varying beta?
%	\item One sided 
%\end{itemize}

%\begin{figure}
%	\centering
%	\includegraphics[width=0.43\linewidth]{figs2/NMI_deg_covar_sepNMI_B}
%	\includegraphics[width=0.43\linewidth]{figs2/NMI_deg_covar_sepNMI_Ca}
%	\includegraphics[width=0.43\linewidth]{figs2/NMI_deg_covar_sepNMI_Cb}
%	\includegraphics[width=0.43\linewidth]{figs2/NMI_deg_covar_sepNMI_F}
%	\includegraphics[width=0.43\linewidth]{figs2/NMI_deg_covar_sepNMI_Fa}
%	
%\end{figure}

%\section{Simulations under model misspecification}
\section{Application to real data}\label{sec:realdata}
We have applied the algorithm to two wikipedia page--user networks, which we will call \topart and \cities. Each is a bipartite network between a collection of Wikipedia pages and the users who edited them: An edge is placed between a user and a page if the user has edited that page (at least once).  In the \topart, the pages are selected from the top articles (based on monthly contributions) from Chinese (CN), Korean (KR) and Japanese (JP) language Wikipedia, corresponding to the period from January to October 2016. In the \cities network, the pages correspond to city names in English language Wikipedia; the cities were chosen from five countries: Unites States (US), United Kingdom (GB), Australia (AU), India (IN), Japan (JP). In both cases, on the user side, only those with IP addresses were retained. Although, not perfect, IP addresses were the only means by which we could obtain additional information about each user, esp. geo-location data. Wikipedia usage statistics were scraped from~\cite{wikimedia} using code inspired by~\cite{Keegan2014}. For geo-data we used both the \texttt{ggmap} R package~\cite{Kahle2013} and the API provided by~\cite{ipapi}.

 In both networks, the true labels are the language assigned to each page and each user, that is, matched communities are specified by common language. The user language was assigned based on the dominant language of the country from which the IP address originates. The IPs were also used to obtain latitude and longitude coordinates on each user, providing us with user covariate matrix $X_2 \in \reals^{N_2 \times 2}$,
 
 The page language was assigned differently for the two networks. For \topart, it is the language in which the page was written. For \cities, it is the language of the country to which the city (that the page is about) belongs. For \topart, we do not have any page covariate. For \cities, we use the geo-location data of the city (latitude and longitude) to give us the page covariate matrix $X_1 \in \reals^{N_1 \times 2}$.

 % Below, we describe details of how the two networks are constructed and the primary language of each page and user determined. 

Figure~\ref{fig:wiki:nets} shows the two networks along with the true communities. Note that \cities is specially hard to cluster based only on network data due to the presence of nodes of different communities among each community (as positioned by the layout algorithm).  Tables~\ref{tab:top:art} and ~\ref{tab:cities} show the break-down of pages/users based on community (i.e., language) for the two networks. Also shown are the average degrees of each side of the network, as well as the overall average degree. For each of the two networks, we first obtained a 2-core, restricted to the giant component, then removed users from countries not under consideration. (If the last step created disjoint components we restricted again to the giant component. This only happened for \cities and only removed 5 nodes.)

% The primary language of each page is clear from the text in this case. The primary language of the user was assigned based on the dominant language of the country from which the IP address originates. (If the IP originated from outside Chinese, Japanese or Korean speaking countries, we removed the user from the network.) For \topart, we filtered the original extracted network to removed nodes with degree $< 2$ and then restricted ourselves to the resulting giant component.

\begin{table}[t]
	\begin{center}
		\begin{tabular}{l*{3}{c}|c|c|c}
			& CN & JP & KR &  Total & Avg. deg. & Covariates\\
			\hline
			Pages & 139 & 143 & 171 & 453 & 14.2 & N/A\\
			Users & 579 & 695 & 828  & 2102 & 3.1& $X_2 = $ user (lat.,lon.)\\
			\hline
			Total &  718 & 838 & 999 & 2555 & 5\\
		\end{tabular}
	\end{center}
	\caption{\topart page--user network}
	\label{tab:top:art}
\end{table}
\begin{table}[t]
	\begin{center}
		\begin{tabular}{l*{5}{c}|c|c|c}
			& US & IN & AU & JP & GB & Total & Avg. deg. & Covariates\\
			\hline
      Pages & 267 & 235 & 182 & 113 &  59 & 856& 10.2 &$X_1 = $ city (lat.,lon.)\\
			Users & 1054 & 1029 & 705 & 101 & 201  & 3090& 2.8& $X_2 = $ user (lat.,lon.)\\
			\hline
			Total & 1321 & 1264 & 887 & 214 & 260 & 3946& 4.4\\
		\end{tabular}
	\end{center}
	\caption{\cities page--user network}
	\label{tab:cities}
\end{table}

\begin{figure}[t]
	\begin{center}
		\includegraphics[scale=0.5]{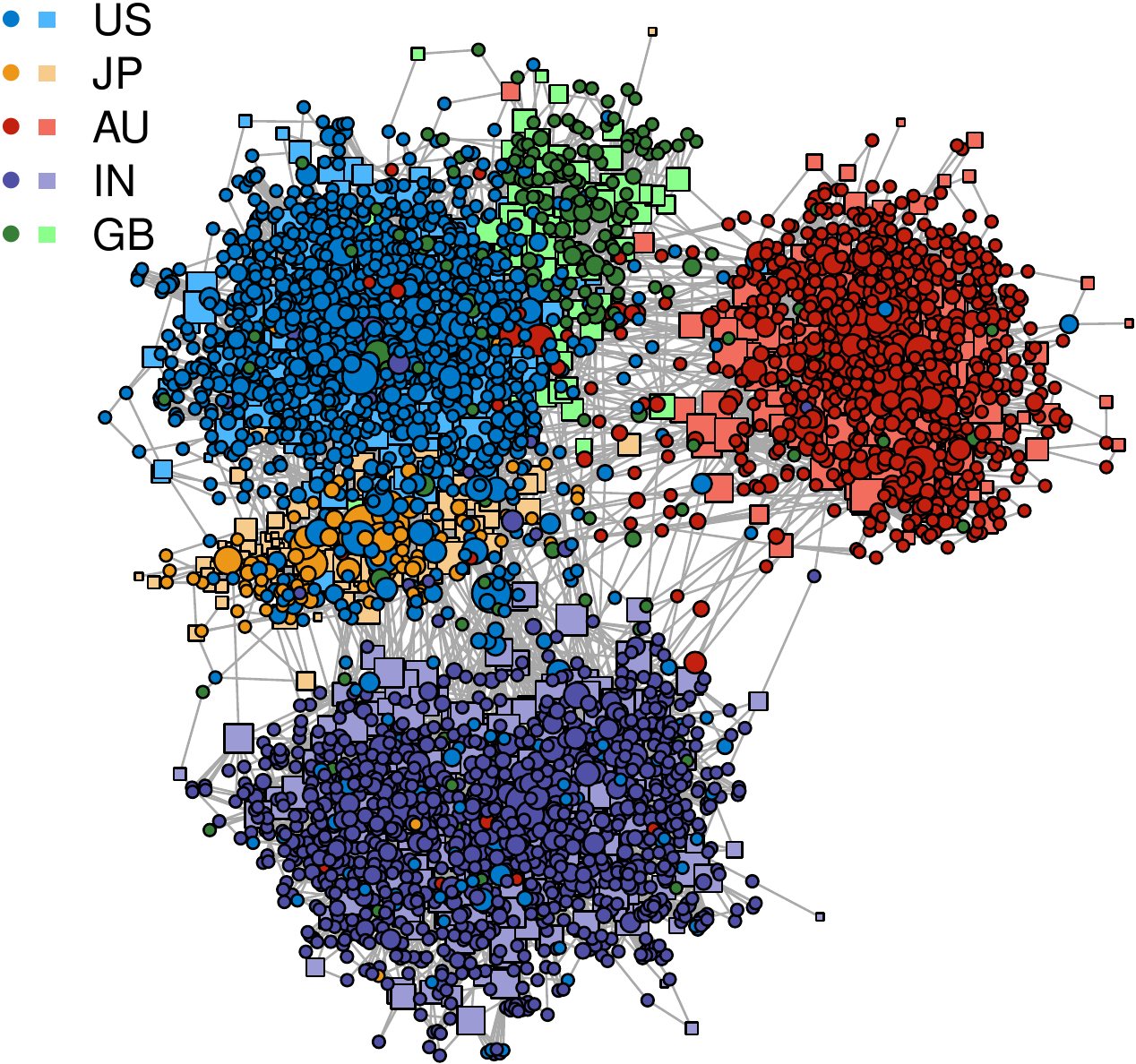}
		\includegraphics[scale=0.5]{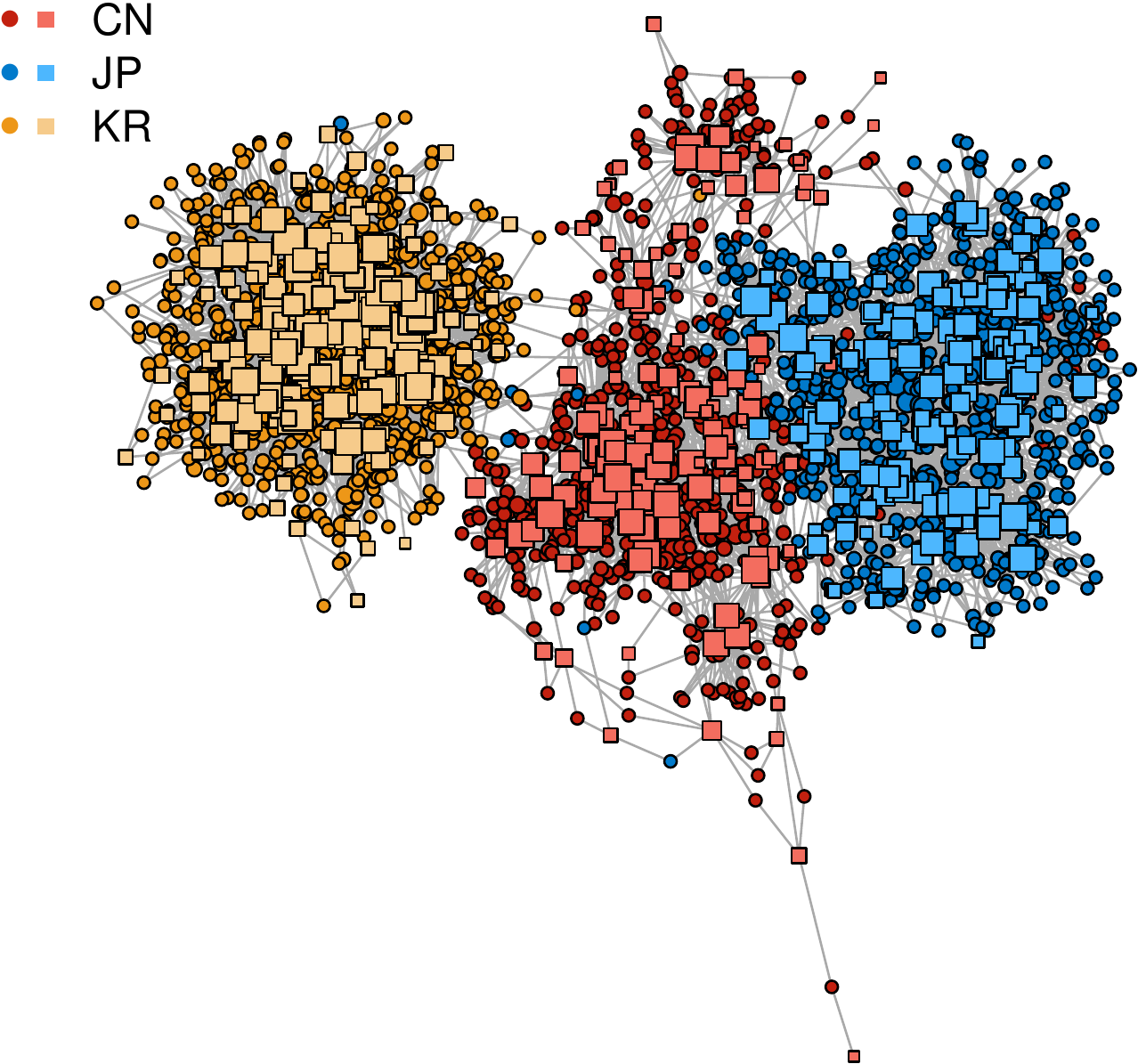}
	\end{center}
	\caption{The two Wikipedia networks: (left) \cities (right) \topart. Nodes are colored according to true communities (assigned languages). Pages are denoted with squares and users with circles. Node sizes are proportional to log-degrees.}
	\label{fig:wiki:nets}
\end{figure}

\subparagraph{Results on Wikipedia networks.}
Table~\ref{tab:wiki:res} illustrates the result of the application of \bisc, and the \mbisbm~(\bisc) algorithm with various combination of covariates. In all cases the degree-corrected (DC) version of \mbisbm is used. For \cities, without using covariates, there is no improvement on \bisc while using $X_2$ gives significant boost to \mbisbm. For \topart, \bisc outperforms \mbisbm with covariates. One the other hand, adding $X_2$ or both $X_1$ and $X_2$ significantly improves the result of \mbisbm.
\begin{table}
	\centering
	\renewcommand{\arraystretch}{1.2}
	\begin{tabular}{cc|*{4}{c}}
		Network & \bisc & \multicolumn{4}{c}{\mbisbm (\bisc), DC}\\
		\hline  
		&  & $X_1$ \& $X_2$ & $X_1$ & $X_2$ & no $X$ \\
		\hline
		\cities & 0.86 & - & - & 0.98 & 0.86 \\
		\topart & 0.6 & 1.0 & 0.59 & 0.85 & 0.47
	\end{tabular}
	\caption{Matched NMI for \bisc and \mbisbm on the two Wikipedia networks.}
	\label{tab:wiki:res}
\end{table}

\begin{figure}[t]\centering
	\begin{tabular}{cc}
		\includegraphics[scale=0.43]{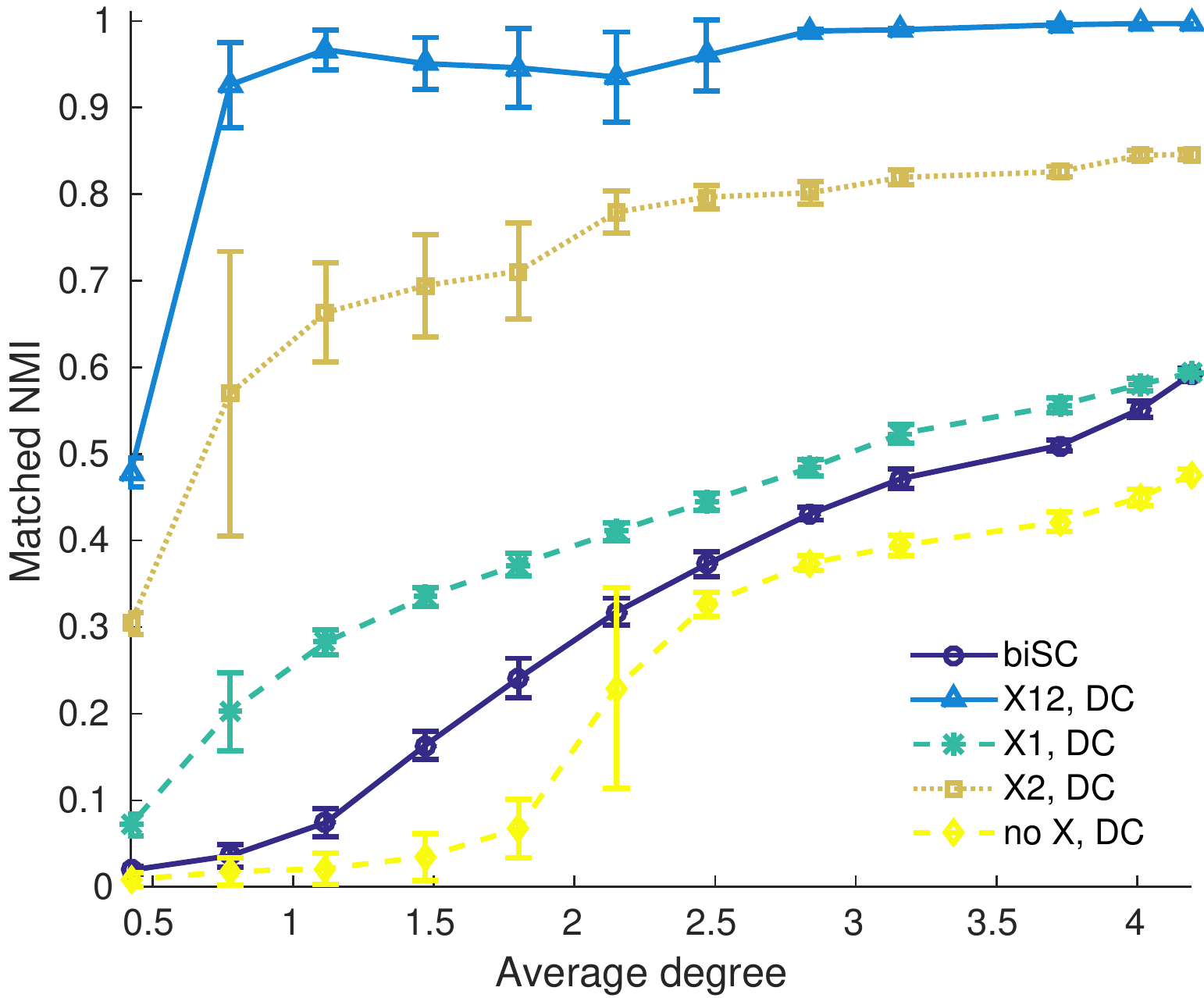} &
		\includegraphics[scale=0.43]{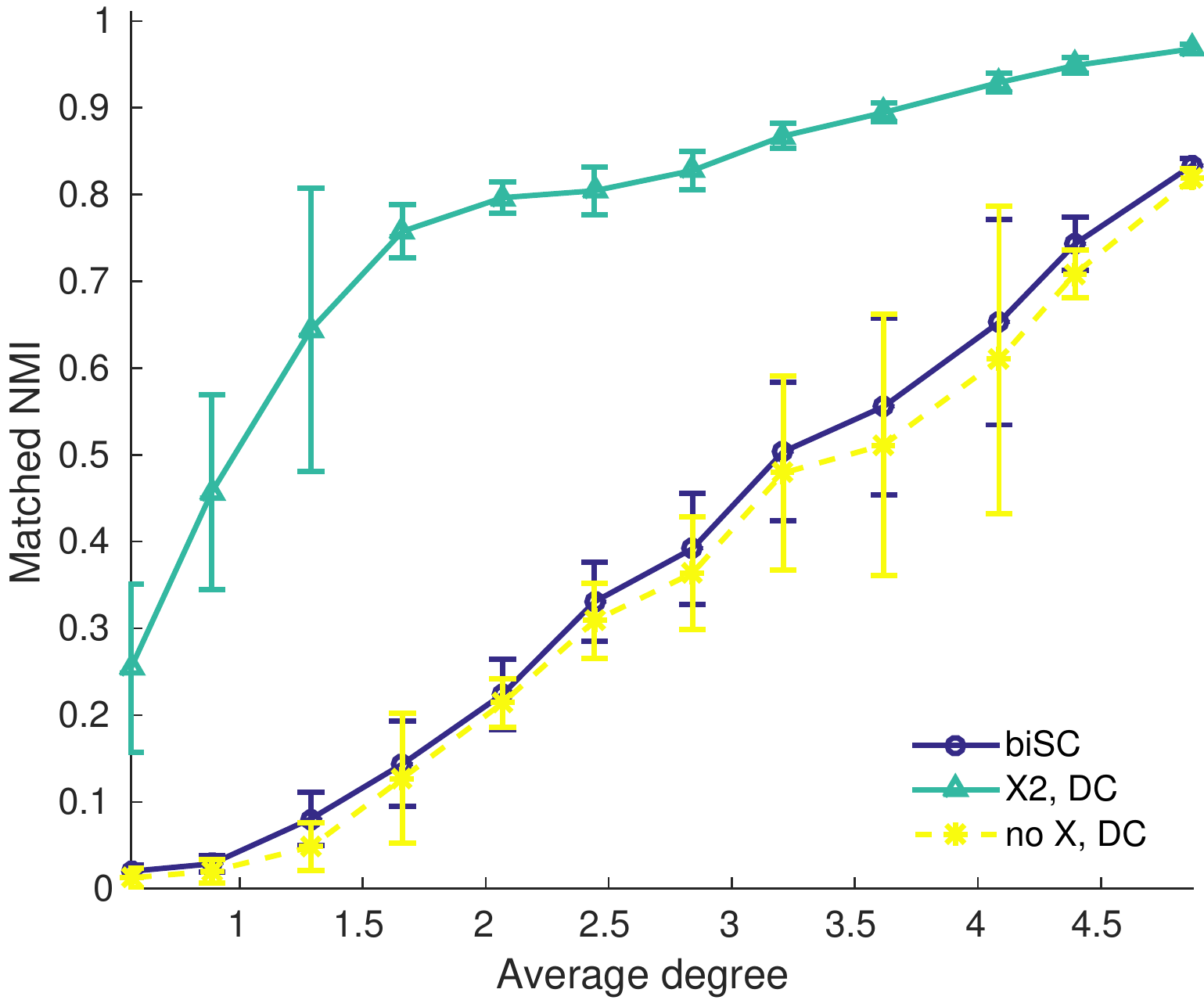}
	\end{tabular}
	\caption{Effect of subsampling on Wikipedia networks: (left) \cities (right) \topart. }
  \label{fig:wiki:subsamp}
\end{figure}

To get a more refined understanding of the relative standing of \bisc and \mbisbm~(\bisc), we have run two Monte Carlo analyses based on these real networks, one using \emph{subsampling} and the other by adding \ER  noise. Figure~\ref{fig:wiki:subsamp} shows the results when we subsample the network to retain a fraction of the nodes on each side (from 95\% down to 10\%). The x-axis shows the resulting overall average degree of the network at each subsampling level. The results are averaged over 50 replications and the interquartile range (IQR) is also shown as a measure of variability. The figures in Table~\ref{tab:wiki:res} correspond to the rightmost point of these plots. (Average degrees of \cities vary in these ranges: overall $\in [0.4,4.2]$, page $\in [1,9.7]$ and user $\in [0.3,2.7]$, whereas for \topart the ranges are:  overall $\in [0.6,4.9]$, page $\in [1.6,13.8]$ and user $\in [0.3,3]$.)

The plots show that \mbisbm~(\bisc) with covariates outperforms \bisc, and the improvement is quite significant the sparser the network becomes.  Note for example that in \cities adding $X_1$ does not have much effect in the original network, however, there is a considerable improvement when average degree starts to drop under subsampling. In the \cities case, the two covariates $X_1$ and $X_2$ together are quite strong leading to an NMI $\approx 1$ and masking the effect of the network to some extent. 

However, by looking at cases where only one of $X_1$ and $X_2$ is present, we observe that \mbisbm~(\bisc) manges to pass the covariate information via the network to the side without covariates, thus improving matched NMI significantly. To see this, consider for example the \topart, where only $X_2$ is present. In this case, even if a method could cluster $X_2$ perfectly and, in the absence of network information randomly guessed the labels of the other side, the NMI would be 0.42. That in Figure~\ref{fig:wiki:subsamp}(b), the NMI starts at 0.98 and remains much above 0.42 for most of the range of subsampling illustrates the ability of \mbisbm to effectively utilize both covariate and network information to correctly infer the labels of the other side. The same can be observed in the case of \cities. Finally, we note that without covariates, \bisc usually performs better. We expect this since it is hard for local methods starting from \bisc to improve upon it. The strength comes when we use the covariate information.

\begin{figure}[t]
	\centering
	\begin{tabular}{cc}
		\includegraphics[scale=0.43]{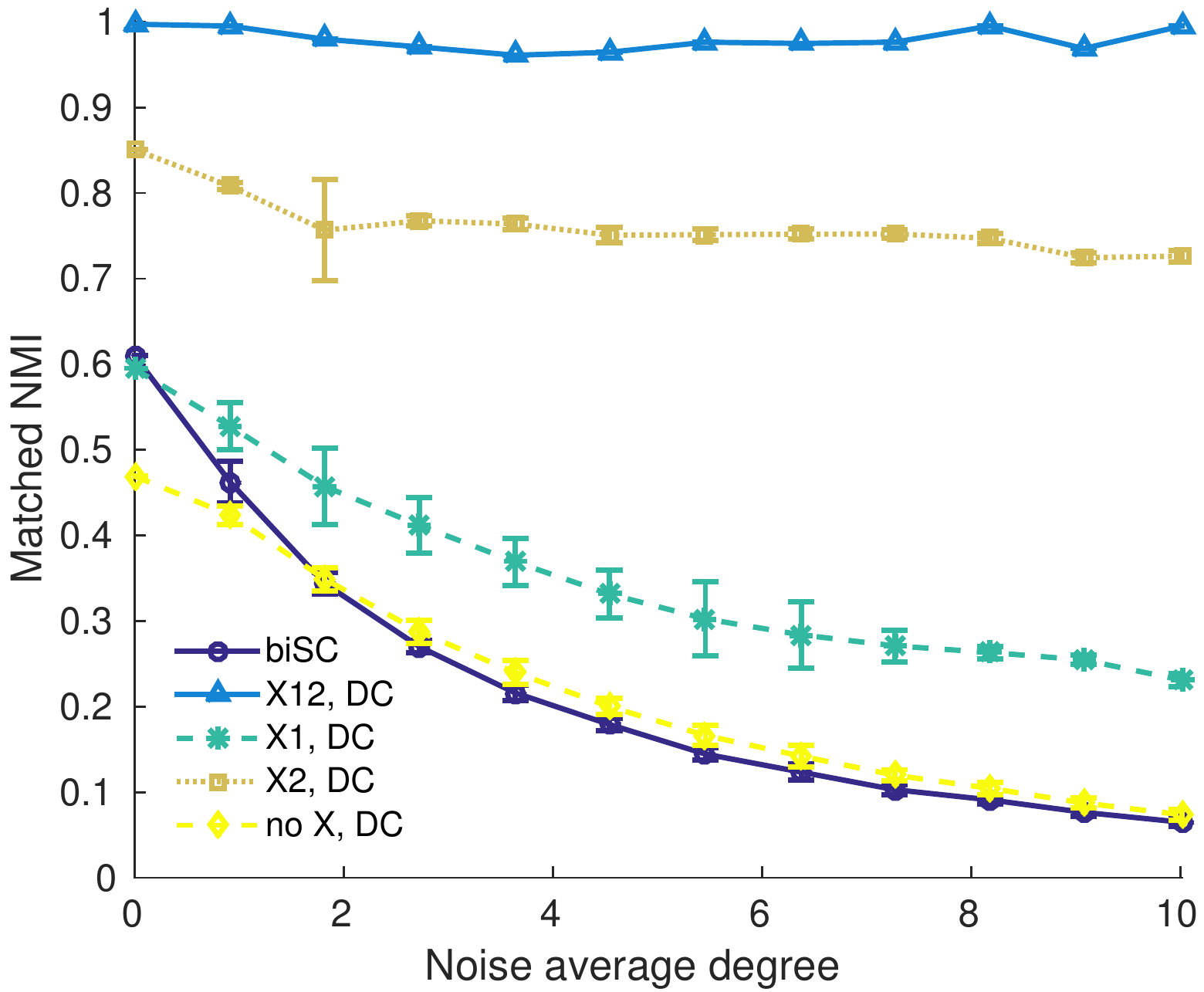} &
		\includegraphics[scale=0.43]{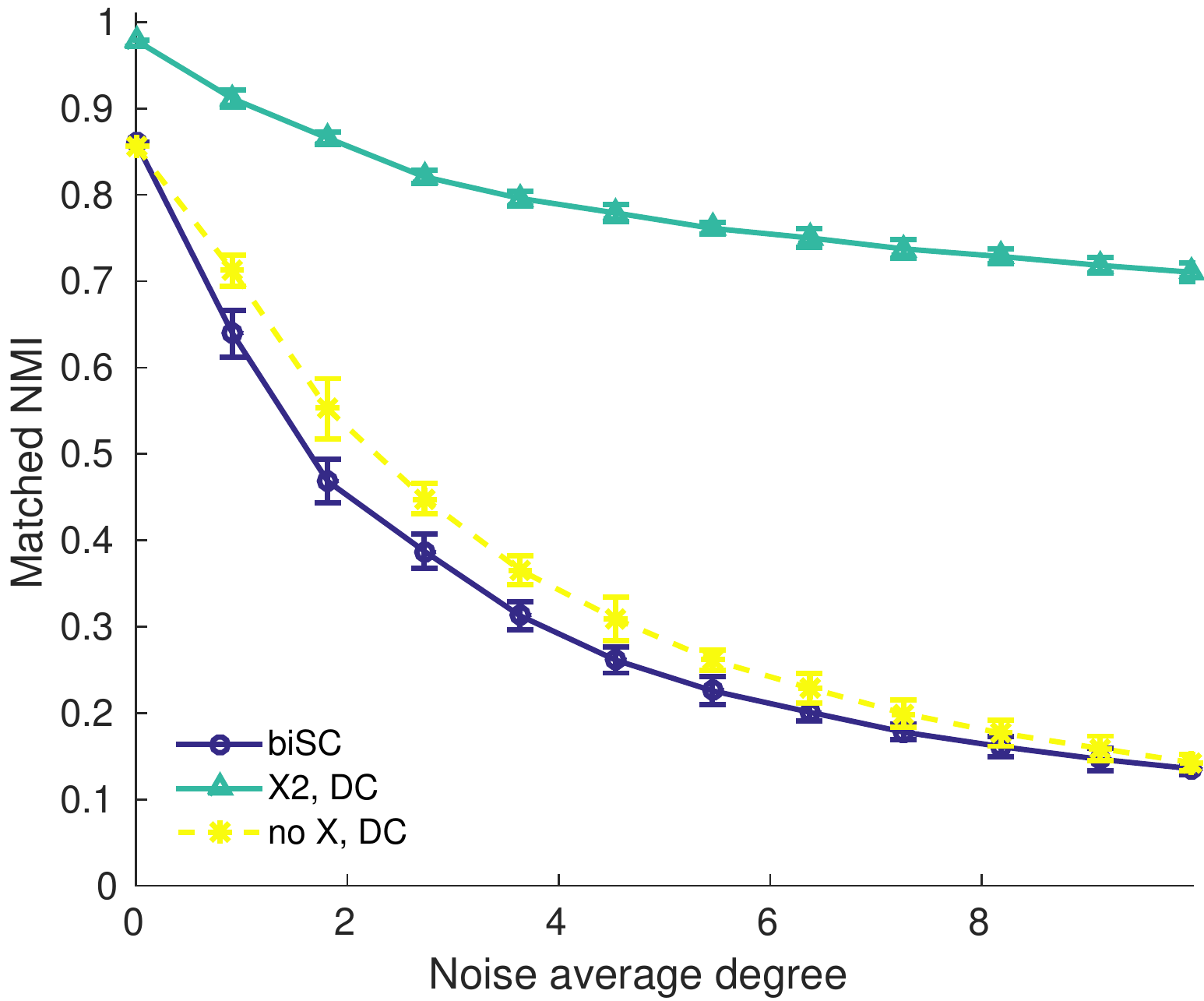}
	\end{tabular}
	\caption{Effect of adding \ER noise on Wikipedia networks: (left) \cities (right) \topart.}
	\label{fig:wiki:addnoise}
\end{figure}

Figure~\ref{fig:wiki:addnoise} shows another experiment where we added \ER noise of average degree from 0 to 10. Again, the advantage gained by \mbisbm from using covariates can be quite clearly observed when one increases the noise. The covariates mitigate the effect of noise and lead to a much graceful degradation of performance for \mbisbm relative to \bisc.

\section{Discussion}\label{sec:discussion}

% \section{Future Directions}
% \begin{itemize}
% \item The model we proposed was motivated by a real application in which we had gene expression of two species and an ortholog network between the two. We can use similar idea and modify the model so that we can detect communities in multilayer network in which the whole data is in network format.
% \item We can generalize our model to find communities for more than two groups.
% %\item As the bipartite graph becomes denser, the error becomes smaller. This relationship needs to be further investigated.
% \item Finding the optimal number of clusters(communities) has been always a challenge for clustering problems. Our model is no exception.
% \item This model needs to be tested with our real data.
% \item Degree-corrected block models
% \item Do we experience phase transition in the existence of node covariates?
% \end{itemize}
In this paper, we considered the problem of matched community detection in the bipartite setting, where one assumes a latent one-to-one correspondence between communities of the two sides. This matching is built into the model and inferred simultaneously with the communities in the process of fitting the model.
Our model is an extension of the stochastic block model (SBM) and its degree-corrected version (DC-SBM). We extended our proposed model to allow for the presence of node covariates that are aware of the matching between communities: Covariates corresponding to nodes in matched communities are statistically linked using hierarchical Bayesian modeling ideas. 

Although we only considered Gaussian distributions in generating covariates, our hierarchical mixture approach has the potential for extension to more general settings. For example, one can easily model discrete covariates as mixtures of multinomial distributions. Some care however is needed when deciding the distribution of the top layer to allow for proper information sharing among the lower level variables ($v_{rk}$ in our notation). In addition, as mentioned (cf.~Section~\ref{sec:covariate:corr}), our current statistical linkage carries weak information about the matching and it would be interesting to design models in which the degree of covariate information about the matching can be tuned more effectively.

%We use hierarchical Bayesian modeling ideas to jointly model the covariate of the two sides. To fit the model, we derive an algorithm based on variational inference. We show through simulation and real data that covariate information when correlated with community structure, can drastically boost performance of community detection.

Our model has natural extensions to $r$-partite ($r>2$) networks where some of the modes may or may not have node covariates. We note that the general $r$-partite case is related to the so-called multilayer or multiplex community detection problem~\cite{Kivela2014}. Finally, one would like to allow for edge covariates to accommodate many cases in real data, where edges are annotated in some way, say by the ratings as in recommender systems, by time-stamps as in our Wikipedia user-page examples, and so on. Poisson model of edge generation can, to some extent, take simple edge weights into account. Whether one can go beyond that in modeling more complex edge information is an interesting avenue for future work.

\printbibliography

\appendix

%!TEX root = sbm_bip_arxiv.tex
\section{Details of Section~\ref{sec:model:fitting}}
\subsection{Derivation of~\eqref{eq:elbo:expr:1}}\label{sec:elbo:deriv}

  We write $\E_q$ for expectation w.r.t. the above joint distribution on $(Z,V)$. Similarly, we write $\E_{q_Z}$ and $\E_{q_V}$ for the expectation (or integration) w.r.t. to each of $q_Z$ and $q_V$. Note that $\E_q[\,\cdot\,]  = \E_{q_V} \E_{q_Z}[\,\cdot\,]$. Plugging in the variational distribution~\eqref{eq:var:dist} into the variational likelihood~\eqref{eq:elbo:def}. we have
\begin{align*}
J &= \E_q\Big[\ell(\mu,\Sigma,\sigma,Q)] -\log q(Z,V)\Big] = T_1 + T_2 + T_3 - T_4 - T_5
\end{align*}
where
\begin{align*}
T_1 &= \E_{q_Z} \Big[\sum_{i,j} \psi(A_{ij},y_{ij}) \Big], \quad 
T_2 = \E_{q_V} \E_{q_Z} \Big[\sum_{r,i,k} z_{rik} \big[ \log f_r(x_{ri}; v_{rk}) + \log \pi_{rk}\big]  \Big] \\
T_3 &= \E_{q_V} \Big[ \sum_{k} \log p(v_{*k} | \mu, \Sigma) \Big], 
  \quad T_4 =  \E_{q_Z} \log q(Z), \quad T_5 = \E_{q_V} \log q(V)
 \end{align*}
 Let 
$ \gamma_{ij}(\tau) := \E_{q_Z}(y_{ij}) = \sum_{k=1}^K \tau_{1ik} \tau_{2jk}$,
so that
\begin{align}\label{eq:T1:expr:1}
T_1 = \sum_{i,j} \Big[\gamma_{ij}(\tau ) \log(p^{A_{ij}} (1-p)^{1-A_{ij}}) + (1-\gamma_{ij}(\tau))  \log(q^{A_{ij}} (1-q)^{1-A_{ij}}) \Big]
\end{align}
We frequently use the following elementary result in the sequel. Let $x \mapsto N(x;\mu,\Sigma)$ be the PDF of the multivariate normal distribution with mean $\mu$ and covariance $\Sigma$.
\begin{lem}\label{lem:ex:quad:form}
  Let $\eps$ be a random vector with mean $\mut$ and covariance $\Sigt$, and $x$ a nonrandom vector. Then,
  \begin{align}
    \E[\eps^T  \Lambda \eps] &= \tr[\Lambda \Sigt] + \mut^T \Lambda \mut, \label{eq:quad:form:1}\\
    \E[ \log N(x;\eps,\Sigma)] =  \E[ \log N(\eps; x,\Sigma)] &=  -\frac12 \Big\{ \log |\Sigma| +  (x-\mut)^T \Sigma^{-1} (x-\mut) + \tr(\Sigma^{-1} \Sigt) \Big\}  \notag \\
    &=-\frac12 \Big\{ \log |\Sigma| +   \tr(\Sigma^{-1} \Psi ) \Big\}  \label{eq:quad:form:2},
  \end{align}
  where $\Psi = \Sigt + (x-\mut)(x-\mut)^T$.
\end{lem}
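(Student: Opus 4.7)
The plan is to prove the two identities in turn, using only elementary properties of trace and expectation. Both follow quickly from the fact that $\E[\eps\eps^T] = \Sigt + \mut\mut^T$, which is just the definition of covariance rearranged.

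For the first identity, I will use the trick that the scalar $\eps^T \Lambda \eps$ equals its own trace, and cycle: $\eps^T \Lambda \eps = \tr(\Lambda \eps \eps^T)$. Linearity of expectation and trace then gives $\E[\eps^T \Lambda \eps] = \tr(\Lambda \, \E[\eps \eps^T]) = \tr(\Lambda \Sigt) + \tr(\Lambda \mut \mut^T)$, and the second trace collapses to the scalar $\mut^T \Lambda \mut$, yielding~\eqref{eq:quad:form:1}.

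For the second identity, the equality $\E[\log N(x;\eps,\Sigma)] = \E[\log N(\eps;x,\Sigma)]$ is immediate from the symmetry of the Gaussian PDF in the pair $(x,\eps)$: both expressions differ by the sign of $(x-\eps)$ inside a quadratic form, so the PDFs coincide pointwise. To evaluate either expectation, I will expand $\log N(x;\eps,\Sigma) = -\tfrac12 \log|\Sigma| - \tfrac12 (x-\eps)^T \Sigma^{-1} (x-\eps) + \text{const}$ and rewrite $x - \eps = (x-\mut) - (\eps - \mut)$. Expanding the quadratic form, the cross term $(x-\mut)^T \Sigma^{-1} (\eps-\mut)$ has zero mean since $\E[\eps-\mut] = 0$, while applying~\eqref{eq:quad:form:1} to the centered vector $\eps-\mut$ (which has mean zero and covariance $\Sigt$) with $\Lambda = \Sigma^{-1}$ gives $\E[(\eps-\mut)^T \Sigma^{-1} (\eps-\mut)] = \tr(\Sigma^{-1} \Sigt)$. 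Combining these yields the first form of~\eqref{eq:quad:form:2}.

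The last equality of~\eqref{eq:quad:form:2} is just a repackaging: using $(x-\mut)^T \Sigma^{-1} (x-\mut) = \tr\big(\Sigma^{-1} (x-\mut)(x-\mut)^T\big)$ and linearity of trace, the sum $\tr(\Sigma^{-1} \Sigt) + \tr\big(\Sigma^{-1}(x-\mut)(x-\mut)^T\big)$ collapses to $\tr(\Sigma^{-1} \Psi)$ with $\Psi = \Sigt + (x-\mut)(x-\mut)^T$. There is no real obstacle here; the only step that deserves a moment's care is the cross-term cancellation in the quadratic expansion, and the quiet use of part~\eqref{eq:quad:form:1} to evaluate the remaining centered quadratic form.
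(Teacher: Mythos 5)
Your proof is correct and follows essentially the same route as the paper: the paper also establishes \eqref{eq:quad:form:2} by applying \eqref{eq:quad:form:1} to a shifted vector (it uses $x-\eps$ directly, with mean $x-\mut$ and covariance $\Sigt$, whereas you first center at $\mut$ and kill the cross term — an equivalent computation). Both treatments, like the lemma statement itself, quietly drop the $-\tfrac{d}{2}\log(2\pi)$ normalization constant, which is harmless given the paper's ``up to additive constants'' convention.
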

\begin{proof}
  Let us prove~\eqref{eq:quad:form:2}. We have $\log N(x;\eps,\Sigma) = -\frac12 \log |\Sigma| - \frac12 (x-\eps)^T \Sigma^{-1} (x-\eps)$. Noting that $x-\eps$ has mean $x -\mut$ and covariance $\Sigt$ and applying~\eqref{eq:quad:form:1} gives the desired result.
\end{proof}

Recall that $f_r(x_{ri}; v_{rk}) = N(x_{ri}; v_{rk},\sigma^2_r I_{d_r})$, and note that under $q_V$, $v_{rk}$ has mean $\mut_{rk}$ and covariance $(\Sigt_k)_{rr}$. Note that we are partitioning $\Sigt_k$ into four blocks of sizes $\{d_1,d_2\} \times \{d_1,d_2\}$ and $(\Sigt_k)_{rr}, r=1,2$ correspond to the two diagonal blocks in this partition. Using Lemma~\ref{lem:ex:quad:form}, we have
\begin{align*}
T_2 & = \E_{q_V} \Big\{\sum_{r,i,k} \tau_{rik} \big[ \log N(x_{ri}; v_{rk},\sigma^2_r I_{d_r}) + \log \pi_{rk}\big]\Big\} \\
&= \sum_{r,i,k} \tau_{rik} \big[ - \frac{d_r}{2}\log\sigma_r^2 
  -\frac{\tr\big((\Sigt_k)_{rr}\big) + \|x_{ri} - \mut_{rk}\|^2} {2\sigma^2_r} + \log \pi_{rk}\big].
\end{align*}
Recall that  $\Gamt := ((\Sigt_k, \mut_k), k=1,\dots,K)$ and
$	\Psib(\Gamt) := \frac{1}{K} \sum_{k=1}^K \big[\Sigt_k + (\mut_k - \mu) (\mut_k - \mu)^T \big].$
Another application of Lemma~\ref{lem:ex:quad:form} gives
\begin{align*}
T_3 = \E_{q_V}\Big[\sum_{k} \log N(v_{*k}; \mu, \Sigma)\Big]
&= -\frac12 \sum_k \big[ \log |\Sigma| + \tr(\Sigma^{-1}\tilde{\Sigma}_k) +(\tilde{\mu}_k-\mu)^T\Sigma^{-1}(\mu_k-\mu)\big] \\
&= -\frac12 \sum_k \big[ \log |\Sigma| + \tr(\Sigma^{-1} \Psi_k(\Gamt)) \big] \\
&= -\frac{K}2 \big[ \log |\Sigma| + \tr(\Sigma^{-1} \Psib(\Gamt)) \big].
%
%-\frac{1}{2}\Big\{ 
%  K\log|\Sigma|+\sum_k\big[\tr(\Sigma^{-1}\tilde{\Sigma}_k) +(\tilde{\mu}_k-\mu)^T\Sigma^{-1}(\tilde{\mu}_k-\mu)\big]\Big\} 
\end{align*}
Using Lemma~\ref{lem:ex:quad:form} once more, we have
\begin{align*}
	T_5 = \E_{q_V} \log q(V) &= \sum_k \E_{q_V} \log N(v_{*k};\mut_k,\Sigt_k) \\
	&= \sum_k -\frac12 \big[ \log |\Sigt_k| +  \tr(I_{d_1+d_2}) \big] = -\frac12 K(d_1 +d_2) -\frac{1}{2}\sum_k\log|\Sigt_k|
\end{align*}
Finally,  we have 
\begin{align}\label{eq:T4:expr1}
	T_4 = \E_{q_Z} \log q(Z) = \E_{q_Z} \sum_{r,i,k} z_{rik} \log \tau_{rik} = \sum_{r,i,k} \tau_{rik} \log \tau_{rik}.
\end{align}
Putting the pieces together we get expression~\eqref{eq:elbo:expr:1}.

\subsection{Updates of $\Sigt$ and $\mut$}\label{sec:details:Gamt}
%Let us define $\tauc_{rk} := \sum_{i=1}^{N_r}\tau_{rik}$ and $D^{-1}_k := \diag\big( \frac{\tauc_{1k}}{\sigma^2_1}I_{d_1}, \frac{\tauc_{2k}}{\sigma^2_2}I_{d_2}\big)$
%%\begin{align}\label{eq:tau:check:def}
%%  \tauc_{rk} = \sum_{i=1}^{N_r}\tau_{rik}, \quad 
%%  \text{and} \quad
%%  D^{-1}_k = \diag\Big( \frac{\tauc_{1k}}{\sigma^2_1}I_{d_1}, \frac{\tauc_{2k}}{\sigma^2_2}I_{d_2}\Big)
%%\end{align}
%Then, as a function of $\Sigt$, $J$ can be written as (see Appendix~?)
%\begin{align}
%J(\Sigt)\doteq -\frac12 \sum_k \tr\big[(D^{-1}_k +\Sigma^{-1})\Sigt_k\big] -\log|\Sigt_k|.
%\end{align}
%%
From~\eqref{eq:elbo:expr:1}, the relevant portion of $J$ which is a function of $\Gamt = (\mut,\Sigt)$ is given by
\begin{align*}
	J(\mut,\Sigt) \doteq \sum_{r,i,k} \tau_{rik}  \beta_{rik}(\Gamt, \sigma^2)
	-\frac{K}2 \tr[\Sigma^{-1} \Psib(\Gamt,\mu)]
	+ \frac12\sum_k\log|\Sigt_k|
\end{align*}
Substituting $\beta_{rik}(\Gamt, \sigma^2) := 
- \frac1{2\sigma^2_r}[\tr\big((\Sigt_k)_{rr}\big) + \|x_{ri} - \mut_{rk}\|^2 ], $ and $	\Psib(\Gamt,\mu) := \frac{1}{K} \sum_{k=1}^K \big[\Sigt_k + (\mut_k - \mu) (\mut_k - \mu)^T \big]$ from their definitions, and looking at the result only as a function of $\Sigt$, we obtain
\begin{align}
	J(\Sigt) \doteq 
		- \frac12 \sum_{k} \Big[ 
	 \sum_{r,i}\tau_{rik} \frac{\tr\big((\Sigt_k)_{rr}\big)}{\sigma^2_r}
	+ \tr[\Sigma^{-1} \Sigt_k]
	- \log|\Sigt_k| \Big]
\end{align}
Recalling $\tauc_{rk} := \sum_{i=1}^{N_r}\tau_{rik}$ and $D^{-1}_k := \diag\big( \frac{\tauc_{1k}}{\sigma^2_1}I_{d_1}, \frac{\tauc_{2k}}{\sigma^2_2}I_{d_2}\big)$, we have
\begin{align*}
	\sum_{r,i}\tau_{rik} \frac{\tr\big((\Sigt_k)_{rr}\big)}{\sigma^2_r} = 
	\sum_{r}\tauc_{rk} \frac{\tr\big((\Sigt_k)_{rr}\big)}{\sigma^2_r} =
	 \tr \Big[	\sum_{r}\frac{\tauc_{rk}}{\sigma^2_r} (\Sigt_k)_{rr} \Big] = \tr(D^{-1}_k \Sigt_k).
\end{align*}
Hence, we obtain $J(\Sigt) \doteq -\frac12 \sum_k \tr[(\Sigma^{-1} + D_k^{-1}) \Sigt_k] - \log|\Sigt_k| $ which is the desired result.

\medskip Similarly, by substituting $\beta_{rik}(\Gamt, \sigma^2)$ and $\Psib(\Gamt,\mu)$ and looking at the result as a function only of $\mut$, we obtain
\begin{align*}
J(\tilde{\mu}) = -\frac12\sum_{k} \Big[\sum_{r,i} \tau_{rik} 
\frac{\|x_{ri} - \mut_{rk}\|^2} {\sigma^2_r} + 
 (\mut_k-\mu)^T \Sigma^{-1} (\mut_k-\mu)\Big]  
\end{align*}
Let us simplify the sum over $r$ and $i$. Up to constants as function of $\mut$, we have
\begin{align*}
\sum_{i} \tau_{rik} 
\|x_{ri} - \mut_{rk}\|^2 &\doteq \sum_{i} \tau_{rik} \big(
\|\mut_{rk}\|^2 - 2\ip{x_{ri}, \mut_{rk}} \big) \\&= \tauc_{rk} \|\mut_{rk}\|^2 - 2\ip{\xc_{rk}, \mut_{rk}} \\
&= \tauc_{rk} \big( \|\mut_{rk}\|^2 - 2\ip{\muc_{rk}, \mut_{rk}} \big) \\
&\doteq \tauc_{rk} \|\mut_{rk} - \muc_{rk}\|^2 
\end{align*}
where the second to last equality is by definition of $\muc_{rk} := \xc_{rk}/\tauc_{rk}$.
Recalling the definitions of $\tauc_{rk}$ and $\xc_{rk} := \sum_{i=1}^{N_r} \tau_{rik} x_{ri}$. Hence,
\begin{align*}
	\sum_{r,i} \tau_{rik} 
	\frac{\|x_{ri} - \mut_{rk}\|^2} {\sigma^2_r} \doteq 
	\sum_r  \frac{\tauc_{rk}}{\sigma_r^2}\|\mut_{rk} - \muc_{rk}\|^2 
	= (\mut_k - \muc_k)^T D_k^{-1} (\mut_k - \muc_k)
\end{align*}
Thus, we obtain
\begin{align*}
J(\mut) \doteq -\frac12 \sum_k \big[ (\mut_{k}-\muc_{k})^T D^{-1}_k (\mut_{k}-\muc_{k}) + (\mut_k-\mu)^T \Sigma^{-1} (\mut_k-\mu) \big].
\end{align*}
Desired expression~\eqref{eq:J:mut:expr} follows by applying the following lemma.
\begin{lem}[Sum of quadratic forms]\label{lem:ex:quad:sumofquadform}
For symmetric matrices $Q_1,Q_2, \dots$,
\begin{align*}
\sum_{r} (x-m_r)^TQ_r^{-1}(x-m_r) 
=(x - m)^TQ^{-1}(x - m) + \text{const.}, \quad \forall x
\end{align*}
where $Q = (\sum_r Q^{-1}_r)^{-1}$ and $m =  \sum_r  Q Q_r^{-1} m_r$.
\begin{proof}
	Since the two sides are quadratic functions, they are equal up to constants if their derivatives up to second-order match. Equating the Hessians gives $\sum_r Q_r^{-1} = Q^{-1}$. Then, equating the gradients gives $\sum_r Q_r^{-1}(x-m_r) = Q^{-1}(x-m)$, which simplifies to $\sum_r Q_r^{-1} m_r = Q^{-1} m$ in light of the Hessian equality.
\end{proof}

%\begin{align*}
%\Sigma_c &= (\Sigma^{-1}_1 + \Sigma^{-1}_2)^{-1}\\
%m_c &= \Sigma_c (\Sigma^{-1}_1 m_1 + \Sigma^{-1}_2 m_2)\\
%%C&= \frac12(m_1^T\Sigma^{-1}_1 + m_2^T\Sigma^{-1}_2) (\Sigma^{-1}_1 + \Sigma^{-1}_2)^{-1}(\Sigma^{-1}_1 m_1 + \Sigma^{-1}_2 m_2) -\frac12(m_1^T\Sigma^{-1}_1 m_1 + m_2^T
%%\Sigma^{-1}_2 m_2)
%\end{align*}
\end{lem}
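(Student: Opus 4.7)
My plan is to observe that both sides of the claimed identity are quadratic polynomials in $x$ and then match their coefficients up to the constant order. Concretely, two quadratic functions $x \mapsto x^T P x - 2 x^T b + c$ and $x \mapsto x^T P' x - 2 x^T b' + c'$ agree for all $x$ up to an additive constant if and only if $P = P'$ and $b = b'$ (the scalar shift absorbs the gap between $c$ and $c'$). So it suffices to expand both sides and read off the quadratic and linear coefficients.

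For the left-hand side, I would expand term by term:
\begin{align*}
\sum_r (x - m_r)^T Q_r^{-1} (x - m_r)
= x^T \Bigl(\sum_r Q_r^{-1}\Bigr) x - 2 x^T \Bigl(\sum_r Q_r^{-1} m_r\Bigr) + \sum_r m_r^T Q_r^{-1} m_r,
\end{align*}
and for the right-hand side,
\begin{align*}
(x - m)^T Q^{-1} (x - m) = x^T Q^{-1} x - 2 x^T Q^{-1} m + m^T Q^{-1} m.
\end{align*}
Matching the quadratic coefficient forces $Q^{-1} = \sum_r Q_r^{-1}$, which is exactly the defining relation $Q = (\sum_r Q_r^{-1})^{-1}$. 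Matching the linear coefficient then requires $Q^{-1} m = \sum_r Q_r^{-1} m_r$, and multiplying both sides by $Q$ yields $m = \sum_r Q Q_r^{-1} m_r$, again matching the stated definition. The remaining discrepancy between the two constant terms is absorbed into the ``const.'' on the right-hand side.

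I anticipate essentially no obstacles: the only subtle point is the implicit assumption that $\sum_r Q_r^{-1}$ is invertible, which is needed for the statement to make sense and is guaranteed in the application because each $Q_r^{-1}$ appearing in Section~\ref{sec:model:fitting} (namely $D_k^{-1}$ and $\Sigma^{-1}$) is positive definite, so their sum is positive definite and hence invertible. An equivalent alternative would be a direct ``completing the square'' argument, but the coefficient-matching route above is the cleanest and produces the formulas for $Q$ and $m$ essentially for free.
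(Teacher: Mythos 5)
Your proposal is correct and is essentially the same argument as the paper's: matching the quadratic and linear coefficients of the two polynomials is exactly the paper's step of equating Hessians and gradients, yielding the same formulas for $Q$ and $m$. The added remark about invertibility of $\sum_r Q_r^{-1}$ is a sensible observation but does not change the substance.
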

%By Lemma~\ref{lem:ex:quad:sumofquadform}, $J(\mut) \doteq -\frac12\sum_k (\mut_k - m_k)^T (D_k^{-1}+\Sigma^{-1}) (\mut_k - m_k)$
%
%  we can write $J(\mut)$ as follows:\\
%\begin{align*}
%J(\mut)&= -\frac12\sum_k (\mut_k - m_k)^T (D_k^{-1}+\Sigma^{-1}) (\mut_k - m_k) +C\\
%\text{where,}\; m_k &= (D_k^{-1}+\Sigma^{-1})^{-1} (D_k^{-1}\muc_k +\Sigma^{-1}\mu)
%  \end{align*}
%So the optimal value of $\mut_k = m_k = (D_k^{-1}+\Sigma^{-1})^{-1} (D_k^{-1}\muc_k +\Sigma^{-1}\mu)$. \\
%Note that based on our derivation in the last section, the optimal value of $\Sigt_k$ was found to be $(D_k^{-1}+\Sigma^{-1})^{-1}$. 

\subsection{Updates of $\sigma^2$, $\pi$, $p$ and $q$}\label{sec:sig2:etc}
The relevant portion of $J$ as a function $(\sigma_r^2)$ is
\begin{align}
J((\sigma_r^2)) &=  -\frac12\sum_r \Big[ \frac1{\sigma^2_r} \sum_{i,k} \tau_{rik} \big[ \tr\big((\Sigt_k)_{rr}\big) + \|x_{ri} - \mut_{rk}\|^2 \big]
+d_r N_r \log \sigma_r^2 \Big].
\end{align}
The maximizer of the function $x \mapsto Ax^{-1} + B \log x$ is $A/B$ (assuming $A,B > 0$), from which~\eqref{eq:sig2:update} follows.

%We first update $\sigma^2_1$. 
%\begin{align*}
% J(\sigma_1) & = - \frac{N_1 d_1}{2}\log\sigma_1^2 - \frac{\sum_k \tauc_{1k}\tr\big((\Sigt_k)_{11}\big)+\sum_{i,k} \tau_{1ik}\|x_{1i} - \mut_{1k}\|^2}{2\sigma^2_1}  \\
%  \frac{\partial J(\sigma^2)}{\partial \sigma_1^2}&= -\frac{N_1 d_1}{2} \frac{1}{\sigma^2_1}  
%  + \frac{\sum_k \tauc_{1k}\tr\big((\Sigt_k)_{11}\big)+\sum_{i,k} \tau_{1ik}\|x_{1i} - \mut_{1k}\|^2}{2\sigma^4_1}=0\\
%  \sigma^2_1 &= \frac{\sum_k \tauc_{1k}\tr\big((\Sigt_k)_{11}\big)+\sum_{i,k} \tau_{1ik}\|x_{1i} - \mut_{1k}\|^2}{N_1 d_1}
%\end{align*}
%Similarly, we can update $\sigma^2_2$
% 
%\subsection{Updates of $\pi$, $p$ and $q$}
As a function $\pi$, $J$ has the form $J(\pi) \doteq \sum_{r,i,k} \tau_{rik} \log \pi_{rk}
 = \sum_{r,k} \tauc_{rk} \log\pi_{rk}
$ using the definition of $\tauc_{rk}$.  The following lemma is standard. (Recall that $\Pc_K$ is the set of probability $K$-vectors.)
\begin{lem}
For any nonnegative vector $(a_1,\dots,a_K)$, 
  \begin{align*}
    \argmax_{p\; \in \; \Pc_K} \; \sum_{k} a_k \log p_k = \frac{1}{\sum_k {a_k}} \big(a_1,\ldots, a_K).
  \end{align*}
\end{lem}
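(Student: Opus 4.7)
The plan is to reduce the claim to the nonnegativity of KL divergence (Gibbs' inequality), which is both cleaner and avoids having to separately verify that the Lagrange stationarity conditions identify a global maximizer of a function that blows up on the boundary.

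First I would dispose of the degenerate case $\sum_k a_k = 0$: then every $a_k$ is zero and the objective is identically zero (with the standard convention $0\log 0 = 0$), so the claimed point $a/\sum_k a_k$ should be interpreted by the convention $0/0 = $ arbitrary probability vector, and the statement holds trivially. Assume henceforth $S := \sum_k a_k > 0$, and set $q_k := a_k / S$, so that $q \in \Pc_K$.

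Next I would perform the key algebraic rewriting. For any $p \in \Pc_K$,
\begin{align*}
\sum_k a_k \log p_k \;=\; S \sum_k q_k \log p_k \;=\; -S \sum_k q_k \log\frac{q_k}{p_k} \;+\; S \sum_k q_k \log q_k.
\end{align*}
The second term on the right does not depend on $p$, and the first term equals $-S\cdot D(q \,\|\, p)$ where $D(\cdot\,\|\,\cdot)$ is the Kullback--Leibler divergence. By Gibbs' inequality, $D(q\,\|\,p) \ge 0$ with equality iff $p_k = q_k$ for all $k$ in the support of $q$. Since $S > 0$, maximizing the left-hand side over $p \in \Pc_K$ is therefore equivalent to minimizing $D(q\,\|\,p)$, whose minimum value is $0$ and is attained at $p = q$. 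Substituting back $q_k = a_k/S = a_k / \sum_\ell a_\ell$ yields the claimed maximizer.

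The only step that requires a bit of care is the handling of indices with $a_k = 0$ (equivalently $q_k = 0$), where terms of the form $0 \cdot \log p_k$ and $0 \cdot \log(0/p_k)$ must be interpreted as $0$ by the usual convention. With this convention the maximizer may fail to be unique (any $p_k \ge 0$ will do whenever $a_k = 0$, subject to the simplex constraint), but the formula $a/\sum_\ell a_\ell$ always furnishes one maximizer, which is what the lemma asserts. I do not expect any serious obstacle; the proof is essentially a two-line application of Gibbs' inequality once the rewriting above is in place.
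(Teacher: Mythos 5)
Your proof is correct. Note that the paper itself offers no proof of this lemma---it is simply declared ``standard''---so there is nothing to compare against directly; your argument fills that gap cleanly. It is also worth observing that your reduction is the mirror image of the technique the paper \emph{does} use for Lemma~\ref{lem:tau:update}: there the objective $\sum_k p_k(a_k - \log p_k)$ is rewritten as $-D(p\,\|\,q)$ plus a constant with $q_k \propto e^{a_k}$, whereas here the objective $\sum_k a_k \log p_k$ becomes $-S\,D(q\,\|\,p)$ plus a constant with $q = a/S$, i.e.\ the KL divergence appears in the opposite direction. Both are one-line applications of Gibbs' inequality, and your handling of the boundary cases ($a_k = 0$, with the convention $0\log 0 = 0$, and the degenerate all-zero vector) is careful and correct; the only caveat---non-uniqueness of the maximizer off the support of $a$---is one you already flag, and it does not affect the claim that $a/\sum_\ell a_\ell$ is a maximizer.
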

%\begin{proof}
%Assume $a_i > 0$ for all $i \in [K]$.
%Introducing Lagrange multiplier $\lambda_k^*$ for the inequality constraints $p_k\geq 0 ;\ \forall k$, and a multiplier $\nu^*$ for the equaity constraint $\sum_k p_k = 1$, we obtain {\it{KKT}} conditions:
%\begin{align}
% \frac{a_k}{p_k} + \lambda_k^* +\nu^* &=0 \quad \forall k\in[K]\label{eq:pi:KKT:1}\\
%\lambda_k^* p_k^* &= 0 \quad \forall k\in[K]\label{eq:pi:KKT:5}
%\end{align}
%together with $\sum_k p_k^* = 1$, $p_k^* \geq 0, \;\forall k\in[K]$, and $\lambda^* \ge 0$. 
%%
%From~\eqref{eq:pi:KKT:1}, we infer that $p_k^* = -\frac{a_k}{\lambda^*_k +\nu^*}$
%From~\eqref{eq:pi:KKT:5}  we see that $\lambda_k^* = 0$ since $p_k^* \neq 0$. Hence, $p^*_k \propto a_k$ and $\nu^*$ is determined by the normalization constraint.
%\end{proof}

Based on the lemma, $\pi_1$-update is $\pi_1= (\tauc_{11} ,\ldots, \tauc_{1K})/(\sum_k \tauc_{1k})$.
The update for $\pi_{2}$ is similar. 
%
%\subsection{Q = (p,q)}
To update $p$ we note that $J(p) \doteq \sum_{i,j} \gamma_{ij}(\tau) (A_{ij}\log p + (1-A_{ij}) \log(1-p))$. The update is obtained by setting the derivative to zero. The $q$-update is similar.
% We first find the optimal value for $p$.
% \begin{align*}
% J(p) &= \sum_{i,j} \gamma_{ij}(\tau) (A_{ij}\log p + (1-A_{ij}) \log(1-p))\\
% \frac{\partial J(p)}{\partial p} &= \sum_{i,j} \gamma_{ij}(\tau) \big(\frac{A_{ij}}{p} - \frac{1-A_{ij}}{1-p}\big) = 0\\
% \text{So};\ p &= \frac{\sum_{ij} \gamma_{ij}(\tau) A_{ij}}{\sum_{i,j}\gamma_{ij}(\tau)}
% \end{align*}

% Similarly, the optimal value for $q$ is found to be\\
% \begin{align*}
% q&= \frac{\sum_{i,j} \big(1-\gamma_{ij}(\tau)\big) A_{ij}}{\sum_{i,j} \big(1-\gamma_{ij}(\tau)\big)}
% \end{align*}

\section{Details for degree-corrected algorithm}
\subsection{$\tau$-update with degree restriction}\label{sec:dc:tau:details}
In this section, we derive a dual ascent algorithm for the optimization problem~\eqref{eq:dc:tau:update:optim} which has to be solved for updating $\tau$ under the degree corrected model. Letting $a_{ik} := \phi_1 [A \tau_2]_{ik} + \phi_0\tauc_{2k}  + \xi_{1ik}$, and with some notational simplifications,  problem~\eqref{eq:dc:tau:update:optim} can be stated as 
\begin{align}\label{eq:constrined:label:optim}
\min_{X = (x_{ik})} -\sum_{ik} x_{ik} (a_{ik} - \log x_{ik}), \quad\text{s.t.}\;\; X \in \Pc_{n, K}, \; \;\sum_i x_{ik}(\theta_i - 1) = 0, \, \forall k
\end{align}
where $\Pc_{n, K} := \{(x_{ik}) \in \reals_+^{n \times K}:\; \sum_{k} x_{ik} = 1,\forall i\}$. Let $f(X)$ be the objective function in~\eqref{eq:constrined:label:optim}, and let us write the constraint in vector form $\sum_i x_{ik}(\theta_i - 1) = X^T (\theta - \onev) = 0$. With the Lagrangian $L(X,\lambda)  = f(X) - \lambda^T [ X^T (\theta - \onev)]$, the dual function is 
\begin{align*}
	\Phi(\lambda) := \min_{X \in \Pc_{n,K}} L(X,\lambda),
\end{align*}
and the dual problem is $\max_{\lambda} \Phi(\lambda)$.
A dual-descent algorithm maximizes $\Phi$ by performing a gradient ascent on $\Phi$: $\lambda^{+} = \lambda + \mu \nabla \Phi(\lambda)$. We know that the gradient of $\Phi$ is given by $\partial_\lambda L(X,\lambda)$ evaluated at $X^*(\lambda)$, the optimizer of the Lagrangian. More precisely,
\begin{align*}
\nabla \Phi(\lambda) = [X^*(\lambda)]^T (\theta - \onev), \quad \text{where}\;\; X^*(\lambda) = \argmax_{X \,\in\, \Pc_{n,K}} -L(X,\lambda)
\end{align*}
Solving for $X^*(\lambda)$ is an instance of the problem in Lemma~\ref{lem:tau:update}. The problem is separable over $i$ and for fixed $i$, we are maximizing $\sum_k  x_{ik}(a_{ik}-  \log x_{ik}) + \sum_k \lambda_k x_{ik}(\theta_i-1) = \sum_k x_{ik} \{[a_{ik} +\lambda_k (\theta_i-1)]-  \log x_{ik}\}$ over $x_{i*} \in \Pc_{1,K}$, the solution of which is given by the softmax operation
\begin{align}\label{eq:xs:dual:descent}
x_{ik}^*(\lambda) \propto_k \exp(a_{ik} +\lambda_k (\theta_i-1)).
\end{align}
Thus, the update for the dual descent can be written
\begin{align}
\lambda^{+}_k = \lambda_k - \mu \sum_{i} x^*_{ik}(\lambda) (\theta_i-1), \quad \forall k.
\end{align}
%where $x^*_{ik}(\lambda)$ is computed from~\eqref{eq:xs:dual:descent}.

\newcommand{\prox}{\text{prox}}
\subsection{$\theta$-update}\label{sec:dc:theta:details}
Simplyfying the notation, let $h(\theta) = - \sum_{i}  d_i \log \theta_{i}$ and $V := \{\theta:\; \sum_{i} \tau_{ik}(\theta_{i}-1) = 0\}$. The problem is equivalent to minimizing $h(\theta) + \delta_V(\theta)$ over $\theta$ where $\delta_V$ is the indicator of $V$ in the sense of convex analysis. Douglas-Rachford algorithm, also known as  Spingarn's method of partial inverses in this special case, is given by
\begin{align}
	\theta^+ &= \prox_{th}( \xi) \notag \\
	\xi^+ &= \xi + P_V(2 \theta^+ - \xi) - \theta^+\label{eq:temp:5676}
\end{align}
where $\prox_{th}$ is the proximal operator of $t h(\cdot)$ and $P_V$ is the projection onto $V$. Due to separability, it is not hard to see that $[\prox_{th}(\theta)]_i = \prox_{t d_i \log(\cdot)}(\theta_i)$. This  univariate proximal operator can be easily shown to coincide with $[f_t(\theta,d)]_i$ as given in~\eqref{eq:prox:of:log}.

As for the projection, in general with $C = \{x:\; Ax = b\}$, we have $P_C(x) = x + A^T (AA^T)^{-1}(b-Ax)$. Note that $V = \{\theta :\; \tau^T(\theta - \onev) = 0\}$. Applying the general result with $A = \tau^T$ and $b = \tau^T \onev$, we get $P_V(\theta) = \theta + \tau (\tau^T \tau)^{-1} \tau^T(\onev - \theta) = \theta - H( \theta- \onev)$, with the obvious choice for $H$. Thus~\eqref{eq:temp:5676} simplifies to
\begin{align*}
		\xi^+ = \xi + (2 \theta^+ - \xi) -H(2 \theta^+ - \xi - \onev) - \theta^+
\end{align*}
which gives the claimed update.

\section{Expected average degree}
\label{sec:expec:avg:degree}
Let $\hat\lambda_i = \sum_j A_{ij}$ and $\hat\lambda_j = \sum_i A_{ij}$ be the degree of node $i$ from group 1, and node $j$ from group 2, respectively. Then, the average degree of the network is 
\begin{align*}
\hat{\lambda} &= \frac{\sum_i \hat\lambda_i + \sum_j \hat\lambda_j}{N_1 + N_2}
= \frac{2 \sum_{i,j} A_{ij}}{N_1 + N_2}.
\end{align*}
Recall the definition of clusters $C_{rk}$ from~\eqref{eq:C:match}. The expected average degree can be derived as follows: Assume that $i \in C_{1k}$, then $\ex[A_{ij}] =  \theta_{1i} \theta_{2j} (p 1\{j \in C_{2k}\} + q 1\{j \notin C_{2k}\})$. Then
\begin{align*}
	\E(\hat\lambda_i) = \theta_{1i} \Big( p\sum_{j \in C_{2k}} \theta_{2j} + q \sum_{j \notin C_{2k}} \theta_{2j}\Big)
	= \theta_i r_k, \quad\text{where}\quad r_k := |C_{2k}|p + (N_2 - |C_{2k}|)q
\end{align*}
 Here, we have used the normalization~\eqref{eq:theta:normalization}: $\sum_{j \in C_{2\ell}} \theta_{2j} = |C_{2\ell}|$ for all $\ell \in [K]$. Now,
 \begin{align*}
 	\sum_{i=1}^{N_1} \E(\hat\lambda_i) = \sum_{i=1}^{N_1} \sum_{k} 1\{i \in C_{1k}\}\E(\hat\lambda_i) 
 	= \sum_k r_k \sum_{i=1}^{N_1} \theta_{1i} 1\{i \in C_{1k}\} = \sum_k r_k |C_{1k}|
 \end{align*}
using the normalization of $\theta_{1i}$. Dividing by $N_1 N_2$ and using $|C_{rk}|/N_r = \pi_{rk}$ for $r=1,2$, 
 \begin{align*}
 	\sum_{i=1}^{N_1} \E(\hat\lambda_i) = N_1 N_2 \sum_k   \big[\pi_{1k} q + \pi_{1k}\pi_{2k} (p-q)\big] =  N_1 N_2( q + \Pi (p-q)),
 \end{align*}
% 
%\begin{align*}
%  \sum_{i=1}^{N_1} \E(\hat\lambda_i) = \sum_{i=1}^{N_1} \sum_{k} 1\{i \in C_{1k}\}\E(\hat\lambda_i) 
%   &=   \sum_{k} \Big[ |C_{2k}|p + (N_2 - |C_{2k}|)q \Big] \sum_{i=1}^{N_1} 1\{i \in C_{1k}\}\\
%   &=  \sum_{k} \Big[ |C_{2k}|p + (N_2 - |C_{2k}|)q \Big] | C_{1k}| \\
%   &= N_1 N_2 \sum_k \Big[\frac{|C_{2k}|}{N_2} p + \big(1- \frac{|C_{2k}|}{N_2}\big) q \Big] \frac{|C_{1k}|}{N_1} \\
%  &= N_1 N_2 \sum_k \Big[\pi_{2k} p + \big(1- \pi_{2k}\big) q \Big] \pi_{1k}\\
%   & = N_1 N_2 \sum_k \big[\pi_{1k} q + \pi_{1k}\pi_{2k} (p-q)\big] =  N_1 N_2( q + \Pi (p-q)),
%\end{align*}
where $\Pi := \sum_k \pi_{1k} \pi_{2k}$. By symmetry, $\sum_{j=1}^{N_2} \E(\hat\lambda_j) =  N_1 N_2( q + \Pi (p-q))$. Hence,
% Similarly
% \begin{align*}
%   \sum_{j=1}^{N_2} \E(\hat\lambda_j) = \sum_k \Big[|C_{1k}| p + (N_1 - |C_{1k}|)q \Big] |C_{2k}|
%  &=N_1 N_2 \sum_k \big[\pi_{2k} q + \pi_{2k}\pi_{1k} (p-q)\big]  \\
%  &=  N_1 N_2( q + \Pi (p-q))
% \end{align*}
% \begin{align*}
% %\E(k_i)&= \sum_j \E(A_{ij})\\
% \sum_i \E(\hat\lambda_i)& = \frac{N_1\sum_k \Big[\big(N_2 - |C_{2k}|\big) q + |C_{2k}| p\Big] |C_{1k}|}{N_1} \\
% & = N_1 N_2 \sum_k \Big[\big(1- \frac{|C_{2k}|}{N_2}\big) q + \frac{|C_{2k}|}{N_2} p\Big] \frac{|C_{1k}|}{N_1}\\
% & = N_1 N_2 \sum_k \Big[\big(1- \pi_{2k}\big) q + \pi_{2k} p\Big] \pi_{1k}\\
% & = N_1 N_2 \sum_k \big[\pi_{1k} q + \pi_{1k}\pi_{2k} (p-q)\big]\\
% \text{Similarly,} \quad 
% \end{align*}
% So 
\begin{align*}
\lambda & = \E[\hat{\lambda}] = \frac{2 N_1 N_2}{N_1+ N_2}\big(q + \Pi(p-q)\big), \quad \text{where}\; \Pi := \sum_k \pi_{1k} \pi_{2k}.
\end{align*}
Note that $\frac{2 N_1 N_2}{N_1+ N_2} = 2/(N_1^{-1} + N_2^{-1})$ is the harmonic mean of $N_1$ and $N_2$.

\section{More simulations}\label{sec:extra:sim}
Figure~\ref{fig:nmidegcovaradim} shows the effect of varying the dimension of the covariates $d = (d_1,d_2)$ and the scale of the covariance matrix $\nu$. The setup is as in~Section~\ref{sec:simulation}, and in particular $\Sigma = \nu I$ controls how far apart the centers of the covariate clusters $v_{*k} \in \reals^{d_1+d_2}$ are. Only the \mbisbm algorithm is shown (with no degree correction and) initialized with Dirichlet-perturbed truth (\simrnd). As one expects, increasing the dimensions of the covariates increases the performance (seemingly without bound). Increasing $\nu$ improves the performance up to a point, but there is a saturation effect beyond that point, where the performance remains more or less the same.

\begin{figure}[H]
	\centering
	\includegraphics[width=0.43\linewidth]{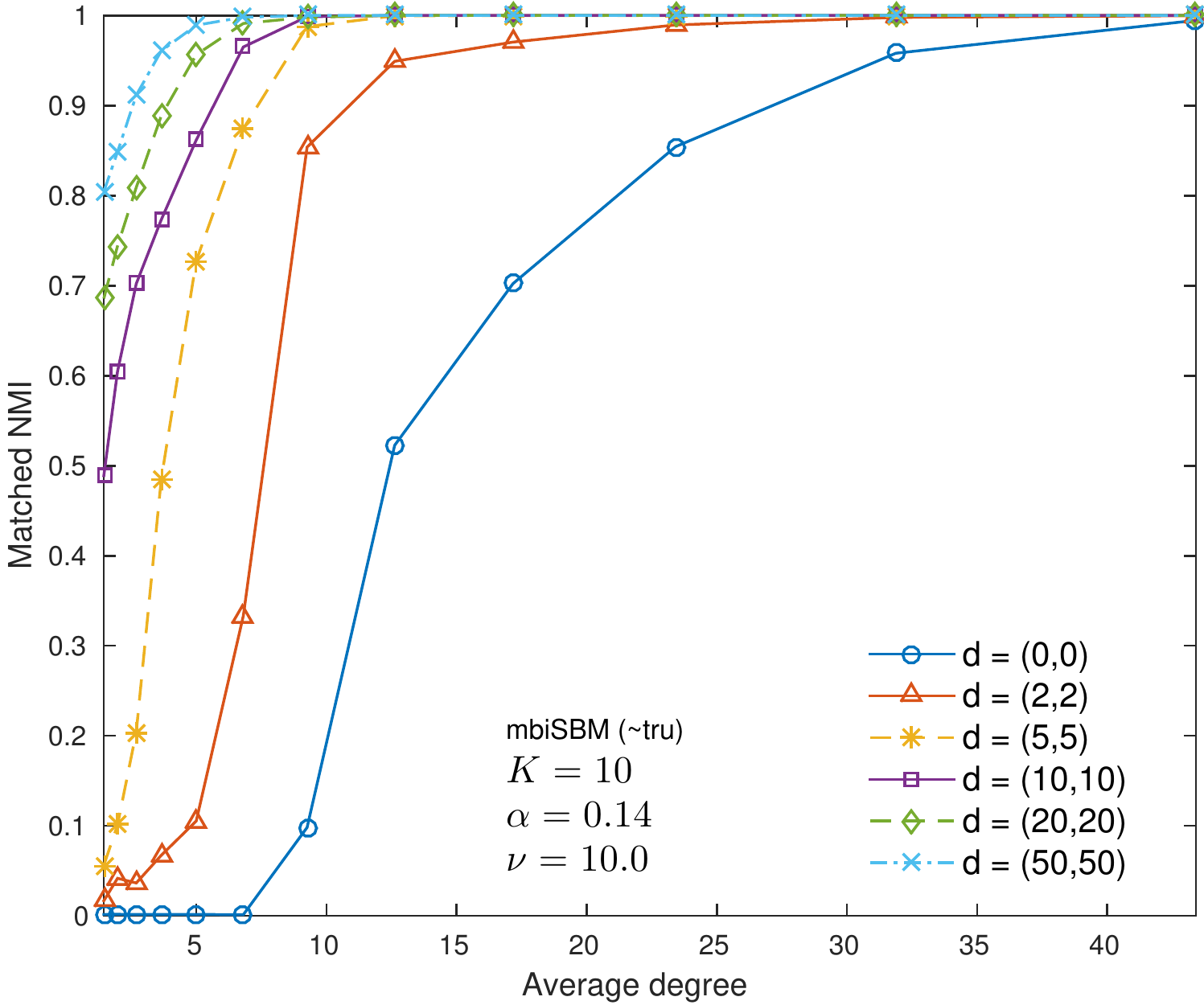}
	\includegraphics[width=0.43\linewidth]{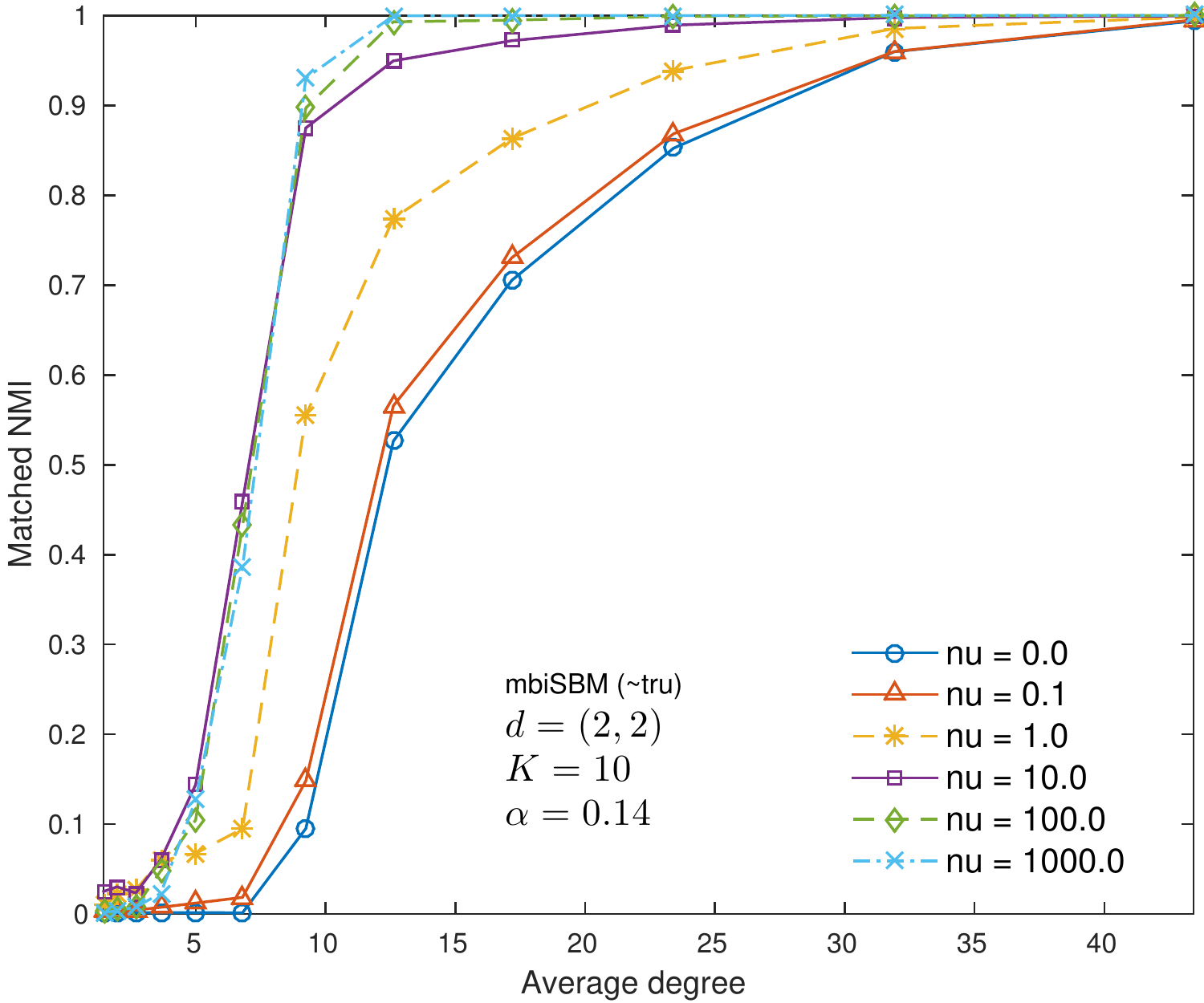}
	\caption{(Left) Effect of changing the dimension of covariates $d = (d_1,d_2)$ and (right) the parameter $\nu$ where $\Sigma = \nu I_{d_1 + d_2}$.}
	\label{fig:nmidegcovaradim}
\end{figure}

%\begin{figure}[ht]
%\centering
%%\subfigure{%
%\includegraphics[scale=.32]{Figs/NMI_odds_25_2}
%%\label{fig:subfigure21}}
%%\subfigure{%
%\includegraphics[scale=.32]{Figs/NMI_odds_27_2}
%%\label{fig:subfigure22}}
%% \subfigure{%3
% \includegraphics[scale=.32]{Figs/NMI_odds_33_2}
%% \label{fig:subfigure23}}
%% \subfigure{%
% \includegraphics[scale=.32]{Figs/NMI_odds_31_2}
%% \label{fig:subfigure24}}\textbf{}
% \caption{NMI versus $\lambda$ for various value of $N_1 , N_2, K , \alpha , \omega , \nu$}
% \label{fig:meandeg}
%  \end{figure}
%
%\begin{figure}[ht]
%\centering
%%\subfigure{%
%\includegraphics[scale=.32]{Figs/NMI_odds_21_2.eps}
%%\label{fig:subfigure21}}
%%\subfigure{%
%\includegraphics[scale=.32]{Figs/NMI_odds_22_2.eps}
%%\label{fig:subfigure22}}
%% \subfigure{%3
% \includegraphics[scale=.32]{Figs/NMI_odds_23_2.eps}
%% \label{fig:subfigure23}}
%% \subfigure{%
% \includegraphics[scale=.32]{Figs/NMI_odds_24_2.eps}
%% \label{fig:subfigure24}}
% \caption{NMI versus $\alpha = q/p$ for various value of $N_1 , N_2, K , \lambda , \omega , \nu$}
% \label{fig:ratio}
%  \end{figure}

\end{document}